\newcommand{\CF}[1]{\ensuremath{\mathsf{CF}(#1)}}
\newcommand{\trans}[1]{\ensuremath{\,[\/{#1}\/\rangle}\,}
\newcommand{\pre}[1]{\ensuremath{\!~^{\bullet}{#1}}}
\newcommand{\post}[1]{\ensuremath{{#1} {^{\bullet}}}}
\newcommand{\hist}[1]{\ensuremath{\lfloor #1 \rfloor}}
\newcommand{\future}[1]{\ensuremath{\lceil #1 \rceil}}
\newcommand{\nat}{\ensuremath{\mathbb{N}}}
\newcommand{\flt}[1]{\ensuremath{[\![{#1}]\!]}}
\newcommand{\pes}{\textsc{pes}}
\newcommand{\rcn}{\textsc{rcn}}
\newcommand{\cn}{\textsc{on}}
\newcommand{\setenum}[1]{\{#1\}}
\newcommand{\setcomp}[2]{\{{#1} \mid {#2}\}}
\newcommand{\reachMark}[1]{\ensuremath{\mathcal{M}_{#1}}}
\newcommand{\firseq}[2]{\ensuremath{\mathcal{R}^{#1}_{#2}}}
\newcommand{\states}[1]{\ensuremath{\mathsf{St}(#1)}}
\newcommand{\lead}[1]{\ensuremath{\mathit{lead}(#1)}}
\newcommand{\start}[1]{\ensuremath{\mathit{start}(#1)}}
\newcommand{\fs}{\textsf{fs}}
\newcommand{\remains}[1]{\ensuremath{\mathit{tail}(#1)}}
\newcommand{\MC}[1]{\ensuremath{\sim}}
\newcommand{\Conf}[2]{\ensuremath{\mathsf{Conf}_{#2}(#1)}}
\newcommand{\Comment}[1]{}
\newcommand{\un}[1]{\underline{#1}}
\newcommand{\pr}{\mathrel{\triangleright}}
\newcommand{\changed}[2]{#2}
\newcommand{\fe}{\mathtt{f}}
\newcommand{\re}{\mathtt{r}}
\newcommand{\anR}{U}
\newcommand{\anr}{u}
\newcommand{\Er}{U}
\newcommand{\er}{u}
\begin{document}

\title{Reversible Causal Nets and Reversible Event Structures}
\titlerunning{Reversible Causal Nets and Reversible Event Structures}

\author{Hern\'an Melgratti\inst{1} \and
Claudio Antares Mezzina\inst{2} \and
Iain Phillips\inst{3} \and
G. Michele Pinna\inst{4} \and
Irek Ulidowski\inst{5} 
}
\authorrunning{H. Melgratti, C. A. Mezzina, I. Phillips, G. M. Pinna, I. Ulidowski}%mandatory. First: Use abbreviated first/middle names. Second (only in severe cases): Use first author plus 'et al.'

\institute{ICC - Universidad de Buenos Aires - Conicet, Argentina \and
Dipartimento di Scienze Pure e Applicate, Universit\`a di Urbino, Italy \and
 Imperial College London, England, UK \and
 Universit\`a di Cagliari, Italy \and 
 University of Leicester, England, UK}
\maketitle

\begin{abstract}
 %One of the well known results in concurrency theory concerns the relationship between 
%\emph{event structures} and \emph{causal nets}, a suitable class of nets.
%To a causal net a \emph{prime event structure} is naturally associated and the 
%\emph{vice versa} holds as well. Indeed the relationships between various kind of
%event structures defined in literature and suitable kind of nets have been often 
%established, \emph{e.g.}
%inhibitor event structures and inhibitor causal nets or 
%asymmetric event structures and asymmetric causal nets.
%Recently a notion of event structure suited to model reversible computation has been introduced
%and studied, and a notion of causal net with reversing events (transitions) has been
%developed as the unfolding of a reversible safe net. 
%In this paper we bridge reversible event structures and nets. 
%We introduce the notion
%of reversible causal net, which generalize the previous notion of net with
%reversing transition, and then we show that this notion corresponds to a
%subclass of reversible prime event structure. 
%We also show that, under certain conditions, the vice versa can be established.

One of the well-known results in concurrency theory concerns the relationship between 
event structures and occurrence nets: 
%given a causal net we can associate with it
an occurrence net can be associated with
a prime 
event structure, and vice versa. More generally, the relationships between various 
forms of event structures and suitable forms of nets have been long established. 
Good examples  are the close relationship between inhibitor event structures and 
inhibitor occurrence nets, or between asymmetric event structures and asymmetric occurrence nets. 
Several forms of event structures suited for the modelling of reversible 
computation have recently been developed; also a method for reversing occurrence nets has 
been proposed. This paper bridges the gap between reversible event structures 
and reversible nets. We introduce the notion of reversible causal net, which is a 
generalisation of the notion of reversible unfolding. We show that reversible 
causal nets correspond precisely to a subclass of reversible prime event structures, the
causal reversible prime event structures.

 \keywords{Event Structures \and Causality \and Reversibility \and Petri Nets}%mandatory
\end{abstract}

% !TEX root = main.tex

\section{Introduction}
\label{sec:intro}
Event structures and nets are closely related. Since the seminal papers
by Nielsen, Plotkin and Winskel \cite{NPW:PNES} and Winskel \cite{Win:ES}, the relationship among
nets and event structures has been considered as a pivotal characteristic of
concurrent systems, as testified by numerous papers in the literature addressing
 event structures and nets. 
The ingredients of an event structure are, beside a set of events, a number of relations that are
used to express which events can be part of a configuration (the snapshot of a concurrent 
system), modelling a consistency predicate, and how events can be added to reach another 
configuration, modelling the dependencies among the (sets of) events.
On the net side, the ingredients boil down to constraints on how transitions may be executed,
and usually have a structural flavour.

Since the introduction of event structures there has been a flourish of
investigations into the possible relations among events, giving rise to a number
of different definitions of event structures.
%\modif{Since their introduction, a flourish of
%investigations} into the possible relations among events, giving rise to a number
%of different definitions of event structures.
%, and each of them reflects a peculiar aspect of a concurrent system. 
%
We recall some of them, without the claim of completeness. 
First to mention are the classical \emph{prime} event structures 
\cite{Win:ES} where
the dependency between events, called \emph{causality}, is given by a partial order and the consistency
is determined by a \emph{conflict} relation.
\emph{Flow} event structures \cite{Bou:FESFN} drop the requirement that the dependency should
be a partial order, and \emph{bundle} event structures \cite{Langerak:1992:BES} are able to represent 
OR-causality by allowing each event to be caused by a member of a bundle of events.
\emph{Asymmetric} event structures \cite{BCM:CNAED} introduce the notion of weak causality that can 
model asymmetric conflicts. \emph{Inhibitor} event structures \cite{BBCP:rivista} are able
to faithfully capture the dependencies among events which arise in the presence of read and inhibitor arcs. 
%equipped with a relation that is able to represent inhibitions of that may be removed.
%
In \cite{BCP:LenNets} event structures where the causality relation may be circular are
investigated, and in \cite{AKPN:lmcs18} the notion of dynamic causality is considered.
Finally, we mention the quite general approach presented in \cite{GP:CSESPN}, where there is
a unique relation, akin to a \emph{deduction relation}. 
To each of the aforementioned event structures 
a particular class of nets corresponds. To prime event structures we have 
\emph{occurrence nets}, to flow event structures we have flow nets, to bundle event structures
we have \emph{unravel nets} \cite{CaPi:PN14}, to asymmetric and inhibitor event structures we
have \emph{contextual nets} \cite{BCM:CNAED,BBCP:rivista}, to event structures with circular
causality we have \emph{lending nets} \cite{BCP:LenNets}, to those with dynamic causality we have
\emph{inhibitor unravel nets} \cite{CP:soap17} and finally to the ones presented in 
\cite{GP:CSESPN}  \emph{1-occurrence nets} are associated.

Recently a new type of event structure tailored to model \emph{reversible} computation has been
introduced and studied \cite{PU:jlamp15,UPY:RES-NGC18}. In particular, in \cite{PU:jlamp15},
\emph{reversible prime event structures} have been introduced. In this kind of event structure 
two relations are added: 
the \emph{reverse causality} relation and the \emph{prevention} relation. The first 
one is a standard dependency relation: in order to reverse an event some other events must be
present. The second relation, on the contrary, identifies those events whose presence \emph{prevents}
the event being reversed. 
This kind of event structure is able to model different
flavours of reversible computation %(e.g., different ways to undo a computation) 
such as causal-consistent reversibility~\cite{rccs,ccsk,rhotcs} and out-of-causal-order 
reversibility~\cite{PhiUliYuen12,UK16}. 
%The first kind of reversibility is suitable to model reversibility in distributed systems such as in rollback 
%protocols~\cite{VassorS18} and in transactions~\cite{DanosK05,LaneseLMSS13}, while the latter is suitable 
%for biochemical reactions. 
Causally consistent reversibility relates reversibility with causality: an event can be undone provided that all of its effects have been undone. This allows the system to get back to a past state, which was possible to reach by just the normal (forward) computation.
This notion of reversibility is natural in reliable distributed systems since when an error occurs the system tries to go back to a past consistent state. Examples of how causal consistent reversibility is used to model reliable systems are  
transactions~\cite{DanosK05,LaneseLMSS13} and rollback protocols~\cite{VassorS18}.
Also, causally consistent reversibility can be used for program analysis and  debugging \cite{GiachinoLM14,LanesePV19}. 
 On the other hand, out-of-causal-order reversibility  does not preserve causes, and it is suitable to model biochemical reaction where, for example, a bond can be undone leading to a different state which was not present before.

Reversibility in Petri nets has been studied in \cite{PhilippouP18,MMU:coordination19} with two different approaches. 
%On the net side, 
In \cite{PhilippouP18} reversibility for acyclic Petri net is solved by
relying on a new kind of tokens, called bonds, that
keep track of the execution history. Bonds are rich enough for allowing other approaches to
reversibility such as out-of-causal order and causal consistent reversibility. In
 \cite{MMU:coordination19}
a notion of \emph{unfolding} of a P/T (place/transition) net, where all the transitions can be reversed, has been
proposed. In particular,  by resorting to standard notions of the Petri net  theory \cite{MMU:coordination19} provides a causally-consistent reversible semantics 
for P/T nets. This exploits the well-known unfolding of P/T nets into occurrence nets 
\cite{Win:ES}, and is done by adding for each transition its reversible counterpart.

In this paper we start %to study which kind of nets can be
our research quest towards studying what kind of nets can be
associated to reversible prime event structures. To this aim we introduce the notion of
a \emph{reversible causal net} which is an occurrence net enriched with 
transitions which operationally
\emph{reverse} the effects of executing some others.% transitions. 

We associate to each reversing transition (event in the occurrence net dialect) a unique transition, 
which is in charge of producing
 the effects that the reversing transition
has to undo. %In order to be able 
To execute a reversing event in a reversible causal net 
the events caused by the event to be reversed (the one associated to the reversing one)
have to be reversed as well, and if this is not possible then the reversing event cannot be
executed. This corresponds, in the reversible event structure, to the fact that such 
events \emph{prevent} the reversing event from happening. A reversible causal 
net where the reversing events 
have been removed is just an occurrence net.
This discussion suggests the easiest way of relating
reversible causal nets and reversible prime event structure: the causal relation is
the one induced by the occurrence net, and the prevention relation is the one induced
by the events that are in the future of the one to be reversed. The reversible causality relation
is the basic one: in order to reverse an event the event itself must be present. 
What is obtained from a reversible causal net is a \emph{causal} reversible prime event structure,
which is a subclass of reversible prime event structures.  

When we start from a causal reversible prime event structure, it is possible to obtain
a reversible causal net. The ingredients that are used are just the causal relation and
the set of reversible events. 
%The obtained reversible causal net may have fewer configurations
%with respect to the ones of the reversible prime event structure, as the two
%added relations do not play any role in this construction. 
Thus, if we start from a causal
reversible prime event structure, the obtained reversible causal net  
has the same set of configurations.
Hence, the precise correspondence is between causal reversible prime event structures and 
reversible causal nets. 
%This result is the \emph{natural} one when considering the operational
%nature of the notion of reversible causal net proposed.  
%
%
%\paragraph*{Structure of the paper}
%The paper is organised as follows. 
%In the next section we 
This relation is made clear also by turning causal reversible prime event structures and
reversible causal nets into categories. Then the constructions associating reversible causal nets to 
causal reversible prime event structures can be turned into functors and these functors form a coreflection.
This implies that the notion of reversible causal net is the appropriate one when dealing with causal
prime event structures.

\paragraph{Structure of the paper.}
Section \ref{sec:preliminaries} reviews some preliminary notions for nets and
event structures, including reversible prime event structures. Section \ref{sec:cn-and-pes} recalls
the well-known relationship between prime event structures and occurrence nets. The core of the
paper is Section \ref{sec:rcn-and-rpes} where we first introduce reversible causal nets and then
we show how to obtain a reversible causal net from an occurrence net. We then 
show how to associate a causal reversible prime event structure to a reversible causal net, 
and vice versa. 
We sum up our findings in Section \ref{sec:category} where we introduce a notion of morphism for 
reversible causal nets and show that this gives a category which is related to the subcategory of causal
reversible prime event structures.
%For reviewers' convenience, omitted proofs are gathered in Appendix \ref{sec:proof}.
%We prove that
%the resulting reversible prime event structure is both causal and cause-respecting. 
%We show that the associated reversible event structure has the same
%configurations as those of the original reversible causal net. Furthermore,
%we  show how to associate a reversible causal net to
%a reversible prime event structure, and when the two sets of configurations are the same.
Section \ref{sec:conc} concludes the paper.% with a discussion on related and future work.

% !TEX root = main.tex
\section{Preliminaries}\label{sec:preliminaries}
 We denote with  $\nat$ the set of natural numbers.
 Let $A$ be a set, a {\em multiset\/} of $A$ is a function $m : A
 \rightarrow \nat$.
 The set of multisets of $A$ is denoted by $\mu A$.  
 %
%The usual operations on multisets, like 
%multiset union $+$ or multiset difference $-$, are used.
We assume the usual operations on multisets such as union $+$ and difference $-$.

 We write $m \subseteq m'$ if $m(a) \leq m'(a)$ for all $a \in A$.  
 For $m\in \mu A$, we denote with $\flt{m}$ the multiset defined as $\flt{m}(a)
 = 1$ if $m(a) > 0$ and $\flt{m}(a) = 0$ otherwise. 
 When a multiset $m$ of $A$ is a set, 
 \emph{i.e.} $m = \flt{m}$, we write
 $a\in m$ to denote that $m(a) \neq 0$, and often confuse the
 multiset $m$ with the set $\setcomp{a\in A}{m(a) \neq 0}$. 
 Furthermore we use the standard set operations like $\cap$, $\cup$ or
 $\setminus$.
 
 Given a set $A$ and a relation $<\ \subseteq A\times A$, we say that $<$ is 
 an \emph{irreflexive} partial order whenever
%its reflexive closure is a partial order. 
it is irreflexive and transitive. 
We shall write $\leq$ for the reflexive closure of a  partial order $<$.
 
\subsection{Petri nets} 
We review the notion of Petri net along with some auxiliary notions.
\begin{definition}
   A \emph{Petri net} is a 4-tuple 
   $N = \langle S, T, F, \mathsf{m}\rangle$, where
   $S$ is a set of {\em places} and $T$ is a set of {\em transitions} 
   (with $S \cap T = \emptyset$), 
   %\changed{$F : (S\times T)\cup (T\times S)\to \mathbb{N}$}
   {$F \subseteq (S\times T)\cup (T\times S)$}
    is the 
    \emph{flow} relation, and
   $\mathsf{m}\in \mu S$ is called the {\em initial marking}.
 \end{definition}
 Petri nets are depicted as usual. 
 Given a net $N = \langle S, T, F, \mathsf{m}\rangle$ and  $x\in S\cup T$, we define the following
 multisets:  
 %\changed{$\pre{x} = F(-,x)$}
 {$\pre{x} = \{y\ |\ (y,x)\in F\}$}
  and 
 %\changed{$\post{x} = F(x,-)$}
 {$\post{x} = \{y\ |\ (x,y)\in F\}$}.
 If $x\in S$ then 
 %$\pre{x}$ ($\post{x}$) is in $\mu T$ and 
 %if $x\in T$ then $\pre{x}$ ($\post{x}$) is in $\mu S$.
 $\pre{x} \in \mu T$ and  $\post{x} \in \mu T$;
 analogously, 
 if $x\in T$ then $\pre{x}\in\mu S$ and $\post{x} \in \mu S$.
 A multiset of transitions $A\in \mu T$, called \emph{step}, 
 is enabled at a marking $m\in \mu S$, denoted by $m\trans{A}$,
 whenever $\pre{A} \subseteq m$, where $\pre{A} = \sum_{x\in\flt{A}}\ A(x)\cdot\pre{x}$. 
 A step $A$ enabled at a marking $m$ can \emph{fire} and its firing produces 
 the marking $m' = m - \pre{A} + \post{A}$, where 
 $\post{A} = \sum_{x\in\flt{A}}\ A(x)\cdot\post{x}$.
 The firing of $A$ at a marking $m$ is denoted by $m\trans{A}m'$.
 We assume that each transition $t$ of a net $N$ is such that $\pre{t}\neq\emptyset$,
 meaning that no transition may fire \emph{spontaneously}.
 Given a generic marking $m$ (not necessarily  the initial one), 
 the (step) \emph{firing sequence} ({shortened as} \fs) of  
 $N = \langle S, T, F, \mathsf{m}\rangle$  starting at $m$ is
 defined as: 
   ($i$)  $m$ is a firing sequence (of length 0), and 
  ($ii$) if $m\trans{A_1}m_1$ $\cdots$ $m_{n-1}\trans{A_n}m_n$ is a firing sequence 
        and $m_n\trans{A}m'$,  
        then also $m\trans{A_1}m_1$ $\cdots$ $m_{n-1}\trans{A_n}m_n\trans{A}m'$
        is a firing sequence.
% \begin{itemize}
%  \item $m$ is a firing sequence, and
% %
%  \item if $m\trans{A_1}m_1$ $\cdots$ $m_{n-1}\trans{A_n}m_n$ is a firing sequence 
%        and $m_n\trans{A}m'$,  
%        then also $m\trans{A_1}m_1$ $\cdots$ $m_{n-1}\trans{A_n}m_n\trans{A}m'$
%        is a firing sequence.
%  \end{itemize}
 %
 Let us note that each step $A$ such that $|A| = n$ 
 can be written as $A_1 + \cdots + A_n$ where for each $1 \leq i\leq n$ it holds that 
 $A_i = \flt{A_i}$ and  $|A_i| = 1$, and $m\trans{A}m'$ iff for each decomposition
 of $A$ in $A_1 + \cdots + A_n$, we have that $m\trans{A_1}m_1\dots m_{n-1}\trans{A_n}m_n = m'$.
 When $A$ is a singleton, \emph{i.e.} %$|A| = 1$ and 
 $A = \setenum{t}$, we write 
 $m\trans{t}m'$.   
 The set of firing sequences of a net $N$ 
 starting at a marking $m$ is denoted by $\firseq{N}{m}$ and it is ranged over by $\sigma$.
 Given a \fs\ $\sigma = m\trans{A_1}\sigma'\trans{A_n}m_n$, we denote with
 $\start{\sigma}$  the marking $m$, with $\lead{\sigma}$ the marking $m_n$ 
 and with $\remains{\sigma}$ the 
 \fs\ $\sigma'\trans{A_n}m_n$. 
 Given a net $N = \langle S, T, F, \mathsf{m}\rangle$, a marking $m$ is \emph{reachable} 
 iff there exists a 
 \fs\ $\sigma \in \firseq{N}{\mathsf{m}}$ 
 such that $\lead{\sigma}$ is $m$. 
 The set of reachable markings of $N$ is
 $\reachMark{N} = \bigcup_{\sigma\in\firseq{N}{\mathsf{m}}} \lead{\sigma}$.
 Given a \fs\ $\sigma = m\trans{A_1}m_1\cdots m_{n-1}\trans{A_n}m'$, 
% with 
% $X_{\sigma} = \sum_{i=1}^{n} A_i$
% we denote the multiset of transitions associated to this \fs. 
% 
 we write  
 $X_{\sigma} = \sum_{i=1}^{n} A_i$
 for the multiset of transitions associated to  \fs. 
%
% Given a \fs\ $\sigma$, then $X_{\sigma}$ is a \emph{state} of the net, and 
% the set of states of a Petri net is denoted by 
% \(
%   \states{N} = \setcomp{X_{\sigma}\in \mu T}{\sigma\in\firseq{N}{\mathsf{m}}}
% \). 
% 
We call $X_{\sigma}$ a \emph{state} of the net and write
 \(
   \states{N} = \setcomp{X_{\sigma}\in \mu T}{\sigma\in\firseq{N}{\mathsf{m}}}
 \)
 for the set of states of $N$.

% Given a set of markings $M$, with $\places{M}$ we denote 
% the set of places that are marked at some marking in 
% $M$, namely $\setcomp{s\in S}{\exists m\in M.\ m(s)>0}$, and given a 
% \fs\ $\sigma$,  $\markfs{\sigma}$ are the markings associated to the 
% \fs\ $\sigma$, where $\markfs{\sigma}$ is
% $\markfs{\sigma} = \setenum{m}$ if $\sigma = m$ and 
% $\markfs{\sigma} = \setenum{\start{\sigma}}\cup \markfs{\remains{\sigma}}$ otherwise.

\begin{definition}  
 A net $N = \langle S, T, F, \mathsf{m}\rangle$ is said to be \emph{safe} if
 each marking $m\in \reachMark{N}$ is such that $m = \flt{m}$. 
\end{definition}
%In a safe net, for each $x\in S\cup T$, it holds that $\pre{x}$ and $\post{x}$ are always
%multisets such that $\pre{x} = \flt{\pre{x}}$ and $\post{x} = \flt{\post{x}}$. 

%%The following holds by definition now
%{In a safe net, it holds that  $\pre{x} = \flt{\pre{x}}$ and $\post{x} = \flt{\post{x}}$ for 
%all   $x\in S\cup T$.}
%
%This means that a net is safe if every reachable marking is a set.
%
In this paper we consider safe nets $N =  \langle S, T, F, \mathsf{m}\rangle$ 
where each transition can be fired, \emph{i.e.}
$\forall t\in T\ \exists m\in \reachMark{N}.\ m\trans{t}$, and each place is marked in a 
computation, \emph{i.e.} $\forall s\in S\ \exists m\in \reachMark{N}.\ m(s) = 1$.

\subsection{Prime event structures}

We now recall the notion of prime event structure~\cite{Win:ES}.

\begin{definition}\label{de:pes-winskel}
 A \emph{prime event structure ({\pes})} is a triple $P = (E, <, \#)$, where 
 \begin{itemize}
  \item $E$ is a  countable set of \emph{events},
  \item $<\ \subseteq E\times E$ is
%a relation such that its reflexive and transitive closure 
%        $\leq$ is a 
%        well founded \emph{partial order} called
an irreflexive partial order called the \emph{causality relation}, such that
        $\forall e\in E$. $\setcomp{e'\in E}{e' < e}$ is finite, and 
  \item $\#\ \subseteq E\times E$ is a \emph{conflict relation}, which is irreflexive,
    symmetric and \emph{hereditary} relation with respect to $<$: if  $e\ \#\ e' < e''$, then
    $e\ \#\ e''$ for all $e,e',e'' \in E$.
 \end{itemize}       
\end{definition}

%%%The following is moved after definition of pPES
%Given a set of events $X \subseteq E$, $\hist{X}$  denotes the set $\setcomp{e}{e \in E \land e' \in X \land e\leq e' }$. 
%When $X$ is a singleton, we  write $\hist{e}$ instead of $\hist{\{e\}}$.
%We say $X \subseteq E$  is left-closed if $X = \hist{X}$.
Given an event $e\in E$,  $\hist{e}$  denotes the set $\setcomp{e'\in E}{e'\leq e}$. 
{A subset of events $X \subseteq E$ is left-closed if $\forall e\in X. 
\hist{e}\subseteq X$.
%if $e' < e$, then $e'\in X$
} 
Given a %\modif{non empty}
 subset $X\subseteq E$ of events, we say that $X$ is \emph{conflict free} iff 
for all $e, e'\in X$ it holds that 
$e\neq e'\ \Rightarrow\ \neg(e\ \#\ e')$, and we denote it with $\CF{X}$. 
%FROM Before Def 9
Given $X\subseteq E$ such that $\CF{X}$ and $Y\subseteq X$, then also $\CF{Y}$.

When adding reversibility to \pes es, conflict heredity may not hold.
Therefore, we rely on a weaker form of \pes\ by following  the approach in~\cite{PU:jlamp15}.
\begin{definition}\label{de:pre-pes}
   A \emph{pre-\pes} (p\pes) is a triple $P = (E, <, \#)$, where
   \begin{itemize}
    \item $E$ is a set of \emph{events},
    \item $\#\ \subseteq E\times E$ is an irreflexive and symmetric relation,
    \item $<\ \subseteq E\times E$ is an irreflexive partial order such that
%          \begin{itemize}
%            \item
for every $e\in E$. $\setcomp{e'\in E}{e' < e}$ is finite and
                  conflict free, and
%            \item its reflexive closure is a partial order, and
%          \end{itemize}
    \item $\forall e, e'\in E$. if $e < e'$ then not $e\ \#\ e'$.
   \end{itemize}
  \end{definition}
%\changed{Clearly every \pes\ is also a p\pes; so the notions and results below stated for p\pes es also apply to \pes es.}
{A p\pes\ is  a prime event structure in which conflict heredity does not hold, 
and since every \pes\ is also a p\pes\ the notions and results stated below 
for p\pes es also apply to \pes es.
}

\begin{definition}\label{de:ppes-conf-enab}
   Let $P = (E, <, \#)$ be a p\pes\ and
   %\changed{let $\CF{X}$ for $X\subseteq E$}
   $X\subseteq E$ such that $\CF{X}$. % be a subset of events such that $\CF{X}$. 
   For $A\subseteq E$, we say that $A$ is \emph{enabled}
   at $X$ if
   \begin{itemize}
    \item $A\cap X = \emptyset$ and $\CF{X\cup A}$, and
    \item $\forall e\in A$. if $e' < e$ then $e'\in X$.
   \end{itemize}
   If $A$ is enabled at $X$, then $X \stackrel{A}{\longrightarrow} Y$ where $Y = X\cup A$.
  \end{definition}
\begin{definition}\label{de:rpes-forwconf}
   Let $P = (E, <, \#)$ be a p\pes\ and
   %\changed{let $\CF{X}$ for $X\subseteq E$}
   {$X\subseteq E$ s.t. $\CF{X}$}.
   $X$ is a \emph{forwards reachable configuration} if there exists a sequence
   %\changed{of $A_i\subseteq E$, for all $i=1,\ldots,n$}
   {$A_1,\ldots,A_n$}, such that
%   \changed{
%   \begin{itemize}
%    \item $X_i\stackrel{A_i}{\longrightarrow} X_{i+1}$ for all $i$, and
%       $X_1 = \emptyset$ and $X_{n+1}=X$.
%   \end{itemize}
%   }{
   \[ X_i\stackrel{A_i}{\longrightarrow} X_{i+1} \ \mathit{ for all}\ i, \ \mathit{and}\
       X_1 = \emptyset \ \mathit{and}\ X_{n+1}=X.
       \]
%}
We write $\Conf{P}{p\pes}$ for the set of all  (forwards reachable) configurations of $P$.
  \end{definition}
When a p\pes\ is a \pes\ we shall write $\Conf{P}{\pes}$ instead of $\Conf{P}{p\pes}$, with 
$\Conf{P}{\pes}=\Conf{P}{p\pes}$ holding.
From a p\pes\ a \pes\ can be obtained.

\begin{definition}\label{de:hereditary-closure}
 Let $P = (E, <, \#)$ be a p\pes. Then $\mathsf{hc}(P) = (E, <, \sharp)$ is the 
 \emph{hereditary closure} of $P$, where $\sharp$ is
 derived by using the following rules 
 \[ 
\begin{array}{ccccc}
\infer{e\ \sharp\ e'}{e\ \#\ e'} & \qquad & \infer{e\ \sharp\ e''}{e\ \sharp\ e'\ \ \ \ \ e' < e''}
 & \qquad  & \infer{e\ \sharp\ e'}{e'\ \sharp\ e}\\
\end{array}
\]
\end{definition}
The following proposition relates p\pes\ to \pes\ \cite{PU:jlamp15}.
\begin{proposition}\label{pr:ppes-prop}
 Let $P = (E, <, \#)$ be a p\pes, then
 \begin{itemize}
   \item $\mathsf{hc}(P) = (E, \leq, \sharp)$ is a \pes,
   \item if $P$ is a \pes, then $\mathsf{hc}(P) = P$, and
   \item $\Conf{P}{p\pes} = \Conf{\mathsf{hc}(P)}{\pes}$.
 \end{itemize}
\end{proposition}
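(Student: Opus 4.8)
The plan is to prove the three items in turn, with a single characterisation of the derived conflict $\sharp$ doing most of the work. First I would establish the following description of the relation generated by the rules of Definition~\ref{de:hereditary-closure}: for all $e,e'\in E$,
\[
  e\ \sharp\ e'\quad\text{iff}\quad \exists a,b\in E.\ a\ \#\ b\ \wedge\ a\leq e\ \wedge\ b\leq e'.
\]
The inclusion from left to right follows by induction on the derivation of $e\ \sharp\ e'$: the base rule gives $a=e$, $b=e'$; the heredity rule extends $b\leq e'<e''$ to $b\leq e''$ by transitivity; and the symmetry rule is handled by symmetry of $\#$. The converse holds because, starting from $a\ \#\ b$, repeated use of heredity together with symmetry lets one grow the two witnesses up to $e$ and $e'$. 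This characterisation makes $\sharp$ easy to reason about and is reused in the last item.

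Next, for the first item I would verify the \pes\ axioms for $\mathsf{hc}(P)$. The carrier $E$ and the order $<$ are inherited unchanged, so $<$ is still an irreflexive partial order in which every event has finitely many predecessors; symmetry of $\sharp$ is immediate from the symmetry rule, and heredity of $\sharp$ with respect to $<$ is exactly the heredity rule. The only real content is irreflexivity: assuming $e\ \sharp\ e$, the characterisation yields $a,b\leq e$ with $a\ \#\ b$. I then split on whether $a$ and $b$ lie strictly below $e$ or equal it. If both are strictly below, they belong to the conflict-free set $\setcomp{e'}{e'<e}$, contradicting $a\ \#\ b$ (the degenerate case $a=b$ being excluded by irreflexivity of $\#$); if one of them equals $e$, the p\pes\ clause ``$e<e'\Rightarrow\neg(e\ \#\ e')$'' (again with irreflexivity of $\#$ for the case where both equal $e$) is contradicted. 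Hence no such $e$ exists. I expect this case analysis to be the main obstacle, since it is precisely where every defining property of a p\pes\ is used.

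The second item is a closure argument: $\#\subseteq\sharp$ by the base rule, and when $P$ is already a \pes\ the relation $\#$ is itself symmetric and hereditary with respect to $<$, hence closed under all three rules of Definition~\ref{de:hereditary-closure}; since $\sharp$ is by construction the least relation closed under these rules, $\sharp\subseteq\#$, so $\mathsf{hc}(P)=P$.

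For the third item I would first note that every forwards reachable configuration is left-closed with respect to $<$, by induction on the enabling sequence of Definition~\ref{de:rpes-forwconf} (the clause $e'<e\Rightarrow e'\in X$ guarantees this property is preserved when a step is added). The key lemma is then that, on a left-closed set $X$, conflict-freeness for $\#$ and for $\sharp$ coincide: one direction is trivial from $\#\subseteq\sharp$, and for the other I would use the characterisation, since if $e\ \sharp\ e'$ with $e,e'\in X$ then the witnesses $a\leq e$ and $b\leq e'$ also lie in $X$ by left-closure, so $a\ \#\ b$ already breaks $\CF{X}$ with respect to $\#$. Because the enabling relation of Definition~\ref{de:ppes-conf-enab} differs between $P$ and $\mathsf{hc}(P)$ only through which conflict relation feeds into $\CF{\cdot}$, and all intermediate sets in an enabling sequence are forwards reachable and hence left-closed, the very same sequence witnesses membership in both configuration sets; this gives $\Conf{P}{p\pes}\subseteq\Conf{\mathsf{hc}(P)}{\pes}$. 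The reverse inclusion is the easy direction, as any sequence valid for $\sharp$ is a fortiori valid for the smaller relation $\#$.
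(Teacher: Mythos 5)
Your proof is correct, but note that the paper itself contains no proof of this proposition: it is imported verbatim from \cite{PU:jlamp15} (the citation given at the statement), and the appendix of omitted proofs does not cover it. So there is no in-paper argument to compare against; what you have written is a self-contained reconstruction, and it is a sound one. Your two key ingredients are exactly what is needed: the explicit characterisation $e\ \sharp\ e' \iff \exists a,b.\ a\ \#\ b \wedge a\leq e \wedge b\leq e'$, and the observation that on left-closed sets conflict-freeness for $\#$ and for $\sharp$ coincide. The one place where the argument genuinely uses all the p\pes\ axioms is, as you anticipated, irreflexivity of $\sharp$: your case split correctly invokes conflict-freeness of $\setcomp{e'\in E}{e'<e}$ when both witnesses are strictly below $e$, the axiom $e<e'\Rightarrow\neg(e\ \#\ e')$ when one witness equals $e$, and irreflexivity of $\#$ in the degenerate cases. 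Two small points worth making explicit if you write this up: (i) in the left-closure lemma for item three, breaking $\CF{X}$ requires the witnesses $a,b$ to be \emph{distinct}, which follows from irreflexivity of $\#$ (you noted this in item one but not there); (ii) in item three, both configuration sets are defined via the same enabling-sequence mechanism (the paper sets $\Conf{P}{\pes}=\Conf{P}{p\pes}$ for a \pes), which is what licenses your step-by-step transfer of the same sequence between the two structures.
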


\subsection{Reversible prime event structures} 
We now focus on the notion of 
\emph{reversible prime event structure}. The definitions and the results in this subsection 
are drawn from \cite{PU:jlamp15}. In reversible event structures some 
events are categorised as \emph{reversible}. The added relations
are among events and those representing the \emph{actual} undoing of the reversible events.
The undoing of events is represented by \emph{removing} them (from a configuration), and is achieved
by \emph{executing} the appropriate \emph{reversing} events.

\begin{definition}\label{de:rpes}
 A \emph{reversible prime event structure} (r\pes) is the tuple 
 $\mathsf{P} = (E, \anR, <, \#, \prec, \triangleright)$ where 
 $(E, <, \#)$ is a p\pes, $\anR\subseteq E$ are the %\changed{\emph{reversible}}{
 \emph{reversible/undoable}
 %}
  events
 (with reverse events being denoted by 
% \changed{$\underline{E'} =  \setcomp{\underline{e}}{e\in E'}$}
 {$\underline{\anR} = \setcomp{\underline{\anr}}{\anr\in\anR}$ and disjoint from $E$, i.e., $\un \anR \cap E = \emptyset$})
and
%\changed{
% \begin{itemize}
%   \item $\triangleright\ \subseteq E\times \underline{E'}$ is the \emph{prevention} relation,
%   %
%   \item $\prec\ \subseteq E\times \underline{E'}$ is the \emph{reverse causality} relation and
%         it is such that $e \prec \underline{e}$ for each $e\in E'$ and 
%         $\setcomp{e'\in E}{e'\prec\underline{e}}$ is finite and conflict-free for every $e\in E'$, 
%   %
%   \item if $e \prec \underline{e}'$ then not $e \triangleright \underline{e}'$,
%   %
%   \item $\ll$ defined as $e \ll e'$ if $e < e'$ and if $e\in E'$, then 
%         $e' \triangleright \underline{e}$, is a transitive relation called the 
%         \emph{sustained causation}, and 
%   %
%   \item $\#$ is hereditary with respect to $\ll$: if $e\ \#\ e' \ll e''$, then $e\ \#\ e''$.
% \end{itemize}
% }
 { \begin{itemize}
   \item $\triangleright\ \subseteq E\times \underline{\anR}$ is the \emph{prevention} relation,
   \item $\prec\ \subseteq E\times \underline{\anR}$ is the \emph{reverse causality} relation and
         it is such that $\anr\prec \underline{\anr}$ for each $\anr\in\anR$ and 
         $\setcomp{e\in E}{e\prec\underline{\anr}}$ is finite and conflict-free for every $\anr\in\anR$, 
   \item if $e \prec \underline{\anr}$ then not $e \triangleright \underline{\anr}$,
   \item the 
         \emph{sustained causation}  $\ll$  is a transitive relation defined such that $e \ll e'$ if $e < e'$ and if $e\in\anR$, then 
         $e' \triangleright \underline{e}$, and 
   \item $\#$ is hereditary with respect to $\ll$: if $e\ \#\ e' \ll e''$, then $e\ \#\ e''$.
 \end{itemize}
}
\end{definition}
The ingredients of a r\pes\ partly overlap with those of a \pes: there is a causality
relation ($<$) and a conflict one ($\#$) and the two are related by the \emph{sustained causation} relation $\ll$. The new ingredients are the 
\emph{prevention} relation and the \emph{reverse causality} relation.
The prevention relation states that certain events should be absent when trying to reverse an event, e.g., 
%\changed{$e' \triangleright e$}
{$e\triangleright \underline{\anr}$}
states that  {$e$} should be absent when reversing {$\anr$}.  The reverse causality relation
%\changed{$e \prec \underline{e'}$}
{$e \prec \underline{\anr}$} says that 
%\changed{$\underline{e'}$}
{$\underline{\anr}$} can be executed 
only when $e$ is present.
%, and $e$ should not prevent $e'$.

\begin{example}\label{ex:cr not causal}
Let $\mathsf{P} = (E,\anR,<,\#,\prec,\pr)$ where
$E = \anR = \{a,b,c\}$, $a<b$ and $a \prec \un a$, $b \prec \un b$, $c \prec \un c$, $c \prec \un a$ 
with $b \pr \un a$ and no conflict.
Then $a \ll b$ because $a < b$ and $b \pr \un a$.
$\mathsf{P}$ states that $b$ causally depends on $a$ and that $c$ is concurrent w.r.t. both $a$ and $b$. 
Note that  every event is reversible in $\mathsf{P}$ because $\anR = E$.  As expected, the reverse causality relation $\prec$
is defined such that every reverse event requires the presence of the corresponding reversible event, i.e., $e\prec \un e$ for all $e\in E$. 
Additionally, it also requires $c \prec \un a$, i.e.,  $a$ can be reversed only when $c$ is present. The prevention 
relation states that $a$ cannot be reversed when $b$ is present, i.e., $b \pr \un a$.
\end{example}
%
%%%Below text has been anticipated
%\changed{Given $X\subseteq E$ such that $\CF{X}$ and $Y\subseteq X$, then also $\CF{Y}$.
%In a r\pes\ a subset of events $X \subseteq E$ is left-closed if $\forall e\in X$. if
%$e' < e$, then $e'\in X$.}{}
%
\begin{definition}\label{de:rpes-conf-enab}
 Let $\mathsf{P} = (E, \anR, <, \#, \prec, \triangleright)$ be an r\pes\ and
  $X\subseteq E$ be a set of events such that $\CF{X}$. For
 $A\subseteq E$ and $B\subseteq \anR$, we say that $A\cup \underline{B}$ is \emph{enabled}
 at $X$ if
 \begin{itemize}
  \item $A\cap X = \emptyset$, $B \subseteq X$ and $\CF{X\cup A}$, 
%%  \todo{is that evident that $\CF{(X\setminus B)\cup A}$?}, Yes, if $\CF{X\cup A}$ then a subset is.
  \item $\forall e\in A, e'\in E$. if $e' < e$ then $e'\in X\setminus B$,
  \item $\forall e\in B, e'\in E$. if $e' \prec \underline{e}$ then 
        $e' \in X\setminus (B\setminus\setenum{e})$,
  \item $\forall e\in B, e'\in E$. if $e' \triangleright \underline{e}$ then $e'\not\in X\cup A$.      
 \end{itemize}
 If $A\cup \underline{B}$ is enabled at $X$ then 
 $X \stackrel{A\cup\underline{B}}{\longrightarrow} Y$ where $Y = (X\setminus B)\cup A$.
\end{definition}
\begin{example}\label{ex:rpes-red}
Consider  the r\pes\ in Example~\ref{ex:cr not causal}.
We have, e.g.,  $\emptyset \stackrel{\{a,c\}}{\longrightarrow} \{a,c\} 
 \stackrel{\{\un a\}}{\longrightarrow} \{c\}$ and $\emptyset \stackrel{\{a\}}{\longrightarrow} \{a\}
 \stackrel{\{b\}}{\longrightarrow} \{a, b\} 
 \stackrel{\{c, \un b\}}{\longrightarrow} \{a,c\} \stackrel{\{b\}}{\longrightarrow} \{a,b,c\}$. 
 \end{example}
Reachable configurations are subsets of events which can be reached from the empty set by performing
events or undoing previously performed events.
%(including the reversible ones).
%
%
\begin{definition}\label{de:rpes-conf}
Let $\mathsf{P} = (E, \anR, <, \#, \prec, \triangleright)$ be a r\pes\ and
 let $X\subseteq E$ be a subset of events such that $\CF{X}$.
 We say that $X$ is a \emph{(reachable) configuration} if there exist two sequences
 of %%%%%%finite 
 sets $A_i$ and $B_i$, for $i=1,\ldots,n$,  such that
 \begin{itemize}
  \item $A_i\subseteq E$ and $B_i\subseteq U$ for all $i$, and
  \item $X_i\stackrel{A_i\cup\underline{B_i}}{\longrightarrow} X_{i+1}$ for all $i$ with 
   $X_1 = \emptyset$ and $X_{n+1}=X$.
 \end{itemize}
The set of configurations of $\mathsf{P}$ is denoted by $\Conf{\mathsf{P}}{r\pes}$.
\end{definition}
\begin{example} 
The  set of configurations of $\mathsf{P}$ defined in Example~\ref{ex:cr not causal} is 
$\Conf{\mathsf{P}}{r\pes} = \{\emptyset, \{a\}, \{c\}, \{a,b\}, \{a,c\}, \{a,b,c\}\}$
as illustrated by the sequences shown in Example~\ref{ex:rpes-red}.
 \end{example}

%\changed{There are several flavours for reversing events as we have indicated in 
%the Introduction. 
%The most common is when an event cannot be reversed until all events it has caused have been reversed (called 
%\emph{cause-respecting}). A stronger notion is \emph{causal}, where we require that an event can be reversed 
%precisely when all events it has caused have been reversed.}

As discussed in Section~\ref{sec:intro}, r\pes es accommodate different flavours of 
reversibility. Hereafter, we  focus on \emph{causal} reversibility \cite{Lanese14}, 
which is one of the most common models of 
reversibility in distributed systems, in which an event can reversed 
only when all the events it has caused have already been  reversed.
  
  \begin{definition}\label{de:cr-and-causal-rpes}
   Let $\mathsf{P} = (E, \anR, <, \#, \prec, \triangleright)$ be an r\pes. Then $\mathsf{P}$ is 
   \emph{cause-respecting} if for any $e, e'\in E$, if $e < e'$ then $e \ll e'$. 
   $\mathsf{P}$ is \emph{causal} if for any $e\in E$ and $\anr\in\anR$ the following holds:
%\changed{
%   \begin{itemize}
%    \item $e \prec \underline{e'}$ iff $e = e'$, and
%    $e \triangleright \underline{e'}$ iff $e' < e$.
%   \end{itemize}
%   }{  
	 $e \prec \underline{\anr}$ iff  $e = \anr$, and  
    $e \triangleright \underline{\anr}$ iff $\anr < e$.
  \end{definition}
%
%\Comment{The r\pes es where the relations dealing with reversing events enjoy the properties required in
%Definition \ref{de:cr-and-causal-rpes} are those where sustained causation coincides with the
%  causality relation of the associated p\pes\ (called cause-respecting) whereas being causal means
%  that the unique requirement placed on the $\prec$ relation is that involves only pairs made of
%  a reversible event and the corresponding reversing one, and all the events that are caused
%  by one that can be reversed have to be undone before reversing this one. It is clear that
%  causal implies cause-respecting.}
%\Comment{Cause-respecting r\pes es are those r\pes es
%where sustained causation coincides with the
%causality relation of the associated p\pes es. }
%
  %
  %A cause-respecting and causal r\pes\ enjoys the followings.
%\changed{
%\begin{example}\label{ex:cr not causal2}
%Let $\mathsf{P} = (E,\anR,<,\#,\prec,\pr)$ where
%$E = \anR = \{a,b,c\}$, $a<b$ and $a \prec \un a$, $b \prec \un b$, $c \prec \un c$, $c \prec \un a$ 
%with $b \pr \un a$ and no conflict.
%Then $a \ll b$, so that $\mathsf{P}$ is a cause-respecting r\pes.
%However $\mathsf{P}$ is not causal, since we have $c \prec \un a$ meaning that $c$ has to be present 
%for $a$ to be reversed.  If we remove $c \prec \un a$ then we would have a causal r\pes.
%\end{example}
%}
\begin{example}\label{ex:cr not causal2}
The r\pes\ $\mathsf{P}$ in Example~\ref{ex:cr not causal} is a cause-respecting r\pes.
However $\mathsf{P}$ is not causal because of $c \prec \un a$, which means that $c$ has to be present 
for $a$ to be reversed even if $c$ does not causally depend on $a$.  If we remove $c \prec \un a$ then we obtain a causal r\pes.
\end{example}

\begin{example}\label{out-of-causal-order}
An example of out-of-causal order reversibility can be obtained from the definition of 
the r\pes\ $\mathsf{P}$ in 
Example~\ref{ex:cr not causal} by replacing $b \pr \un a$ by 
$a \pr \un b$. Then, we have $\emptyset \stackrel{\{a\}}{\longrightarrow} \{a\}
\stackrel{\{b,c\}}{\longrightarrow} \{a,b,c\}\stackrel{\{\un a\}}{\longrightarrow} \{b,c\}$.
 Note that $a$ can be reversed even in the presence of the event $b$, which causally 
depends on $a$.
\end{example}

Cause-respecting and causal r\pes es enjoy the following useful properties \cite{PU:jlamp15}.
 
\begin{proposition}\label{pr:rpes-causal-cause-respecting}
Let $\mathsf{P} = (E, \anR, <, \#, \prec, \triangleright)$ be a r\pes.
Let $X$ be a left-closed and conflict-free subset of events in $E$ and
let $A, B\subseteq \anR$. Then
     \begin{itemize}
\item if $\mathsf{P}$ is cause-respecting and $X \stackrel{A\cup \underline{B}}{\longrightarrow} X'$, 
      then $X'$ is also left-closed,
\item if $\mathsf{P}$ is cause-respecting and
            $X \stackrel{\underline{B}}{\longrightarrow} X'$, then
            $X' \stackrel{B}{\longrightarrow} X$,
\item if $\mathsf{P}$ is causal and
            $X \stackrel{A\cup \underline{B}}{\longrightarrow} X'$, then
            $X' \stackrel{B\cup\underline{A}}{\longrightarrow} X$.
     \end{itemize}
\end{proposition}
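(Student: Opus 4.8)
The plan is to verify each item by unfolding the enabling conditions of Definition~\ref{de:rpes-conf-enab} and exploiting that $X$ is left-closed and conflict-free, keeping in hand throughout the set identities that follow from the hypothesis $X\stackrel{A\cup\un{B}}{\longrightarrow}X'$: from the enabling we have $A\cap X=\emptyset$, $B\subseteq X$ (hence $A\cap B=\emptyset$) and $X'=(X\setminus B)\cup A$. The single fact that drives all three arguments is the following \emph{absence-of-successors} observation: if $e\in B$ and $e<e'$ then $e'\notin X\cup A$. In the causal case this is immediate, since $e'\triangleright\un{e}$ holds exactly when $e<e'$ and the prevention clause forbids such $e'$ in $X\cup A$. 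In the cause-respecting case, $e<e'$ gives $e\ll e'$, and since $e\in B\subseteq\anR$ the definition of sustained causation yields $e'\triangleright\un{e}$, so again the prevention clause gives $e'\notin X\cup A$.

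For item~1 I would take $e\in X'=(X\setminus B)\cup A$ and prove $\hist{e}\subseteq X'$, splitting on $e\in A$ versus $e\in X\setminus B$. If $e\in A$, the forward-causality clause of the enabling places every $d<e$ in $X\setminus B\subseteq X'$. If $e\in X\setminus B$, then any $d<e$ lies in $X$ by left-closure of $X$, and $d\in B$ is impossible: $d\in B$ with $d<e$ would, by the observation, force $e\notin X\cup A$, contradicting $e\in X$; hence $d\in X\setminus B\subseteq X'$. Item~2 is the case $A=\emptyset$, so $X'=X\setminus B$, and I would show that firing $B$ forward is enabled at $X'$ and yields $X'\cup B=X$ (using $B\subseteq X$). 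Freshness $B\cap X'=\emptyset$ and $\CF{X'\cup B}=\CF{X}$ are immediate; the only real clause is that each $d<e$ with $e\in B$ belongs to $X'$, which is again the observation: $d\in X$ by left-closure, and $d\in B$ would give $e\notin X$, contradicting $e\in B\subseteq X$.

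For item~3 I would first note that causality makes both $A$ and $B$ antichains: $e_1<e_2$ in $B$ contradicts the observation (it would give $e_2\notin X$ while $e_2\in B\subseteq X$), and $e_1<e_2$ in $A$ contradicts the forward-causality clause (it would put $e_1\in X\setminus B$ while $A\cap X=\emptyset$). I would then check the enabling of $B\cup\un{A}$ at $X'$. Here $B\cap X'=\emptyset$ and $A\subseteq X'$ are clear, $\CF{X'\cup B}=\CF{X\cup A}$ holds, and the reverse-causality clause is vacuous because under causality $d\prec\un{e}$ only for $d=e$. The forward-causality clause for $B$ follows from left-closure of $X$ together with $B$ being an antichain (a predecessor of a $B$-event lies in $X$, not in $B$, and not in $A$ since $A\cap X=\emptyset$), so it lands in $X'\setminus A$. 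The prevention clause for reversing $A$ asks, for $e\in A$ and $e<e'$, that $e'\notin X'\cup B$; since $X'\cup B=X\cup A$, this holds because $e'\notin A$ ($A$ is an antichain) and $e'\notin X$ (otherwise $e\in X$ by left-closure, contradicting $A\cap X=\emptyset$). Finally $(X'\setminus A)\cup B=(X\setminus B)\cup B=X$, so the step returns to $X$.

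The main obstacle, shared by all three items, is precisely the interaction between the events being reversed and their causal neighbours: one must argue that no strict successor of a reversed event survives in the current configuration (equivalently, that $B$ is an antichain in the causal setting). The work there is to turn the abstract prevention relation into usable causal information, which is why the proof factors through sustained causation in the cause-respecting items and through the explicit characterisation $e\triangleright\un{e'}\iff e'<e$ in the causal item. The remaining difficulty is purely bookkeeping---tracking $X'=(X\setminus B)\cup A$, $X'\cup B=X\cup A$ and the disjointness facts $A\cap X=\emptyset$, $B\subseteq X$, $A\cap B=\emptyset$---which I would settle once at the outset and reuse in every item.
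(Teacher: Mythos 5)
Your proposal is correct. Note, however, that the paper does not prove this proposition at all: it is recalled verbatim from \cite{PU:jlamp15}, and no proof appears in the appendix, so there is no paper-internal argument to compare against. Your reconstruction is a sound, self-contained proof, and its organising idea --- the \emph{absence-of-successors} observation that no strict $<$-successor of an event in $B$ can survive in $X\cup A$, obtained via sustained causation in the cause-respecting case and via the characterisation $e\triangleright\underline{\anr}\iff \anr<e$ in the causal case --- is precisely the mechanism that drives the proofs in the cited source. All the enabling clauses of Definition~\ref{de:rpes-conf-enab} are checked correctly, including the antichain properties of $A$ and $B$ in the causal case and the set identities $X'\cup B=X\cup A$ and $(X'\setminus A)\cup B=X$. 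One wording quibble: in item~3 the reverse-causality clause is not \emph{vacuous} but trivially satisfied --- it demands $e\in X'\setminus(A\setminus\setenum{e})$ for each $e\in A$, which holds because $A\subseteq X'$; your parenthetical reasoning makes clear you saw this, so it is only a matter of phrasing.
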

%\changed{
%\begin{example}\label{not-left-closed}
%We obtain a form of out-of-causal order reversibility by modifying the r\pes\ in 
%Example~\ref{ex:cr not causal}. If we replace $b \pr \un a$ with
%$a \pr \un b$, then we have $\emptyset \stackrel{\{a\}}{\longrightarrow} \{a\}
%\stackrel{\{b,c\}}{\longrightarrow} \{a,b,c\}\stackrel{\{\un a\}}{\longrightarrow} \{b,c\}$.
%Hence, $\{b,c\}$ is a configuration. However, $\{b,c\}$ is not left-closed.
%\end{example}
%}

\begin{example}\label{not-left-closed}
The above properties do not hold when a r\pes\ is not cause-respecting / causal. Consider the 
r\pes\ in Example~\ref{out-of-causal-order}. We have that $\{a,b,c\}\stackrel{\{\un a\}}{\longrightarrow} \{b,c\}$ but 
$\{b,c\}$ is not left-closed.
\end{example}

A particular r\^ole will be played by the configurations that can be reached without
executing any reversible events.
\begin{definition}\label{de:rpes-forwconf}
 Let $\mathsf{P} = (E, \anR, <, \#, \prec, \triangleright)$ be a r\pes\ and
  $X\in\Conf{\mathsf{P}}{r\pes}$ be a configuration.
 $X$ is \emph{forwards reachable} if there exists a sequence
 of sets $A_i\subseteq E$, for $i=1,\ldots,n$, such that 
%\changed{ \begin{itemize}
%  \item $X_i\stackrel{A_i}{\longrightarrow} X_{i+1}$ for all $i$ with 
%     $X_1 = \emptyset$ and $X_{n+1}=X$.
% \end{itemize}}
 {$
 X_i\stackrel{A_i}{\longrightarrow} X_{i+1}$ for all  $i$, with  
     $X_1 = \emptyset$ and $X_{n+1}=X$.
}
\end{definition}
Although $\{b,c\}$ in Example~\ref{not-left-closed} is a reachable configuration it is 
not forwards reachable. However, the configurations of a cause-respecting r\pes\ are forwards reachable
\cite{PU:jlamp15}.
\begin{proposition}
     Let $\mathsf{P} = (E, \anR, <, \#, \prec, \triangleright)$ be a cause-respecting r\pes, and
     let $X$ be configuration of $\mathsf{P}$. Then $X$ is forwards reachable.
\end{proposition}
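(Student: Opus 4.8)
The plan is to reduce the statement to a purely ``forward'' characterisation: I would show that every reachable configuration of a cause-respecting r\pes\ is left-closed and conflict-free, and then that any such set can be rebuilt from $\emptyset$ using forward steps only. Conflict-freeness is immediate, since $X\in\Conf{\mathsf{P}}{r\pes}$ is conflict-free by definition; the substantive preliminary fact is left-closedness, and this is exactly where the cause-respecting hypothesis is needed (recall Example~\ref{not-left-closed}, where dropping it yields the non-left-closed configuration $\{b,c\}$).

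First I would fix a witnessing sequence $\emptyset = X_1 \stackrel{A_1\cup\underline{B_1}}{\longrightarrow}\cdots\stackrel{A_n\cup\underline{B_n}}{\longrightarrow} X_{n+1}=X$ as given by Definition~\ref{de:rpes-conf}, and prove by induction on $i$ that each $X_i$ is left-closed and conflict-free. The base case $X_1=\emptyset$ is trivial. For the inductive step, $X_{i+1}\subseteq X_i\cup A_i$ is conflict-free because the enabling condition of Definition~\ref{de:rpes-conf-enab} requires $\CF{X_i\cup A_i}$, while left-closedness of $X_{i+1}$ follows directly from the first item of Proposition~\ref{pr:rpes-causal-cause-respecting}, whose hypotheses (cause-respecting, and $X_i$ left-closed and conflict-free) are met by the induction hypothesis. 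In particular $X=X_{n+1}$ is left-closed and conflict-free.

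Next I would reconstruct $X$ by forward steps. Using the p\pes\ axiom that $\setcomp{e'\in E}{e'<e}$ is finite, I assign to every $e\in X$ its causal depth $d(e)$, the length of the longest $<$-chain below $e$ (finite, as it lives inside the finite predecessor set of $e$), and set $L_k=\setcomp{e\in X}{d(e)=k}$ and $Y_k=\bigcup_{j<k}L_j$. Then I would check that each $L_k$ is enabled at $Y_k$ in the sense of Definition~\ref{de:rpes-conf-enab} with empty reverse part: $L_k\cap Y_k=\emptyset$ by construction; $\CF{Y_k\cup L_k}$ because $Y_k\cup L_k\subseteq X$ and $X$ is conflict-free; and if $e'<e$ with $e\in L_k$, then left-closedness gives $e'\in X$ and $d(e')<d(e)=k$, so $e'\in Y_k$. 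Hence $\emptyset = Y_0 \stackrel{L_0}{\longrightarrow} Y_1 \stackrel{L_1}{\longrightarrow}\cdots$ is a forward computation building up $X$.

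The step I expect to be the main obstacle is ensuring that this forward computation is \emph{finite}, as required by the definition of forwards reachability, i.e.\ that $X$ has finite causal depth. I would derive this from the finiteness of the witnessing sequence: a forward step raises the maximal depth of the current configuration by at most one, since every event it adds has all its $<$-predecessors already present and, by left-closedness of the current configuration, no newly added event lies below an already-present one; a reverse step only removes events and so cannot increase the maximal depth. Consequently $\max_{e\in X} d(e)\le n$, only finitely many layers $L_k$ are non-empty, and the computation above reaches $X$ in finitely many forward steps, establishing that $X$ is forwards reachable.
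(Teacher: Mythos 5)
The paper never proves this proposition: it is imported directly from \cite{PU:jlamp15} (the sentence introducing it carries the citation), and no argument for it appears in the appendix of omitted proofs. So there is no in-paper proof to compare against, and your proposal must be judged as a self-contained argument; on those terms it is correct. Your three-stage structure is sound: (i) induction along the witnessing sequence of Definition~\ref{de:rpes-conf}, using the first item of Proposition~\ref{pr:rpes-causal-cause-respecting} to propagate left-closedness (this is the only place the cause-respecting hypothesis enters, exactly as Example~\ref{not-left-closed} shows it must); (ii) the layered reconstruction of $X$ by causal depth, whose enabling conditions you verify correctly against Definition~\ref{de:rpes-conf-enab} with empty reverse part, using left-closedness and conflict-freeness of $X$; and (iii) the bound $\max_{e\in X} d(e)\le n$, which is precisely the step a careless proof would omit, since nothing forces a configuration to be finite and Definition~\ref{de:rpes-forwconf} requires a \emph{finite} sequence of forward steps. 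The bound holds for the reason you give: if $e\in A_i$, then every element of a chain below $e$ is a $<$-predecessor of $e$ and hence lies in $X_i\setminus B_i$ by the enabling clause, so $d(e)$ exceeds the maximal depth occurring in $X_i$ by at most one, while the reverse part of a step only deletes events. Two cosmetic remarks: your parenthetical appeal to left-closedness of $X_i$ in step (iii) is not actually needed, since the enabling clause alone yields the chain containment; and when invoking Proposition~\ref{pr:rpes-causal-cause-respecting} you implicitly read its hypothesis ``$A, B\subseteq \anR$'' as the intended ``$A\subseteq E$ and $B\subseteq \anR$'', which is how the result is stated and used in \cite{PU:jlamp15}, so this is a defect of the paper's transcription rather than of your proof.
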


% !TEX root = main.tex

\section{Occurrence nets and prime event structures}\label{sec:cn-and-pes}
We now review the notion of occurrence nets~\cite{NPW:PNES,Win:ES}. 
%Let us note that sometimes causal nets are also referred to as \emph{occurrence nets}. 
Given a net $N = \langle S, T, F, \mathsf{m}\rangle$,
%\changed{
%we say 
%that it is \emph{acyclic} whenever the transitive and reflexive closure of $F$, denoted
%with $\leq_N$,  
%is a partial order over $S\cup T$. With $<_N$ we denote the transitive closure of
%$F$. }
we 
write $<_N$ for transitive closure of $F$, and $\leq_N$ for the reflexive closure of $<_N$.
 We say $N$ is \emph{acyclic} if  $\leq_N$ is a partial order.
%
%\changed{
%We adopt the usual convention and denote the set of places of a causal net with $B$, 
%%(for condition in German), 
%the set of transitions with $E$ %(for events in German)
%and the initial marking with $\mathsf{c}$.}
%{
For occurrence nets, we adopt the usual convention and refer to places and transitions respectively 
 as {\em conditions} and {\em events}, and correspondingly use $B$  and 
%(for condition in German), 
$E$ for the sets of conditions and events.
 We will often confuse conditions with places and 
events with transitions.
%}
\begin{definition}
 An \emph{occurrence net} (\cn) $C = \langle B, E, F, \mathsf{c}\rangle$ is an acyclic, safe net 
 satisfying the
  following restrictions:
  \begin{itemize}
    \item $\forall b\in \mathsf{c}$. $\pre{b} = \emptyset$,
    \item $\forall b\in B$. $\exists b'\in \mathsf{c}$ such that $b' \leq_C b$,
    \item $\forall b\in B$. $\pre{b}$ %\leq 1$ 
          is either empty or a singleton,
    \item for all $e\in E$ the set $\hist{e} = \setcomp{e' \in E}{e'\leq_C e}$ is finite,
          and
    \item  $\#$ is an irreflexive and symmetric relation defined as follows:
           \begin{itemize}
             \item $e\ \#_0\ e'$ iff $e, e' \in E$, $e\neq e'$ and 
                   $\pre{e}\cap\pre{e'}\neq \emptyset$,
             \item $x\ \#\ x'$ iff $\exists y, y'\in E$ such that $y\ \#_0\ y'$ and $y \leq_C x$ and 
                   $y' \leq_C x'$.
             \end{itemize}
     \end{itemize}
\end{definition}
The intuition behind occurrence nets is the following: each condition $b$ represents 
the occurrence of a token,
which is produced by the \emph{unique} event in $\pre{b}$, 
unless $b$ belongs to the initial marking,
and it is used by only one transition (hence if $e, e'\in\post{b}$, then $e\ \#\ e'$).
On an occurrence net $C$ it is natural to define a notion of \emph{causality} among elements of the 
net: we say that $x$ is \emph{causally dependent} on $y$ iff $y \leq_C x$.
%
% We denote the set 
%$\setcomp{y}{y \leq_C x}$ by $\hhist{x}$. \todo{I think we can remove this notation}
%
%Given a causal net $C = \langle B, E, F, \mathsf{m}\rangle$, if $\forall b\in B$ 
%it holds that $\post{b}$ is at most a singleton, we say that it is a \emph{conflict-free} 
%causal net (the relation $\#$ is empty). 
%
%Causal nets do not only express causality and conflicts but concurrency as well.
%%
%$X \subseteq B$ is said to be \emph{concurrent} whenever 
%$\forall b, b'\in X$.\ $b \neq b'\ \Rightarrow\ (\neg(b\leq_C b')\ \land\
%b'\leq_C b)\ \land\ \neg(b\ \#\ b'))$, and it is denoted with $\mathbf{co}(X)$.
%%
%Then a subset of events $E'\subseteq E$ is \emph{concurrent} whenever it holds that
%$\mathbf{co}(\Sigma_{e\in E'}\pre{e})$.
 
Occurrence nets are often the result of the \emph{unfolding} of a (safe) net. 
In this perspective an occurrence net is meant to describe precisely the non-sequential semantics
of a net, and each reachable marking of the occurrence net corresponds to a reachable marking
in the net to be unfolded. Here we focus purely on occurrence nets and not on the nets
they are unfoldings of.
% without any connection with the system the causal net is meant to be the description 
%of the non-sequential semantics.

\begin{definition}\label{de:cn-conf}
 Let $C = \langle B, E, F, \mathsf{c}\rangle$ be a \cn\ and 
 $X\subseteq E$ be a subset of events. Then $X$ is a \emph{configuration} of
 $C$ whenever $\CF{X}$ and $\forall e\in X$. $\hist{e}\subseteq X$.
 The set of configurations of the occurrence net $C$ is denoted by $\Conf{C}{\cn}$.
\end{definition}

Given an occurrence net $C = \langle B, E, F, \mathsf{c}\rangle$ and a state
$X \in \states{C}$, it is easy to see that it is \emph{conflict free}, \emph{i.e.}
$\forall e, e'\in X$. $e\neq e'\ \Rightarrow\ \neg (e\ \#\ e')$, and \emph{left closed},
\emph{i.e.} $\forall e \in X$. $\setcomp{e'\in E}{e'\leq_C e}\subseteq X$.

\begin{proposition}\label{pr:states-are-conf}
 Let $C = \langle B, E, F, \mathsf{c}\rangle$ be an occurrence net and $X\in \states{C}$. Then
 $X\in \Conf{C}{\cn}$.
\end{proposition}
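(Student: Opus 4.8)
The plan is to prove the two defining properties of a configuration—conflict\-freeness and left\-closedness—together, by induction on the length of a firing sequence witnessing the state, while carrying along an explicit description of the reached marking. Since every state has the form $X_\sigma$ for some $\sigma\in\firseq{C}{\mathsf{c}}$, and since by the decomposition remark every step firing can be refined into single\-transition firings, it suffices to show for every such $\sigma$ (built from singleton steps) that $X_\sigma$ is a \emph{set} (no transition fires twice), that it is conflict\-free and left\-closed, and moreover that the reached marking obeys the invariant
\[ \lead{\sigma} = (\mathsf{c}\cup\post{X_\sigma})\setminus\pre{X_\sigma}. \]
Intuitively this says a condition is marked exactly when it has already been produced (initially or by a fired event) and not yet consumed; safety guarantees the multiset operations of the firing rule coincide with these set operations, and this invariant is the technical engine that drives the induction.

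The base case ($\sigma$ of length $0$) is immediate, since $X_\sigma=\emptyset$ is conflict\-free and left\-closed and $\lead{\sigma}=\mathsf{c}$. For the inductive step write $\sigma=\sigma'\trans{e}m'$ with $m=\lead{\sigma'}$ and set $X'=X_{\sigma'}$, which by hypothesis is a conflict\-free, left\-closed set satisfying the invariant; firing requires $\emptyset\neq\pre{e}\subseteq m$. First I would show $e\notin X'$: otherwise $\pre{e}\subseteq\pre{X'}$, whence $\pre{e}\cap m=\emptyset$ by the invariant, contradicting $\emptyset\neq\pre{e}\subseteq m$; thus $X_\sigma=X'\cup\{e\}$ is again a set. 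For left\-closedness, every $b\in\pre{e}$ with $b\notin\mathsf{c}$ lies in $\post{X'}$ by the invariant, so its unique producer $\pre{b}$ belongs to $X'$, and left\-closedness of $X'$ then pulls in all of $\hist{\pre{b}}$; collecting over $b\in\pre{e}$ yields $\hist{e}\subseteq X'\cup\{e\}=X_\sigma$. Re\-establishing the invariant for $X_\sigma$ reduces to the disjointness $\post{e}\cap(\pre{X'}\cup\pre{e})=\emptyset$, which follows from acyclicity (for $\pre{e}$) and from left\-closedness of $X'$ together with $e\notin X'$ (for $\pre{X'}$).

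I expect conflict\-freeness to be the main obstacle, since this is where safety and acyclicity are genuinely used, and it is exactly here that the marking invariant pays off. Suppose for contradiction $e\ \#\ e''$ for some $e''\in X'$ (as $X'$ is already conflict\-free, any new conflict must involve $e$). Unfolding the definition of $\#$ and using left\-closedness of $X_\sigma$ and of $X'$, there are directly conflicting ancestors $y\leq_C e$ and $y'\leq_C e''$ with $y\ \#_0\ y'$, sharing a condition $b\in\pre{y}\cap\pre{y'}$, where $y\in X_\sigma$ and $y'\in X'$. If $y\neq e$ then $y,y'\in X'$ are directly conflicting, contradicting conflict\-freeness of $X'$; if $y=e$ then $b\in\pre{e}\cap\pre{y'}$ with $y'\in X'$ gives $b\in\pre{X'}$, so the invariant forces $b\notin m$, contradicting $\pre{e}\subseteq m$. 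Either way we reach a contradiction, so $X_\sigma$ is conflict\-free. This completes the induction, and specialising to an arbitrary $\sigma$ shows that $X_\sigma$ is a conflict\-free, left\-closed set, i.e.\ $X_\sigma\in\Conf{C}{\cn}$.
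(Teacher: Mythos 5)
Your proof is correct. Note that the paper itself offers no proof of this proposition: it is asserted as an immediate observation (``it is easy to see'') in the text just before the statement, and no argument appears in the appendix, so your induction is precisely the standard argument that the authors leave implicit. Your route --- refining steps into singleton firings, then inducting along the firing sequence while carrying the marking invariant $\lead{\sigma} = (\mathsf{c}\cup\post{X_\sigma})\setminus\pre{X_\sigma}$ --- is the canonical way to make it rigorous, and each ingredient checks out: $e\notin X'$ uses the paper's standing assumption $\pre{e}\neq\emptyset$; left-closure uses that each non-initial precondition has a unique producer; conflict-freeness splits correctly on whether the $\#_0$-ancestor of the new event is the event itself (killed by the invariant) or not (killed by conflict-freeness of $X'$); and the appeal to safety legitimately discharges the residual disjointness $\post{e}\cap(\mathsf{c}\cup\post{X'})=\emptyset$ needed to read the multiset firing rule as set operations.
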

We now recall the connection among occurrence nets and prime event structures~\cite{Win:ES}.
\begin{proposition}\label{pr:on-to-pes}
 Let $C = \langle B, E, F, \mathsf{c}\rangle$ be an occurrence net. Then
 $\mathcal{P}(C) = (E, \leq_C, \#)$
  is a \pes, and 
 $\Conf{C}{\cn} = \Conf{\mathcal{P}(C)}{\pes}$.
\end{proposition}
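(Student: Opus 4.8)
The plan is to split the statement into its two assertions and prove them separately: first that $\mathcal{P}(C) = (E, \leq_C, \#)$ satisfies the axioms of Definition~\ref{de:pes-winskel}, and second that the net configurations of $C$ and the \pes\ configurations of $\mathcal{P}(C)$ coincide. For the first assertion I would note that $E$ is countable, being the event set of $C$, and that the strict relation $<_C$ restricted to $E\times E$ is an irreflexive partial order precisely because $C$ is acyclic, i.e. $\leq_C$ is a partial order by hypothesis. The finiteness of $\setcomp{e'\in E}{e' <_C e}$ is immediate from the occurrence-net requirement that each $\hist{e}$ be finite, and $\#$ is irreflexive and symmetric by its very definition in the occurrence net. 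The only delicate point is conflict heredity: I would take $e\ \#\ e'$ with $e' <_C e''$, expand $e\ \#\ e'$ into witnesses $y,y'\in E$ with $y\ \#_0\ y'$, $y\leq_C e$ and $y'\leq_C e'$, and then use transitivity $y'\leq_C e' <_C e''$ to obtain $y'\leq_C e''$, whence $e\ \#\ e''$ follows by the same defining clause of $\#$. This establishes that $\mathcal{P}(C)$ is a \pes.

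For the second assertion I would argue by double inclusion. The inclusion $\Conf{\mathcal{P}(C)}{\pes}\subseteq \Conf{C}{\cn}$ I would prove by induction on the length of an enabling sequence witnessing forwards reachability: the base case $\emptyset$ is trivially conflict-free and left-closed, and each enabling step preserves both properties, since the enabling condition of Definition~\ref{de:ppes-conf-enab} explicitly demands $\CF{X\cup A}$ and requires every cause of a newly added event to already lie in $X$; hence left-closedness is maintained at each step. For the converse inclusion $\Conf{C}{\cn}\subseteq \Conf{\mathcal{P}(C)}{\pes}$, I would take a conflict-free, left-closed $X\subseteq E$ and build an enabling sequence reaching it by adding the events of $X$ in an order compatible with $<_C$. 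Left-closedness guarantees that when an event is scheduled all of its causes are already present (so the causal precondition of enabling holds), and conflict-freeness of $X$, together with the fact that any subset of a conflict-free set is conflict-free, guarantees that $\CF{}$ is preserved throughout; thus $X$ is forwards reachable.

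The main obstacle is realising the enabling sequence of the converse inclusion as a genuine witness in the sense of the forwards-reachability definition. Because each $\hist{e}$ is finite, every event of $X$ has finite causal height, so the natural strategy is to layer $X$ by height and add one layer at a time, each layer being enabled over the union of all preceding layers. For finite $X$ this terminates in finitely many steps and the argument is routine; the point requiring care is that the enumeration respects $<_C$ and that the finiteness of the histories is exactly what makes such a layered, causally ordered enumeration available, so that the reachability of $X$ can be exhibited as required. Combining the two inclusions yields $\Conf{C}{\cn} = \Conf{\mathcal{P}(C)}{\pes}$, completing the proof.
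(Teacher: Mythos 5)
Your proof is correct in substance, and it follows the classical route: note that the paper itself gives no proof of this proposition (it is recalled from \cite{Win:ES}), and the argument you give --- verifying the axioms of Definition~\ref{de:pes-winskel}, with conflict heredity obtained by transporting the witnesses $y\ \#_0\ y'$ along $\leq_C$, then proving the two inclusions, using preservation of conflict-freeness and left-closure for one direction and a height-stratified enabling sequence for the other --- is exactly the standard one. You also handle correctly a small infelicity of the statement: the causality of $\mathcal{P}(C)$ must be read as the strict order $<_C$ restricted to events (irreflexive by acyclicity of $C$), even though the proposition displays $\leq_C$.

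The one point you should make explicit rather than leave implicit is the scope of the converse inclusion. Definition~\ref{de:cn-conf} places no cardinality bound on a net configuration $X$, whereas a forwards reachable configuration of a p\pes\ must be reached by a \emph{finite} sequence of steps $A_1,\dots,A_n$; since the enabling condition of Definition~\ref{de:ppes-conf-enab} forces every cause of an event of $A_i$ to lie already in $X_i$, an easy induction shows that every event of $X_{n+1}$ has causal height at most $n-1$. Consequently a net configuration of unbounded height --- for instance, the set of all events of an occurrence net consisting of disjoint causal chains of lengths $1,2,3,\dots$ --- is conflict-free and left-closed yet not forwards reachable, so your layering argument (indeed, any argument) establishes the stated equality only for configurations of bounded height, in particular for all finite ones. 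This is a mismatch between the paper's two notions of configuration rather than a flaw in your strategy --- in Winskel's formulation, where \pes\ configurations are simply the conflict-free left-closed (secured) subsets, the equality is nearly definitional --- but a fully rigorous write-up should state the restriction outright instead of disposing of it with ``for finite $X$ this terminates''.
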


\begin{example}\label{ex:cn}

%Below we depict some (finite) occurrence nets, and then show how to associate \pes es to them.  
Figure~\ref{fig:occ-nets} illustrates some  (finite) occurrence nets. We can associate \pes es to them as follows.
The net $C_1$ has two concurrent events, which hence are neither causally ordered nor in conflict,
consequently both $<$ and $\#$ are  empty. The two events $e_1$ and $e_2$ in  $C_2$ are in conflict,
namely $e_1\ \#\ e_2$, while they are causally ordered in $C_3$, namely $e_1 < e_2$, and not in conflict.
Finally, in $C_4$ we have $e_1< e_3$ and $e_2<e_4$ and $e_1\ \#\ e_2$. Additionally, conflict inheritance 
give us $e_1\ \#\ e_4$, $e_2\ \#\ e_3$ and $e_3\ \#\ e_4$.

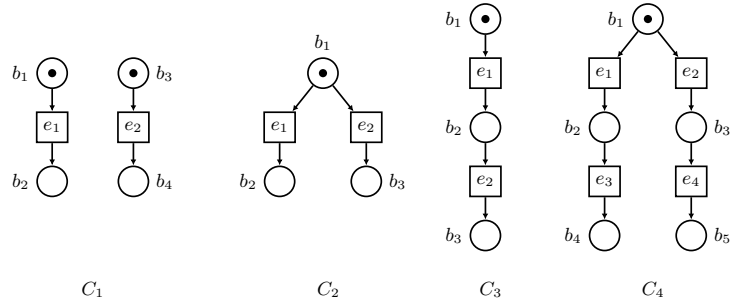
\begin{figure}[t]
\begin{center}
\scalebox{0.8}{
\begin{tikzpicture}
[bend angle=45, scale=.9, 
  pre/.style={<-,shorten
    <=0pt,>=stealth,>={Latex[width=1mm,length=1mm]},thick}, post/.style={->,shorten
    >=0,>=stealth,>={Latex[width=1mm,length=1mm]},thick},place/.style={circle, draw=black,
    thick,minimum size=5mm}, transition/.style={rectangle, draw=black,
    thick,minimum size=5mm}, invplace/.style={circle,
    draw=black!0,thick}]
\node[place] (b10) at (8,0) [label=left:$b_3$] {};
\node[place] (b14) at (10.2,0) [label=left:$b_4$] {};
\node[place] (b15) at (11.8,0) [label=right:$b_5$] {};
\node[place] (b2) at (0,1) [label=left:$b_2$] {};
\node[place] (b4) at (1.5,1) [label=right:$b_4$] {};
\node[place] (b6) at (4.2,1)  [label=left:$b_2$] {};
\node[place] (b7) at (5.8,1)  [label=right:$b_3$]{};
\node[place] (b9) at (8,2) [label=left:$b_2$] {};
\node[place] (b12) at (10.2,2) [label=left:$b_2$] {};
\node[place] (b13) at (11.8,2) [label=right:$b_3$] {};
\node[place,tokens=1] (b1) at (0,3) [label=left:$b_1$] {};
\node[place,tokens=1] (b3) at (1.5,3) [label=right:$b_3$] {};
\node[place,tokens=1] (b5) at (5,3)  [label=above:$b_1$] {};
\node[place,tokens=1] (b8) at (8,4) [label=left:$b_1$] {};
\node[place,tokens=1] (b11) at (11,4) [label=left:$b_1$] {};
\node[transition] (e1) at (0,2) {$e_1$}
edge[pre] (b1)
edge[post] (b2);
\node[transition] (e2) at (1.5,2) {$e_2$}
edge[pre] (b3)
edge[post] (b4);
\node[transition] (e3) at (4.2,2) {$e_1$}
edge[pre] (b5)
edge[post] (b6);
\node[transition] (e4) at (5.8,2) {$e_2$}
edge[pre] (b5)
edge[post] (b7);
\node[transition] (e5) at (8,3) {$e_1$}
edge[pre] (b8)
edge[post] (b9);
\node[transition] (e6) at (8,1) {$e_2$}
edge[pre] (b9)
edge[post] (b10);
\node[transition] (e7) at (10.2,3) {$e_1$}
edge[pre] (b11)
edge[post] (b12);
\node[transition] (e8) at (11.8,3) {$e_2$}
edge[pre] (b11)
edge[post] (b13);
\node[transition] (e9) at (10.2,1) {$e_3$}
edge[pre] (b12)
edge[post] (b14);
\node[transition] (e9) at (11.8,1) {$e_4$}
edge[pre] (b13)
edge[post] (b15);
\node (n1) at (.75,-1) {$C_1$};
\node (n1) at (5.1,-1) {$C_2$};
\node (n1) at (8.1,-1) {$C_3$};
\node (n1) at (11.1,-1) {$C_4$};
\end{tikzpicture}}
\caption{Some occurrence nets}
\label{fig:occ-nets}
\end{center}
\end{figure}

\end{example}

Conversely, every \pes\ can be associated with an occurrence net. 
%The vice versa is also possible:  an occurrence net can be associated to every \pes.

\begin{proposition}\label{pr:pes-to-on}
 Let $P = (E, <, \#)$ be a \pes\ and let $\bot\not\in E$ be a new symbol. 
 Then $\mathcal{E}(P) = \langle B, E, F, \mathsf{c}\rangle$ 
 %is an occurrence net, where
 defined as follows 
  \begin{itemize}
    \item $B = \setcomp{(a,A)}{a\in E\cup\setenum{\bot} \land A \subseteq E \land \CF{A} \land 
    a \neq \bot\ \Rightarrow\ \forall e\in A.\ a < e}$,
   \item $F = \setcomp{(b,e)}{b = (a,A)\ \land\ e\in A}\cup \setcomp{(e,b)}{b = (e,A)}$, and
   \item $\mathsf{c} = \setcomp{(\bot,A)}{A \subseteq E\ \land\ \CF{A}}$,                                               
 \end{itemize}
 is an occurrence net, and $\Conf{P}{\pes} = \Conf{\mathcal{E}(P)}{\cn}$.
\end{proposition}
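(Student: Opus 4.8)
The plan is to reduce the configuration equality to an identity between event structures and then invoke Proposition~\ref{pr:on-to-pes}. Concretely, once $\mathcal{E}(P)$ has been shown to be an occurrence net, that proposition yields a \pes\ $\mathcal{P}(\mathcal{E}(P)) = (E, \leq_C, \#)$ together with $\Conf{\mathcal{E}(P)}{\cn} = \Conf{\mathcal{P}(\mathcal{E}(P))}{\pes}$. So it suffices to prove $\mathcal{P}(\mathcal{E}(P)) = P$, i.e.\ that the causality $\leq_C$ read off the net coincides with $\leq$ and that the conflict induced by the net coincides with $\#$; then $\Conf{\mathcal{E}(P)}{\cn} = \Conf{\mathcal{P}(\mathcal{E}(P))}{\pes} = \Conf{P}{\pes}$. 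Thus the work splits into (i) verifying the occurrence-net axioms for $\mathcal{E}(P)$ and (ii) matching the two relations.

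For (i) I would check the axioms directly against the shape of the conditions. Each $(a,A)\in B$ satisfies $\pre{(a,A)} = \{a\}$ when $a \neq \bot$ and $\pre{(a,A)} = \emptyset$ when $a = \bot$, so presets of conditions are empty or singletons, and every initial condition $(\bot,A)\in\mathsf{c}$ has empty preset. That every condition is reachable from $\mathsf{c}$ follows by induction on the causal depth of $a$ in $P$, using that $\setcomp{e' \in E}{e' < a}$ is finite; the same finiteness gives that each $\hist{e}$ is finite. Acyclicity of $\leq_C$ and safety are then obtained once causality has been identified with $\leq$ (below), since an acyclic structure in which each condition has a unique producer cannot re-mark a condition along a firing sequence.

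For the causality part of (ii) I would argue both inclusions. If $e < e'$ then $(e,\{e'\})$ is a condition and $e \to (e,\{e'\}) \to e'$ is a path in $F$, so $e <_C e'$. Conversely, any directed path from $e$ to $e'$ factors into segments $e_i \to (e_i,A_i) \to e_{i+1}$ with $e_{i+1}\in A_i$, and the defining constraint on $B$ forces $e_i < e_{i+1}$; transitivity of $<$ then yields $e < e'$. Hence $\leq_C = \leq$, and in particular acyclicity holds.

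The main obstacle is the conflict part of (ii): showing that the inherited conflict of the net equals $\#$. To show every conflict of $P$ is a conflict of the net, I would, for each primitive $e\ \#\ e'$, exhibit a condition shared by $e$ and $e'$ so that $e\ \#_0\ e'$, and then propagate along $\leq_C$, matching this with the heredity of $\#$ with respect to $\leq$. For the reverse inclusion I would start from a net conflict $e\ \#\ e'$, unfold it through its witnesses $y\ \#_0\ y'$ with $y \leq_C e$ and $y' \leq_C e'$, translate $y\ \#_0\ y'$ back into $y\ \#\ y'$ in $P$, and close under heredity using the already-established $\leq_C = \leq$. The delicate points are to ensure that the conditions generate exactly the conflicts of $P$ and no spurious ones — in particular that two concurrent events never end up sharing a precondition — and that net-level inheritance and \pes-level heredity line up precisely. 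Once both relations are matched, $\mathcal{P}(\mathcal{E}(P)) = P$ and the configuration equality follows from Proposition~\ref{pr:on-to-pes}.
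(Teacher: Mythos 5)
The paper itself contains no proof of this proposition --- it is recalled from \cite{Win:ES} and its proof is not in the appendix --- so your attempt can only be judged against the standard argument. Your strategy is exactly that standard argument: verify the occurrence-net axioms, show $\mathcal{P}(\mathcal{E}(P)) = P$, and conclude via Proposition~\ref{pr:on-to-pes}; your treatment of the axioms and of the causality relation ($\leq_C$ restricted to events equals $\leq$) is fine. The genuine gap is the conflict step, which you defer as a ``delicate point'': it cannot be carried out for the construction as printed, because the statement as printed is false. A condition $(a,A)$ requires $\CF{A}$, i.e.\ $A$ \emph{conflict-free}, and the postset of $(a,A)$ is exactly $A$. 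Hence two events that are in conflict in $P$ never occur together in any $A$, so they never share a precondition: there is no witness for $e\ \#_0\ e'$ when $e\ \#\ e'$, and your forward inclusion has nothing to exhibit. Conversely, whenever $e\neq e'$ are \emph{not} in conflict, $(\bot,\setenum{e,e'})$ is a legal condition lying in $\pre{e}\cap\pre{e'}$, so $e\ \#_0\ e'$ \emph{does} hold in the net. The induced net conflict thus contains the complement of $\#$ rather than $\#$, and your reverse step (``translate $y\ \#_0\ y'$ back into $y\ \#\ y'$'') fails as well. The condition you yourself single out --- that concurrent events must never share a precondition --- is precisely what breaks.

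Concretely: for a \pes\ with two concurrent events $e_1,e_2$, the net makes them compete for $(\bot,\setenum{e_1,e_2})$, so $\setenum{e_1,e_2}\in\Conf{P}{\pes}$ but $\setenum{e_1,e_2}\notin\Conf{\mathcal{E}(P)}{\cn}$. Worse, if $e_1 < e_3$ (whence $\neg(e_1\ \#\ e_3)$ by the \pes\ axioms), then $(\bot,\setenum{e_1,e_3})\in\pre{e_1}\cap\pre{e_3}$ gives $e_1\ \#_0\ e_3$, and inheritance along $e_1\leq_C e_3$ yields $e_3\ \#\ e_3$, so the net conflict is not irreflexive and $\mathcal{E}(P)$ is not even an occurrence net. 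The construction is only correct --- and your outline only goes through --- if $\CF{A}$ is replaced by the requirement that $A$ is a set of \emph{pairwise conflicting} events (Winskel's clique construction); then $(\bot,\setenum{e,e'})$ exists exactly when $e\ \#\ e'$, non-conflicting events never share a precondition, heredity on the two sides matches, and $\mathcal{P}(\mathcal{E}(P))=P$ follows. That this is the intended reading is confirmed by the paper's own example net for Definition~\ref{de:rpestorcn}, where the condition $(\bot,\setenum{e_1,e_2})$ joins two \emph{conflicting} events. A complete proof attempt should have detected this mismatch instead of deferring it: as written, the key step of your plan fails against the stated definition of $B$.
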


% !TEX root = main.tex

\section{Reversible causal nets and reversible prime event structures}\label{sec:rcn-and-rpes}
We now introduce %and discuss
the notion of \emph{reversible causal nets}.
%\changed{In~\cite{MMU:coordination19} a similar notion based on the unfolding of a
%net is proposed, where  all the transitions were made reversible.}
A similar notion has been proposed in~\cite{MMU:coordination19} for adding
 causally-consistent reversibility to Petri nets by making  reversible every event in
the unfolding of the net. 
In this work we deal with a generalised version of 
reversible causal nets in which
 transitions may be irreversible, i.e.,  
%not necessarily all the transitions of the occurrence net.
we do not require every transition of a net to be undoable.
%}
%with respect to a set of \emph{reversible} events. 
%It is worth to stress that the occurrence nets we reversed 
%in~\cite{MMU:coordination19} were the result of the unfolding 
%of P/T nets, while here we consider general causal nets.\todo{not quite clear what do we mean here} 

The intuition behind reversible causal nets is the following: we add special
transitions (events in the classical occurrence net terminology) to an occurrence net which, 
when executed, \emph{undo} 
the execution of other (standard) transitions. When we remove these special transitions from a
reversible causal net we obtain a standard occurrence net.

\begin{definition}\label{def:rcn}
 A \emph{reversible causal net} (\rcn)  is a tuple $R = \langle B, E, \Er, F, \mathsf{c}\rangle$ 
where $\langle B, E, F, \mathsf{c}\rangle$ is a safe net such that 
% there exists a
% subset $E^r\subseteq E$ of transitions such that
 \begin{itemize}
   \item $\Er\subseteq E$ and $\forall \er\in \Er$. $\exists!\ e\in E\setminus\Er$ such that
         $\pre{\er} = \post{e}$ and $\post{\er} = \pre{e}$, 
   \item $\forall e, e'\in E$. $\pre{e} = \pre{e'}\ \land\ \post{e} = \post{e'}\ 
         \Rightarrow\ e = e'$, 
         %\todo{If I'm not wrong, this is needed to ensure that there are not two reversing for a forward. For the forward, we are sure because $C_{\overline{E}}$ is an ON} 
   %      
   \item $\bigcup_{e\in E} (\pre{e}\cup\post{e}) = B$, and        
   \item $C_{E\setminus \Er} = \langle B, E\setminus \Er, F', \mathsf{c}\rangle$ is an occurrence net, 
         where $F'$ is the restriction of $F$ to the transitions in $E\setminus \Er$.
 \end{itemize}        
\end{definition}
The events in $\Er$ are the reversing ones and
we often say that a reversible causal net $R$ is reversible \emph{with respect to} $\Er$. 
We  write $\overline{E}$ 
 for the set of events $E\setminus \Er$ and
$C_{\overline{E}}$ instead of $C_{E\setminus \Er}$.
The first condition in Definition~\ref{def:rcn} implies that 
 each reversing event $\er\in\Er$ is associated with  a unique event $e$ that 
causes the effects that  $\er$ is intended to \emph{undo}; hence $e$ here is a
\emph{reversible} event. Moreover, 
the second condition ensures that there is an injective
mapping $h : \Er \to E$ that associates  each event  ${\er\in\Er}$ with a different 
event $e\in E$ such that $\pre{e} = \post{\er}$ and $\post{e} = \pre{\er}$, in other words, each reversible event has exactly one reversing event.
The third requirement guarantees that all conditions (places) of the net appear at least in the pre or the postset of some event  (transitions), i.e., 
there are no isolated conditions. 
The last condition ensures that the net obtained by deleting all reversing events is an
occurrence net.
%
%To ease notation we will write $R = \langle B, E, E^r, F, \mathsf{c}\rangle$ for 
%$R = \langle B, E, F, \mathsf{c}\rangle$ reversible occurrence net with respect to $E^r$.
%\todo{Not clear why we do not set $E^r$ as part of the definition}

\begin{example}\label{ex:rcn}
 We present some reversible causal nets in Figure~\ref{fig:rcn}. The reversing events are drawn in red, and
 their names are underlined.
  The events $e_1$ and $e_2$ in $R_1$ are both reversible, while 
  $e_1$ is the only  reversible event in $R_2$. In $R_3$ the events $e_1$, $e_3$ and $e_4$ are the reversible ones.

 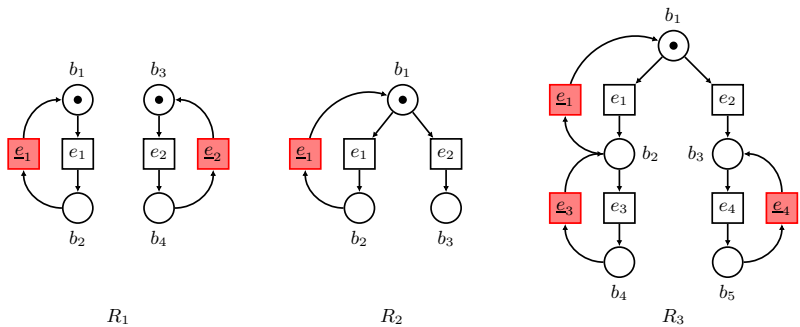
\begin{figure}[t]
 \begin{center}
  \scalebox{0.8}{
\begin{tikzpicture}
[bend angle=45, scale=.9, 
  pre/.style={<-,shorten
    <=0pt,>=stealth,>={Latex[width=1mm,length=1mm]},thick}, post/.style={->,shorten
    >=0,>=stealth,>={Latex[width=1mm,length=1mm]},thick}, rpost/.style={=>,shorten
    >=0,>=stealth,>={Latex[width=1mm,length=1mm]},thick}, place/.style={circle, draw=black,
    thick,minimum size=5mm}, transition/.style={rectangle, draw=black,
    thick,minimum size=5mm}, revtransition/.style={rectangle, draw=red,
    thick,fill=red!50,minimum size=5mm}, invplace/.style={circle,
    draw=black!0,thick}]
\node[place] (b14) at (9,0) [label=below:$b_4$] {};
\node[place] (b15) at (11,0) [label=below:$b_5$] {};
\node[place] (b2) at (-1,1) [label=below:$b_2$] {};
\node[place] (b4) at (0.5,1) [label=below:$b_4$] {};
\node[place] (b6) at (4.2,1)  [label=below:$b_2$] {};
\node[place] (b7) at (5.8,1)  [label=below:$b_3$]{};
\node[place] (b12) at (9,2) [label=right:$b_2$] {};
\node[place] (b13) at (11,2) [label=left:$b_3$] {};
\node[place,tokens=1] (b1) at (-1,3) [label=above:$b_1$] {};
\node[place,tokens=1] (b3) at (0.5,3) [label=above:$b_3$] {};
\node[place,tokens=1] (b5) at (5,3)  [label=above:$b_1$] {};
\node[place,tokens=1] (b11) at (10,4) [label=above:$b_1$] {};
\node[transition] (e1) at (-1,2) {$e_1$}
edge[pre] (b1)
edge[post] (b2);
\node[transition] (e2) at (0.5,2) {$e_2$}
edge[pre] (b3)
edge[post] (b4);
\node[revtransition] (re1) at (-2,2) {$\underline{e}_1$}
edge[pre, bend right] (b2)
edge[post, bend left] (b1);
\node[revtransition] (re2) at (1.5,2) {$\underline{e}_2$}
edge[pre, bend left] (b4)
edge[post, bend right] (b3);
\node[transition] (e3) at (4.2,2) {$e_1$}
edge[pre] (b5)
edge[post] (b6);
\node[revtransition] (re3) at (3.2,2) {$\underline{e}_1$}
edge[pre, bend right] (b6)
edge[post, bend left] (b5);
\node[transition] (e4) at (5.8,2) {$e_2$}
edge[pre] (b5)
edge[post] (b7);
\node[transition] (e7) at (9,3) {$e_1$}
edge[pre] (b11)
edge[post] (b12);
\node[transition] (e8) at (11,3) {$e_2$}
edge[pre] (b11)
edge[post] (b13);
\node[revtransition] (re7) at (8,3) {$\underline{e}_1$}
edge[pre, bend right] (b12)
edge[post, bend left] (b11);
\node[transition] (e9) at (9,1) {$e_3$}
edge[pre] (b12)
edge[post] (b14);
\node[revtransition] (re9) at (8,1) {$\underline{e}_3$}
edge[pre, bend right] (b14)
edge[post, bend left] (b12);
\node[transition] (e9) at (11,1) {$e_4$}
edge[pre] (b13)
edge[post] (b15);
\node[revtransition] (re9) at (12,1) {$\underline{e}_4$}
edge[pre, bend left] (b15)
edge[post, bend right] (b13);
\node (n1) at (-.25,-1) {$R_1$};
\node (n2) at (4.8,-1) {$R_2$};
\node (n3) at (10,-1) {$R_3$};
\end{tikzpicture}}
 \end{center}
 \caption{Some reversible causal nets}
 \label{fig:rcn}
 \end{figure}
\end{example}

We prove that the set of reachable markings of a reversible causal net is not influenced 
by performing reversing events.
\begin{proposition}\label{pr:reachable-markings-of-rcn}
 Let $R = \langle B, E, \Er, F, \mathsf{c}\rangle$ be a \rcn.
 Then $\reachMark{R} = \reachMark{C_{\overline{E}}}$.
\end{proposition}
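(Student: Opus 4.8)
The plan is to establish the two inclusions separately. The inclusion $\reachMark{C_{\overline{E}}} \subseteq \reachMark{R}$ is routine: the net $C_{\overline{E}}$ is obtained from $R$ by deleting the reversing events $\Er$ and restricting the flow, while keeping the same conditions $B$ and the same initial marking $\mathsf{c}$. Forward events therefore have identical pre- and post-sets in the two nets, and since enabling depends only on these, every firing sequence of $C_{\overline{E}}$ is verbatim a firing sequence of $R$. Hence each marking reachable in $C_{\overline{E}}$ is reachable in $R$.

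For the converse $\reachMark{R} \subseteq \reachMark{C_{\overline{E}}}$ I would argue by induction on the length of a firing sequence $\sigma \in \firseq{R}{\mathsf{c}}$, proving $\lead{\sigma} \in \reachMark{C_{\overline{E}}}$. The conceptual point is that a reversing event $\un e$, with $\pre{\un e} = \post{e}$ and $\post{\un e} = \pre{e}$ for its unique associated forward event $e = h(\un e)$, has on the token count exactly the opposite effect of $e$; so firing $\un e$ retracts one occurrence of $e$ and should land on a marking still reachable by forward events alone. To turn this into a proof I rely on the standard correspondence, for the occurrence net $C_{\overline{E}}$, between reachable markings and finite configurations: a configuration $X \in \Conf{C_{\overline{E}}}{\cn}$ determines the cut $\markC(X) = \setcomp{b \in B}{(\pre{b} = \emptyset \lor \pre{b} \subseteq X) \land \post{b} \cap X = \emptyset}$, every reachable marking equals $\markC(X)$ for some such $X$ (via Proposition~\ref{pr:states-are-conf}), and conversely $\markC(X)$ is reached by firing the events of $X$ in any order compatible with $\leq_C$.

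The base case is $\markC(\emptyset) = \mathsf{c}$, and the forward step $t \in \overline{E}$ is the usual fact that firing an enabled forward event turns $\markC(X)$ into $\markC(X \cup \setenum{t})$, with $X \cup \setenum{t}$ again a configuration. The crux is the reversing step. Assume $\lead{\sigma'} = \markC(X)$ for a configuration $X$ and that $\un e$ fires next. Enabledness gives $\post{e} = \pre{\un e} \subseteq \markC(X)$, which forces $e \in X$; moreover $e$ is maximal in $X$. Indeed, were $e < e'$ for some $e' \in X$, the first two edges of an $F$-path from $e$ to $e'$ would exhibit a condition $b \in \post{e}$ and a consumer $e'' \in \post{b}$ with $e'' \leq_C e'$, and left-closure of $X$ would put $e'' \in X$, so $\post{b} \cap X \neq \emptyset$, contradicting $b \in \markC(X)$. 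Hence $X \setminus \setenum{e}$ is left-closed and conflict free, i.e.\ a configuration, and a direct comparison of cuts — using that in an occurrence net $e$ is the unique producer of each $b \in \post{e}$ and, by conflict-freeness, the unique consumer in $X$ of each $b \in \pre{e}$ — yields $\markC(X \setminus \setenum{e}) = (\markC(X) \setminus \post{e}) \cup \pre{e}$, which is exactly $\lead{\sigma}$. Thus $\lead{\sigma} = \markC(X \setminus \setenum{e})$ is reachable in $C_{\overline{E}}$, closing the induction.

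The main obstacle is this reversing step, and within it the maximality claim together with the cut-shift computation; both hinge on the occurrence-net discipline (singleton presets, left-closed conflict-free configurations, acyclicity so $\pre{e} \cap \post{e} = \emptyset$) and on the defining condition $\pre{\un e} = \post{e}$, $\post{\un e} = \pre{e}$. Everything else is bookkeeping.
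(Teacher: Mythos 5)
Your proof is correct, but it follows a genuinely different route from the paper's. The paper argues at the level of firing sequences: it first shows that a reversing event $\er$ can only fire after its associated forward event $h(\er)$, then permutes the sequence so that $h(\er)$ occurs immediately before $\er$ (justified by the observation that $\post{h(\er)}$ is still fully marked when $\er$ fires, so $h(\er)$ is concurrent with everything fired in between), and finally cancels the adjacent pair $h(\er)\,\er$, which is a no-op on markings; iterating this elides all reversing events, giving $\reachMark{R}\subseteq\reachMark{C_{\overline{E}}}$. You instead maintain a marking-level invariant --- every marking reachable in $R$ is the cut $\markC(X)$ of a finite configuration $X$ of $C_{\overline{E}}$ --- and handle a reversing step locally by proving that the reversed event $e$ is maximal in $X$ and that the cut shifts by exactly $(\markC(X)\setminus\post{e})\cup\pre{e}$. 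Your maximality lemma is the configuration-level counterpart of the paper's concurrency claim, and both ultimately rest on the same occurrence-net discipline (singleton presets, left-closure, conflict-freeness, acyclicity). What the paper's approach buys is that it literally constructs a forward-only firing sequence reaching the same marking, which is exactly the statement of Corollary~\ref{cor:a}; what yours buys is that it replaces the permutation argument --- whose key concurrency assertion the paper states without proof --- with self-contained cut bookkeeping, and it still yields Corollary~\ref{cor:a} at no extra cost, since $\markC(X)$ is reached by firing the events of $X$ in any order compatible with $\leq_C$.
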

A consequence of the above proposition is the following corollary, which 
establishes
that each marking can be reached by using just \emph{forward events}.
\begin{corollary}\label{cor:a}
 Let $C = \langle B, E, \Er, F, \mathsf{c}\rangle$ be a \rcn\
 and $\sigma$ be a \fs. Then, there exists a \fs\ $\sigma'$ such that
 $X_{\sigma'}\subseteq \overline{E}$ and $\lead{\sigma} = \lead{\sigma'}$.
 %\todo{hereafter, {$\overline E$} instead of $E \setminus \Er$?}
\end{corollary}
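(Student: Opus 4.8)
The plan is to derive the statement directly from Proposition~\ref{pr:reachable-markings-of-rcn}, which already identifies the reachable markings of the given \rcn\ with those of its underlying occurrence net $C_{\overline{E}}$; the corollary is essentially the translation of that set equality into the language of firing sequences. First I would observe that, since $\sigma \in \firseq{R}{\mathsf{c}}$, its final marking $\lead{\sigma}$ is by definition a reachable marking of $R$, i.e.\ $\lead{\sigma} \in \reachMark{R}$.

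Applying Proposition~\ref{pr:reachable-markings-of-rcn} gives $\reachMark{R} = \reachMark{C_{\overline{E}}}$, and hence $\lead{\sigma} \in \reachMark{C_{\overline{E}}}$. Unfolding the definition of reachability for $C_{\overline{E}}$, there exists a firing sequence $\sigma' \in \firseq{C_{\overline{E}}}{\mathsf{c}}$ with $\lead{\sigma'} = \lead{\sigma}$. Since every transition occurring in $\sigma'$ is a transition of $C_{\overline{E}}$, and the transitions of $C_{\overline{E}}$ are exactly those in $\overline{E} = E \setminus \Er$, we immediately obtain $X_{\sigma'}\subseteq \overline{E}$, as required.

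The only point that needs care — and the main, albeit minor, obstacle — is that $\sigma'$ is produced as a firing sequence of $C_{\overline{E}}$, whereas the statement asks for a firing sequence of $R$. Here I would note that $C_{\overline{E}}$ is obtained from $R$ merely by deleting the reversing events in $\Er$ together with their incident arcs, so that for every forward event $e\in\overline{E}$ the preset $\pre{e}$ and postset $\post{e}$ computed in $C_{\overline{E}}$ coincide with those computed in $R$. Consequently each enabling and firing step of $\sigma'$ is equally valid in $R$, and $\sigma'$ is a genuine firing sequence of $R$ that uses only forward events. This completes the argument; no heavier machinery is needed, as all the substantive work has been absorbed into Proposition~\ref{pr:reachable-markings-of-rcn}.
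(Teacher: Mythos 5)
Your proof is correct and follows exactly the route the paper intends: the paper presents Corollary~\ref{cor:a} as an immediate consequence of Proposition~\ref{pr:reachable-markings-of-rcn}, and your argument simply unfolds that implication (reachability of $\lead{\sigma}$ in $R$, the set equality $\reachMark{R}=\reachMark{C_{\overline{E}}}$, and the observation that a firing sequence of $C_{\overline{E}}$ is also one of $R$ since presets and postsets of forward events coincide). The last point, which you rightly flag as the only delicate step, is indeed the detail the paper leaves implicit, and your handling of it is sound.
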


\begin{definition}\label{de:conf-rcn}
 Let $R = \langle B, E, \Er, F, \mathsf{c}\rangle$ be a \rcn, and
  $X \subseteq \overline{E}$ be a subset of forward events. 
 Then,  $X$ is a configuration of $R$ iff $X$ is a configuration of $C_{\overline{E}}$.
 The set of configurations of $R$ is as usually denoted with $\Conf{R}{\rcn}$.
\end{definition}
A configuration of a reversible causal net $R$ with respect to $\Er$ is a subset of  
$E\setminus \Er$;
% This notion does not consider as a part of a configuration the 
%reversing events (those in $E^r$) that may have been executed to reach the 
%marking associated to the configuration}
consequently, the reversing events (i.e., the ones in $\Er$) that may have been executed to reach a particular 
marking are not considered as part of the configuration.
Observe that, differently from 
occurrence net, $\states{R} \neq \Conf{R}{\rcn}$ because the former may contain also reversing events.
However, as a consequence of Corollary \ref{cor:a}, there is no loss of information.

We show how to construct a reversible causal net from an occurrence net, once we have
identified the events to be \emph{reversed}.

\begin{definition}\label{de:constructing-a-reversible-causal-net}
 Let $C = \langle B, E, F, \mathsf{c}\rangle$ be an occurrence net and $\Er \subseteq E$ be a set 
 of reversible events. Define $\mathsf{R}(C) = \langle B, \hat{E}, \Er\times\setenum{r}, \hat{F}, \mathsf{c}\rangle$ 
% \todo{Since reversible events are not part of the definition, it  turns out that   $\mathsf{R}(C)$ can be reversible for different sets of events}
 be the net where $\hat{E}$ and $\hat{F}$ are defined as follows:
  \begin{itemize}
   \item $\hat{E} = E\times\setenum{\fe} \ \cup\  \Er\times\setenum{\re}$, and
   \item 
%$\hat{F} = \setcomp{(b,e)}{e = (e',\fe)\ \land\ (b,e')\in F} \cup
%                    \setcomp{(b,e)}{e = (e',\re)\ \land\ (e',b)\in F} \cup
%                    \setcomp{\changed{(b,e)}{(e,b)}}{e = (e',\re)\ \land\ (b,e')\in F} \cup
%                    \setcomp{\changed{(b,e)}{(e,b)}}{e = (e',\fe)\ \land\ (e',b)\in F}$
$\hat{F} = 
		\setcomp{(b, (e,\fe))}{(b,e)\in F} \quad\cup\quad \setcomp{ ((e,\fe),b)}{(e,b)\in F}\quad\cup$ \phantom{asdfasdfas} \phantom{asdfa}
		$\setcomp{(b, (e,\re))}{(e,b)\in F}\quad\cup\quad \setcomp{ ((e,\re),b)}{(b,e)\in F}$
%   
%   $\forall b\in B.\ e\in \hat{E}$. 
%   
%         $\hat{F}(b,e) = \left\{\begin{array}{lcl} 
%                                  F(b,e') & \ \mathit{if}\ & e = (e',\fe) \\
%                                  F(e',b) & \ \mathit{if}\ & e = (e',\re) \\
%                                \end{array}\right.$ and 
%         $\hat{F}(e,b) = \left\{\begin{array}{lcl} 
%                                  F(b,e') &\ \mathit{if}\ & e = (e',\re) \\
%                                  F(e',b) &\ \mathit{if}\ & e = (e',\fe) \\
%                                \end{array}\right.$.
  \end{itemize}
  The mapping $h : \Er\times\setenum{\re} \to E\times\setenum{\fe}$ is defined as
  $h(e,\re) = (e,\fe)$. 
\end{definition}
The construction above simply adds as many events (transitions) as those to
be reversed. The preset of each added event is the postset of the
corresponding event to be reversed, and its postset is defined as the preset of the event to 
be reversed.
% . The $h : E'\times\setenum{r} \to
%\hat{E}$ is defined as $h(e,r) = (e,f)$. Clearly $h(E'\times\setenum{r}) = E\times\setenum{f}$.
The events in $\Er\times\setenum{\re}$ are the reversing events.

\begin{proposition}\label{prop:rev_net}
 Let $C = \langle B, E, F, \mathsf{c}\rangle$ be an occurrence net, $\Er\subseteq E$ be the subset of
 reversible events, and 
 $\mathsf{R}(C) = \langle B, \hat{E}, \Er\times\setenum{\re}, \hat{F}, \mathsf{c}\rangle$ be the net 
  in Definition \ref{de:constructing-a-reversible-causal-net}. Then, $\mathsf{R}(C)$ is a reversible causal net with respect to
 $\Er\times\setenum{\re}$.
\end{proposition}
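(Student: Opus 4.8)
The plan is to check that the tuple $\mathsf{R}(C) = \langle B,\hat E,\Er\times\setenum{\re},\hat F,\mathsf c\rangle$ satisfies Definition~\ref{def:rcn}: that $\langle B,\hat E,\hat F,\mathsf c\rangle$ is a safe net and that the four structural clauses hold with respect to $\Er\times\setenum{\re}$. The first thing I would do is read off from $\hat F$ the pre- and post-sets of the two kinds of events. Unwinding Definition~\ref{de:constructing-a-reversible-causal-net} gives, for a forward event, $\pre{(e,\fe)} = \pre e$ and $\post{(e,\fe)} = \post e$ (the sets computed in $C$), and for a reversing event the swapped sets $\pre{(e,\re)} = \post e$ and $\post{(e,\re)} = \pre e$. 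These identities drive every later step and make precise that $(e,\re)$ consumes exactly what $(e,\fe)$ produces and restores exactly what $(e,\fe)$ consumed. Net-hood itself ($B\cap\hat E=\emptyset$, $\hat F$ correctly typed, non-spontaneity) is then immediate from the construction.

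Next I would dispatch the structural clauses. The subnet obtained by deleting the reversing events has event set $E\times\setenum{\fe}$ and, by the identities above, is isomorphic to $C$ via $e\mapsto(e,\fe)$ with the identity on $B$; since the occurrence-net axioms are invariant under isomorphism and $C$ is an occurrence net, the last clause holds. The third clause, $\bigcup_{x\in\hat E}(\pre x\cup\post x)=B$, follows because the forward events carry exactly the flow of $C$, so this union already equals $\bigcup_{e\in E}(\pre e\cup\post e)=B$. For the first clause, the event matched to $(e,\re)$ is $(e,\fe)$, for which $\pre{(e,\re)}=\post{(e,\fe)}$ and $\post{(e,\re)}=\pre{(e,\fe)}$ hold by the identities; its uniqueness reduces to the second clause. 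For the second clause I split on the kinds of the two events: two forward, or two reversing, events with identical flow force the underlying $C$-events to share preset and postset, hence to coincide (in an occurrence net a condition has at most one event in its preset); the mixed case, a forward $(e,\fe)$ against a reversing $(e',\re)$, is excluded by acyclicity, since $\pre e=\post{e'}$ and $\post e=\pre{e'}$ would yield both $e<_C e'$ and $e'<_C e$, contradicting antisymmetry of $\leq_C$.

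The real work is safeness. The strategy is to prove $\reachMark{\mathsf{R}(C)}\subseteq\reachMark{C}$ (identifying $B$ and $\mathsf c$ across the two nets), after which safeness is immediate because $C$, being an occurrence net, is safe and so all its reachable markings are sets. I would argue by induction on the length of a firing sequence of $\mathsf{R}(C)$. A forward step $m\trans{(e,\fe)}m'$ is literally a step $m\trans{e}m'$ of $C$ and so stays inside $\reachMark{C}$. For a reversing step $m\trans{(e,\re)}m'$ with $m\in\reachMark{C}$, I pass to the configuration $X\subseteq E$ of $C$ that determines $m$ (Propositions~\ref{pr:states-are-conf} and~\ref{pr:on-to-pes}). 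The enabling condition $\post e\subseteq m$ says that every $b\in\post e$ is present, i.e.\ produced by $e$ (so $e\in X$) and consumed by no event of $X$; by left-closedness this makes $e$ maximal in $X$, so $X\setminus\setenum{e}$ is again conflict-free and left-closed, i.e.\ a configuration of $C$, and its marking is exactly $m'=m-\post e+\pre e$. Hence $m'\in\reachMark{C}$, closing the induction. (This is the special case of Proposition~\ref{pr:reachable-markings-of-rcn}, which cannot yet be invoked since we are in the process of establishing that $\mathsf{R}(C)$ is an \rcn.)

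I expect the main obstacle to be exactly this last point: showing that whenever a reversing event is enabled, the event it undoes is maximal in the current configuration, so that deleting it yields a genuine configuration of $C$ with the expected marking. Everything else is bookkeeping with the pre/post identities.
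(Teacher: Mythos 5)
Your proof is correct, and on the only part the paper argues in any detail --- safeness --- it takes a genuinely different route. The paper dismisses the structural clauses of Definition~\ref{def:rcn} with ``satisfied by construction'' and proves safeness by contradiction: if a condition $b$ ever received a second token, that token would have to come from a reversing event $(e'',\re)\in\pre{b}$, i.e.\ with $e''\in\post{b}$ in $C$; but $(e'',\re)$ can fire only after $(e'',\fe)$ has fired, whereas $b$ being still marked means that no $(e',\fe)$ with $e'\in\post{b}$ --- in particular $(e'',\fe)$ --- has fired, a contradiction. You instead prove the stronger invariant $\reachMark{\mathsf{R}(C)}\subseteq\reachMark{C}$ by induction on firing sequences, the crux being that an enabled reversing event $(e,\re)$ forces $e$ to be maximal in a configuration of $C$ realising the current marking, so that deleting $e$ yields a configuration realising the successor marking. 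The mathematical core is the same in both arguments (a reversing event presupposes that its forward counterpart has fired and has not been ``outrun'' by consumers), but your induction makes explicit what the paper leaves implicit, covers a case the paper silently skips (the second token coming from refiring $(e,\fe)$ itself rather than from a reversing event), and yields the content of Proposition~\ref{pr:reachable-markings-of-rcn} for nets of the form $\mathsf{R}(C)$ as a by-product; the paper's local argument is shorter. One caveat: your final step quietly uses the converse of Proposition~\ref{pr:states-are-conf} --- that every finite configuration of an occurrence net is realisable by a firing sequence --- which the paper never states, though it is standard; you could avoid it entirely by permuting the firing sequence of $C$ so that the maximal event $e$ fires last, exactly the rearrangement used in the paper's proof of Proposition~\ref{pr:reachable-markings-of-rcn}. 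Finally, your explicit treatment of the structural clauses (the isomorphism of the forward subnet with $C$, and especially the acyclicity argument excluding a forward and a reversing event with swapped pre/post sets) goes beyond the paper's one-liner and is a welcome addition rather than a deviation.
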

%\begin{proof}
% We just have to prove that $\mathsf{R}(C)$ is a safe net; the other conditions are satisfied
% by construction. 
% %
% First we observe that if $b\not\in\mathsf{c}$ and 
% $\pre{b}$ is not a singleton in $\mathsf{R}(C)$ then
% $\pre{b}$ contains at most one event of the form $(e,f)$, and it contains 
% at least one of the form $(e',r)$, and these are originated by the events in
% $\post{b}$ in $C$.
% In the case $b\in\mathsf{c}$ and $\pre{b}$ is not empty, then again 
% $\pre{b}$ contains only events of the form $(e',r)$, and these are originated by the 
% events in $\post{b}$ in $C$.
% Assume it is not, and assume that $b\in B$ is the condition which receives a token when it 
% is already marked. As $C$ is a causal net, if the condition is marked then the event $e\in E$
% such that $b\in\post{e}$ has been executed and none of the events $e'\in E$ such that 
% $e'\in\post{b}$ (if any) have yet been executed. Thus in $\mathsf{R}(C)$ the
% event $(e,f)$ has been executed and none of the events $(e',f)\in\post{b}$ has been executed
% yet. But to be marked again an event of the form $(e'',r) \in \pre{b}$ should have occurred,
% but this is impossible as none of the events $(e',f)\in\post{b}$ have been executed, and among 
% these also $(e'',f)$, contradicting the fact that the condition $b$ receives another token. 
% %\qed
%\end{proof}
%
\begin{example}\label{ex:reversing}
 Consider the occurrence net $C_1$ in Figure~\ref{fig:occ-nets}, and assume that both events are reversible. The 
 net $R_1$ in Figure~\ref{fig:rcn} is $\mathsf{R}(C_1)$ (after renaming events with
 the convention that $(e,\fe)$ is named as $e$ and $(e,\re)$ as $\underline{e}$). The \rcn\ $R_3$ in Figure~\ref{fig:rcn} is 
 $\mathsf{R}(C_4)$, with $C_4$ in Figure~\ref{fig:occ-nets} and the set of reversible events  $\Er = \{e_1,e_2,e_4\}$.
\end{example}

\subsection{From \rcn\ to r\pes}
As it is usually done for causal nets, % that are associated to prime event structure, 
we now associate each reversible causal net with a reversible prime event structure.
Given an \rcn\ $R = \langle B, E, \Er, F, \mathsf{c}\rangle$, we denote the
set of events $\setcomp{e'}{e  <_R e'}$ by $\future{e}$. Observe that this set is not necessarily conflict-free.

\begin{proposition}\label{pr:rce-to-rpes}
 Let $R = \langle B, E, \Er, F, \mathsf{c}\rangle$ be a reversible causal net with respect to $\Er$, 
 then $\mathcal{C}_{r}(R) = (E', \anR', <, \#, \prec, \triangleright)$ is its associated r\pes, where
 \begin{itemize}
  \item $E' = \overline{E}$ and $\anR'= h(\Er)$,
  \item $<$ is %the transitive closure of the relation $<'$ defined on the occurrence net
        %$C_{\overline{E}}$ as $e <' e'$ whenever $\post{e} \cap \pre{e'}\neq \emptyset$, 
         $<_{C_{\overline{E}}}$. %restricted on the elements in $\overline E$}
  \item $\#$ is the conflict relation defined on the occurrence net $C_{\overline{E}}$,
  \item $e\ \triangleright\ \underline{e'}$ whenever $e\in \future{e'}$, 
%        either $\exists \hat{e} < e$ and 
%        $\post{\hat{e}}\cap\post{e'}\neq\emptyset$ or $\post{e}\cap\post{e'}\neq\emptyset$, 
%        with $\underline{e'}\in \underline{E'}$ and 
  \item $e\ \prec\ \underline{e'}$ whenever $e = e'$, and
  \item $\ll = <$.         
  %\todo{Should we say $\ll = <$} Done
 \end{itemize}
%is a r\pes.
\end{proposition}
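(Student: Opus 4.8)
The plan is to verify that $\mathcal{C}_{r}(R)$ meets every clause of Definition~\ref{de:rpes}, reducing each requirement to a property of the underlying occurrence net $C_{\overline{E}}$. The last clause of Definition~\ref{def:rcn} guarantees that $C_{\overline{E}} = \langle B, \overline{E}, F', \mathsf{c}\rangle$ is an occurrence net, so by Proposition~\ref{pr:on-to-pes} the triple $(\overline{E}, \leq_{C_{\overline{E}}}, \#)$ is a \pes; since every \pes\ is a p\pes, the triple $(E', <, \#) = (\overline{E}, <_{C_{\overline{E}}}, \#)$ is a p\pes, which discharges the first requirement of an r\pes.

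Next I would treat the reversible events and the reverse causality relation. By the first two clauses of Definition~\ref{def:rcn} the associated map $h$ is an injection of $\Er$ into $E\setminus\Er = \overline{E}$, hence $\anR' = h(\Er)\subseteq E'$ as required, and $\underline{\anR'}$ is a fresh disjoint copy. Since $e \prec \underline{e'}$ holds exactly when $e = e'$, we get $\prec\ \subseteq E'\times\underline{\anR'}$, the axiom $\anr\prec\underline{\anr}$ holds for every $\anr\in\anR'$ by construction, and $\setcomp{e}{e\prec\underline{\anr}} = \setenum{\anr}$ is a singleton, hence finite and, by irreflexivity of $\#$, conflict-free.

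The step I expect to require the most care is the prevention relation, where one must check $\triangleright\ \subseteq E'\times\underline{\anR'}$, i.e.\ that only forward events occur on the left. Here it is essential that $\future{e'}$ be read as the forward causal future of $e'$ inside $C_{\overline{E}}$, namely $\setcomp{e\in\overline{E}}{e' <_{C_{\overline{E}}} e}$, rather than through the full relation $<_R$. Indeed $<_R$ is not a partial order: every reversible event $e$ together with its reversing event yields a cycle $e \to \post{e} = \pre{\underline{e}} \to \underline{e} \to \post{\underline{e}} = \pre{e} \to e$, so reading $\future{e'}$ literally would place reverse events, and through such cycles even causally earlier forward events, into $\future{e'}$, and would even give $\anr <_R \anr$, breaking both the typing and the clause $\anr\not\triangleright\underline{\anr}$. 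With the forward-future reading, $e \triangleright \underline{e'}$ holds exactly when $e\in\overline{E}$ and $e' <_{C_{\overline{E}}} e$, so $\triangleright\ \subseteq E'\times\underline{\anR'}$ is correctly typed. The compatibility clause $e \prec \underline{\anr} \Rightarrow \neg(e \triangleright \underline{\anr})$ then reduces, since $e=\anr$, to $\anr\notin\future{\anr}$, which is immediate from the irreflexivity of $<_{C_{\overline{E}}}$.

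Finally I would establish $\ll\ =\ <$, which is the last displayed clause of the proposition. By Definition~\ref{de:rpes}, $e\ll e'$ requires $e<e'$, so $\ll\ \subseteq\ <$; conversely, if $e<e'$ then the side condition ``$e\in\anR' \Rightarrow e' \triangleright \underline{e}$'' is automatically satisfied, because $e' \triangleright \underline{e}$ means precisely $e' \in \future{e}$, i.e.\ $e<e'$. Hence $<\ \subseteq\ \ll$, the two relations coincide, and $\ll$ inherits transitivity from the partial order $<_{C_{\overline{E}}}$. Heredity of $\#$ with respect to $\ll\ =\ <$ is then exactly the heredity of conflict in the \pes\ $(\overline{E}, \leq_{C_{\overline{E}}}, \#)$ granted by Proposition~\ref{pr:on-to-pes}, which completes the verification.
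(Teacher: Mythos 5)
Your proof is correct and follows essentially the same clause-by-clause verification as the paper's own proof: the p\pes\ structure is inherited from the occurrence net $C_{\overline{E}}$, the relation $\prec$ is singleton-generated hence finite and conflict-free, $\ll$ is shown to coincide with $<$, and conflict heredity then reduces to heredity in the underlying \pes. Your one genuine addition is the explicit observation that $\future{e'}$ must be computed in $C_{\overline{E}}$ rather than literally via $<_R$ on the full net (where every reversible event lies on a cycle through its reversing event, so that $e\in\future{e}$ and even causally earlier events can enter $\future{e'}$); the paper's proof silently assumes this forward reading when it asserts $e\not\in\future{e}$, so your remark repairs an imprecision in the surrounding definitions rather than revealing any gap in your own argument.
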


%\begin{proof}
% First of all it is quite clear that $(E', <, \#)$ is a p\pes\ (if we close $<$ reflexively we get
% indeed a \pes), as it is obtained by $C_{\overline{E}}$.
% The relation $\prec \subseteq E'\times E''$ satisfies the requirement that $e \prec \underline{e}$ 
% and that $\setcomp{e'}{e'\prec\underline{e}}$ is finite for each $e\in E''$ as it contains 
% just $e$.
% If $e \prec \underline{e}$ then not $e \triangleright \underline{e}$ as $e\not\in\future{e}$.
% The sustained causation relation coincides with the relation $<$ hence the conflict relation
% is inherited along this relation. Furthermore, for $e\in E''$, if $e < e'$ for some $e'$, 
% then we have that $e' \triangleright e$, as required.
% %
% We can then conclude that $\mathcal{C}(R)$ is  a r\pes.
% %\qed
%\end{proof}
%
%

\begin{example}
 Consider the reversible causal net $R_3$ in Figure~\ref{fig:rcn}. The associated r\pes\ has
 the events $\setenum{e_1,e_2,e_3,e_4}$ and the reversible events $\setenum{e_1,e_3,e_4}$. 
 The causality relation of the associated p\pes\ 
 is $e_1 < e_3$, $e_2 < e_4$, the 
 conflict relation is \emph{generated} by $e_1 \# e_2$, and it is inherited along $\ll$, which 
 coincides with $<$. The reverse causality stipulates that $e_1 \prec \underline{e}_1$, 
 $e_3 \prec \underline{e}_3$ and $e_4 \prec \underline{e}_4$ and finally
 $e_3 \triangleright \underline{e}_1$, as to be allowed to undo $e_1$ it is necessary to
 undo $e_3$ first.
\end{example}

The following result states that the r\pes\ associated to a reversible causal net is 
causal, hence cause-respecting.
\begin{proposition}\label{prp:rpes_respect}
 Let $R = \langle B, E, \Er, F, \mathsf{c}\rangle$ be a reversible causal net with respect to $\Er$ and 
 $\mathcal{C}_{r}(R) = (E', \anR', <, \#, \prec, \triangleright)$ be the associated r\pes. 
 Then $\mathcal{C}_{r}(R)$ is a %cause-respecting and 
 causal r\pes.
\end{proposition}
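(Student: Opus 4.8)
The plan is to lean on Proposition~\ref{pr:rce-to-rpes}, which already establishes that $\mathcal{C}_{r}(R)$ is a bona fide r\pes\ and fixes all of its components; it then remains only to check the two biconditionals in the definition of a causal r\pes\ (Definition~\ref{de:cr-and-causal-rpes}) against those components. Concretely, for every $e\in E'=\overline{E}$ and every $\anr\in\anR'=h(\Er)\subseteq\overline{E}$ I must verify that (i)~$e\prec\underline{\anr}$ iff $e=\anr$, and (ii)~$e\triangleright\underline{\anr}$ iff $\anr<e$, where $<$ is the causality $<_{C_{\overline{E}}}$ of the underlying occurrence net.

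Condition~(i) is read off directly from the construction: Proposition~\ref{pr:rce-to-rpes} defines $\prec$ so that $e\prec\underline{e'}$ holds exactly when $e=e'$, which is literally the reverse-causality clause of causality. In particular $\setcomp{e}{e\prec\underline{\anr}}$ is the singleton $\setenum{\anr}$, so the presence of $\anr$ is both necessary and sufficient to undo $\anr$, with no further reverse dependencies.

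For condition~(ii) I would unfold the definition of $\triangleright$: by Proposition~\ref{pr:rce-to-rpes} we have $e\triangleright\underline{\anr}$ iff $e\in\future{\anr}$, so the claim reduces to the set identity $\future{\anr}=\setcomp{e}{\anr<e}$, i.e.\ that $\future{\anr}$ is exactly the collection of forward causal successors of $\anr$ in $C_{\overline{E}}$. Granting this, $e\triangleright\underline{\anr}$ iff $\anr<e$ follows at once, completing the verification that $\mathcal{C}_{r}(R)$ is causal. Cause-respecting then comes for free, since $\ll=<$ in the construction makes the implication $e<e'\Rightarrow e\ll e'$ trivial.

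The delicate point, and where I expect the genuine argument to sit, is the precise reading of $\future{\anr}$. Its intended meaning is the set of \emph{forward} successors of $\anr$, computed in $C_{\overline{E}}$, and this must be distinguished from the transitive closure of the full flow relation $F$ of $R$, which is \emph{not} acyclic. Indeed each reversing event in $\Er$ consumes exactly the tokens produced by its forward counterpart and restores those the counterpart had consumed, so by chasing $F$ one can leave a forward event, pass through a reversing event, and arrive at a forward event that is merely concurrent with the first, or even in conflict with it; taking the full closure would then spuriously place such an event in $\future{\anr}$ and break causality (as happens, e.g., for the conflicting events $e_1,e_2$ of $R_3$, where $e_1$ reaches $e_2$ through $\underline{e}_1$). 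The essential step is therefore to confirm that the reversing events contribute no new relations among the forward events $\overline{E}$, so that $\future{\anr}=\setcomp{e}{\anr<_{C_{\overline{E}}}e}$. This is consistent with the remark that $\future{\anr}$ need not be conflict-free and with the clause $\ll=<$; once $\future{}$ is pinned to forward successors, both biconditionals are immediate.
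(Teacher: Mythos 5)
Correct, and essentially the paper's own argument: the paper's proof is likewise a direct inspection of the construction in Proposition~\ref{pr:rce-to-rpes}, reading off $e \prec \underline{\anr}$ iff $e = \anr$ from the definition of $\prec$, getting $e \triangleright \underline{\anr}$ iff $\anr < e$ from $\triangleright$ being defined via $\future{\anr}$, and noting that $\ll$ coincides with $<$ so that cause-respecting is automatic. Your closing discussion of how $\future{\anr}$ must be read---as the set of causal successors in $C_{\overline{E}}$ rather than via the transitive closure of the full (cyclic) flow relation of $R$, which would wrongly relate, e.g., the conflicting $e_1, e_2$ of $R_3$ through $\underline{e}_1$---makes explicit a point the paper's one-line proof silently assumes, and your resolution is exactly the reading the paper uses when it treats $e \in \future{e'}$ as synonymous with $e' < e$.
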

%\begin{proof}
% Easy inspection of the construction in Proposition \ref{pr:rce-to-rpes}.
% The sustained causality $\ll$ clearly coincides with $<$.
% If $e \prec \underline{e'}$ then $e' = e$ and by construction if $e \triangleright \underline{e'}$ 
% then $e' < e$ as $e\in \future{e'}$.
% %\qed
%\end{proof}

We show that each configuration of a \rcn\ is a configuration of the corresponding
r\pes,  and vice versa.
\begin{theorem}\label{th:rnctorpes-conf-correspond}
 Let $R = \langle B, E, \Er, F, \mathsf{c}\rangle$ be a reversible causal net with respect to $\Er$ and 
 $\mathcal{C}_{r}(R) = (E', \anR', <, \#, \prec, \triangleright)$ be the associated r\pes. 
 Then $X \subseteq E'$ is a configuration of $R$ iff $X$ is a configuration of $\mathcal{C}_{r}(R)$.
\end{theorem}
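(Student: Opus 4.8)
The plan is to prove the two directions of the biconditional separately, exploiting the fact (Proposition~\ref{prp:rpes_respect}) that $\mathcal{C}_{r}(R)$ is a \emph{causal} r\pes, together with the bridge between configurations of $R$ and configurations of the underlying occurrence net $C_{\overline{E}}$ provided by Definition~\ref{de:conf-rcn} and Proposition~\ref{pr:on-to-pes}.

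First I would observe that, by Definition~\ref{de:conf-rcn}, a set $X\subseteq E'=\overline{E}$ is a configuration of $R$ exactly when it is a configuration of the occurrence net $C_{\overline{E}}$, i.e. $X$ is conflict-free and left-closed with respect to $<_{C_{\overline{E}}}$. By Proposition~\ref{pr:on-to-pes} this is the same as $X\in\Conf{\mathcal{P}(C_{\overline{E}})}{\pes}$, where $\mathcal{P}(C_{\overline{E}})=(E',<,\#)$ is precisely the underlying p\pes\ of $\mathcal{C}_{r}(R)$ (by construction in Proposition~\ref{pr:rce-to-rpes}, $<$ and $\#$ are inherited from $C_{\overline{E}}$). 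So the left-hand side ``$X$ is a configuration of $R$'' is equivalent to ``$X$ is conflict-free and left-closed in the p\pes\ $(E',<,\#)$''.

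The remaining work is therefore to show that $X\in\Conf{\mathcal{C}_{r}(R)}{r\pes}$ iff $X$ is conflict-free and left-closed. For the backward direction, if $X$ is conflict-free and left-closed then it is a forward configuration of the p\pes\ (Proposition~\ref{pr:on-to-pes} / Definition~\ref{de:rpes-forwconf}), and since every forward reachable configuration is reachable, $X\in\Conf{\mathcal{C}_{r}(R)}{r\pes}$. For the forward direction, suppose $X\in\Conf{\mathcal{C}_{r}(R)}{r\pes}$. By definition $\CF{X}$ holds, so it remains to show $X$ is left-closed. Here I would invoke the first item of Proposition~\ref{pr:rpes-causal-cause-respecting}: since $\mathcal{C}_{r}(R)$ is causal, hence cause-respecting, left-closedness is preserved by every mixed forward/reverse step $X_i\stackrel{A_i\cup\underline{B_i}}{\longrightarrow}X_{i+1}$. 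As the reachability sequence for $X$ starts from $X_1=\emptyset$, which is trivially left-closed, an induction along the sequence yields that $X=X_{n+1}$ is left-closed. Combined with $\CF{X}$, this shows $X$ is a configuration of $C_{\overline{E}}$, hence of $R$.

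The main obstacle I anticipate is making the induction in the forward direction fully rigorous, since Proposition~\ref{pr:rpes-causal-cause-respecting} as stated requires the source configuration $X_i$ to be \emph{already} left-closed and conflict-free before the step is taken; the induction hypothesis must therefore carry both invariants (conflict-freeness and left-closedness) simultaneously along the entire sequence, and one must check that the hypotheses of the cited proposition are genuinely met at each step (in particular that $B_i\subseteq\anR'$ and that the configurations remain conflict-free, which follows from the definition of enabling). Once this bookkeeping is in place, the equivalence follows immediately from the identification of $\Conf{R}{\rcn}$ with the conflict-free left-closed subsets established in the first paragraph.
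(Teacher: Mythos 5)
Your proof is correct and takes essentially the same route as the paper's: both arguments reduce the statement to identifying the configurations of the causal (hence cause-respecting) r\pes\ $\mathcal{C}_{r}(R)$ with the conflict-free, left-closed subsets of the underlying p\pes, which by Proposition~\ref{pr:on-to-pes} and Definition~\ref{de:conf-rcn} are exactly the configurations of $R$. The only difference is cosmetic: where the paper directly cites the proposition that every configuration of a cause-respecting r\pes\ is forwards reachable, you re-derive the needed left-closedness by induction along the step sequence via Proposition~\ref{pr:rpes-causal-cause-respecting}, which is essentially how that cited proposition is proved.
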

%\begin{proof}
% As $\mathcal{C}(R)$ is a cause-respecting and causal r\pes\ we have that each configuration
% is forward reachable, and the forward reachable configurations are precisely those conflict-free
% and left-closed of the p\pes\ $\mathcal{C}(R) = (E', <, \#)$, which correspond to the
% configurations of the causal net $R_{\overline{E}}$.
% %\qed
%\end{proof}
%

We stress that a reversing event in a reversible causal net is enabled at a marking when
the conditions in the postset of the event to be reversed are marked. 
This may happen only
when all the events that causally depend on the event to be reversed have either been
executed and reversed or not been executed at all.
Thus every \rcn\ enjoys \emph{causally consistent} reversibility~\cite{rccs,rhotcs}, 
and consequently cannot implement the
so called \textit{out-of-causal order} reversibility~\cite{PhiUliYuen12,UK16}.
%
%\changed{This is not the case in r\pes\ where also the second kind of reversibility can be modelled.}
Contrastingly,  r\pes es are able to model \textit{out-of-causal order} reversibility (as illustrated in  Example~\ref{out-of-causal-order}).

The proposition below establishes a  correspondence between the steps in a reversible causal net
and the sequences of reachable configurations of the r\pes\  associated to that net. 
Proposition~\ref{prop:mixed_step} below formalises what is called \emph{mixed-reverse} transitions in~\cite{EKM:petri_net}. 
We now introduce some auxiliary notation. Let $R = \langle B, E, \Er, F, \mathsf{c}\rangle$ be a \rcn, and $X\subseteq E$ be a 
configuration of $R$, we write  $\mathsf{mark}(X)$ to denote the marking reached 
after executing the events
in $X$; this marking can be expressed as $(\mathsf{c}\cup\post{X})\setminus \pre{X}$.

\begin{proposition}\label{prop:mixed_step}
 Let $R = \langle B, E, \Er, F, \mathsf{c}\rangle$ be a reversible causal net and  
 $\mathcal{C}_{r}(R) = (E', \anR', <, \#, \prec, \triangleright)$ be its associated r\pes. 
 Let $X\in\Conf{R}{\rcn}$ and  $A\subseteq E$ be a set of events such that 
 $\mathsf{mark}(X)\trans{A}$. Then $\hat{A}\cup \underline{B}$ is enabled at $X$ in
 $\mathcal{C}_{r}(R)$, where $\hat{A} = \setcomp{e\in A}{e\not\in \Er}$ and
 $\underline{B} = \setcomp{e\in A}{e\in \Er}$.
\end{proposition}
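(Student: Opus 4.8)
The plan is to verify directly the four enabling conditions of Definition~\ref{de:rpes-conf-enab} for $\hat{A}\cup\underline{B}$ at $X$, where $B = h(A\cap\Er)\subseteq \anR'$ collects the reversible events whose reversing transitions occur in $A$. Throughout I write $m = \mathsf{mark}(X) = (\mathsf{c}\cup\post{X})\setminus\pre{X}$ and exploit two facts. First, since $X$ is a configuration of $R$ it is a configuration of $C_{\overline{E}}$, so firing the events of $X$ in any causality-respecting order is a firing sequence of the occurrence net $C_{\overline{E}}$ reaching $m$; hence $m\in\reachMark{C_{\overline{E}}}$. Second, because $R$ is safe and $m\trans{A}$, we have $\pre{A}\subseteq m$ with the presets of the events of $A$ pairwise disjoint, so every $x\in A$ is individually enabled at $m$, i.e. $\pre{x}\subseteq m$. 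Finally, by Proposition~\ref{prp:rpes_respect} the r\pes\ $\mathcal{C}_{r}(R)$ is causal, so $e'\prec\underline{e}$ iff $e'=e$ and $e'\triangleright\underline{e}$ iff $e <_{C_{\overline{E}}} e'$; this collapses conditions (3) and (4) to purely causal statements.

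First I would establish the three clauses of the first enabling condition. For $\hat{A}\cap X=\emptyset$: if a forward event $e\in\hat{A}$ were in $X$, its preset would already have been consumed in $m$ and, conditions not being re-produced in an occurrence net, $\pre{e}\cap m=\emptyset$, contradicting $\pre{e}\subseteq m$. For $B\subseteq X$: each $\er\in A\cap\Er$ satisfies $\pre{\er}=\post{h(\er)}\subseteq m$, and since in an occurrence net the conditions of $\post{h(\er)}$ are produced only by $h(\er)$, this forces $h(\er)\in X$. For $\CF{X\cup\hat{A}}$: as $\hat{A}\subseteq A$ is itself a step enabled at $m$ consisting solely of forward events, firing $X$ and then $\hat{A}$ is a firing sequence of $C_{\overline{E}}$; since $\hat{A}\cap X=\emptyset$ this yields $X\cup\hat{A}\in\states{C_{\overline{E}}}$, which is a configuration by Proposition~\ref{pr:states-are-conf} and therefore conflict-free.

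Next I would treat the causal conditions. Condition (3) is immediate: as $e'\prec\underline{e}$ iff $e'=e$, it reduces to $e\in X\setminus(B\setminus\setenum{e})$, which holds because $B\subseteq X$. For condition (2), take $e\in\hat{A}$ and $e'<_{C_{\overline{E}}} e$. By the characterisation of enabling in occurrence nets, $e$ being enabled at $m=\mathsf{mark}(X)$ forces $\hist{e}\setminus\setenum{e}\subseteq X$, so $e'\in X$; it remains to show $e'\notin B$. If $e'\in B$ then $\post{e'}\subseteq m$, and I would trace a covering chain $e'\lessdot e_1\lessdot\cdots\lessdot e$ (writing $\lessdot$ for immediate $<_{C_{\overline{E}}}$-causality), with $b\in\post{e'}\cap\pre{e_1}$: if $e_1=e$ then $b$ lies in both $\pre{\er}$ (for $\er$ reversing $e'$) and $\pre{e}$, contradicting disjointness of presets in the step $A$; otherwise $e_1<_{C_{\overline{E}}} e$ gives $e_1\in X$, so $e_1$ has consumed $b$, contradicting $b\in\post{e'}\subseteq m$. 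Hence $e'\in X\setminus B$. For condition (4), take $e\in B$ and $e'$ with $e<_{C_{\overline{E}}} e'$ (so $e'\triangleright\underline{e}$). Here $\post{e}=\pre{\er}\subseteq m$; if $e'\in X$, the immediate successor of $e$ on a chain towards $e'$ would belong to $X$ and have consumed a condition of $\post{e}$, contradicting $\post{e}\subseteq m$; and if $e'\in\hat{A}$, then condition (2) applied to $e'$ would give $e\in X\setminus B$, contradicting $e\in B$. Thus $e'\notin X\cup\hat{A}$.

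The technical heart, and the step I expect to be the main obstacle, is the token-level bookkeeping in conditions (2) and (4): one must argue at the level of individual conditions that the simultaneous presence of $\post{e'}$ (forced by a reversing event being enabled) is incompatible with any causal successor having fired, and exploit the pairwise disjointness of presets in a step of a safe net to rule out the borderline case in which the two competing events are both members of $A$. Once these incompatibilities are in place, all four clauses of Definition~\ref{de:rpes-conf-enab} are satisfied and $\hat{A}\cup\underline{B}$ is enabled at $X$ in $\mathcal{C}_{r}(R)$.
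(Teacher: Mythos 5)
Your proof is correct and follows essentially the same route as the paper's: a condition-by-condition verification of Definition~\ref{de:rpes-conf-enab} at $X$, using $\pre{A}\subseteq\mathsf{mark}(X)$ to get $\hat{A}\cap X=\emptyset$, $B\subseteq X$ and $\CF{X\cup\hat{A}}$, causality of $\mathcal{C}_{r}(R)$ for the $\prec$ clause, and the incompatibility of a reversing event's enabledness with fired causal successors for the $<$ and $\triangleright$ clauses. Your token-level bookkeeping (covering chains, disjointness of presets within the step) just fills in details the paper leaves as ``clearly'' or as an unelaborated contradiction with $\mathsf{mark}(X)\trans{A}$.
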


\subsection{From r\pes\ to \rcn}
Correspondingly to what is usually done when relating nets to event structures, we show that 
if we focus on causal r\pes{es} then we can relate them to reversible occurrence nets.
The construction is indeed quite standard (see \cite{Win:ES,BCP:LenNets} among many others),
but we do need a further observation on causal r\pes.

\begin{proposition}\label{pr:pes-associated-to-crpes}
 Let $\mathsf{P} = (E, \anR, <, \#, \prec, \triangleright)$ be a causal r\pes\ and
 let $<^{+}$ be the transitive closure of $<$. Then, $\#$ is inherited along  $<^{+}$,
 \emph{i.e.} $e\ \#\ e' <^{+} e''\ \Rightarrow\ e\ \#\ e''$.
\end{proposition}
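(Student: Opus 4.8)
The plan is to reduce the claim to the heredity of $\#$ along the sustained causation $\ll$, which is already guaranteed by the definition of an r\pes, by showing that in a causal r\pes\ the relations $\ll$, $<$ and $<^{+}$ all coincide. As a preliminary remark I would note that, since $(E,<,\#)$ is a p\pes, the relation $<$ is an irreflexive partial order and hence transitive; consequently $<^{+} = <$, and it suffices to prove that $e\ \#\ e' < e''$ implies $e\ \#\ e''$.

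The heart of the argument is to establish $\ll\ =\ <$. Recall that $\ll$ is the transitive closure of the generating relation $\ll_{0}$ defined by $e \ll_{0} e'$ iff $e < e'$ and, in addition, $e' \triangleright \underline{e}$ whenever $e\in\anR$. Plainly $\ll_{0}\ \subseteq\ <$, so it remains to prove the reverse inclusion. Take $e < e'$. If $e\notin\anR$ the side condition on $\ll_{0}$ is vacuous, so $e \ll_{0} e'$. If $e\in\anR$, then because $\mathsf{P}$ is causal the prevention clause $e \triangleright \underline{\anr}\ \Leftrightarrow\ \anr < e$, instantiated with $\anr := e$ and the free variable $:= e'$, yields $e' \triangleright \underline{e}$ iff $e < e'$; since $e < e'$ holds by assumption, so does $e' \triangleright \underline{e}$, and again $e \ll_{0} e'$. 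Hence $\ll_{0}\ =\ <$, and as $<$ is transitive its transitive closure is itself, giving $\ll\ =\ \ll_{0}\ =\ <$.

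Finally I would combine the two observations: the r\pes\ axiom that $\#$ is hereditary with respect to $\ll$ becomes, via $\ll = <$, heredity with respect to $<$, which together with $<^{+} = <$ is exactly the statement $e\ \#\ e' <^{+} e''\ \Rightarrow\ e\ \#\ e''$. I do not expect a genuine obstacle here once $\ll = <$ is in place; the only point requiring care is handling the definition of $\ll$ as a transitive closure of $\ll_{0}$ and applying the causal prevention clause with the reversible event $e$ playing the role of the quantified $\anr$. (In fact the same reasoning shows that causality implies cause-respecting, so the argument could equivalently be routed through the cause-respecting condition $e < e' \Rightarrow e \ll e'$.)
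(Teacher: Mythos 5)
Your proof is correct and follows essentially the same route as the paper's: both reduce the claim to the r\pes\ axiom that $\#$ is hereditary with respect to the sustained causation $\ll$, via the observation that in a causal r\pes\ the relation $\ll$ coincides with $<$ (and hence with $<^{+}$, since $<$ is already transitive). The only difference is that the paper asserts this coincidence without argument, whereas you spell out the instantiation of the causal prevention clause ($e' \triangleright \underline{e}$ iff $e < e'$) that establishes it.
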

%\begin{proof}
% In general we have that, given a r\pes, $(E, \ll, \#)$ is a \pes. But in a causal r\pes\ we have
% that $\ll$ is indeed the transitive closure of $<$.
%\end{proof}
A consequence of this proposition is that the conflict relation is fully characterized by the
causality relation, and the same intuition for introducing reversible causal net can be used in
associating a net to a causal r\pes\ like the one used to associate an occurrence net to a \pes. 

\begin{definition}\label{de:rpestorcn}
 Let $\mathsf{P} = (E, \anR, <, \#, \prec, \triangleright)$ be a causal r\pes, and 
  $\bot\not\in E$ be a new symbol. Define 
 $\mathcal{E}_{r}(\mathsf{P})$ as the Petri net $\langle B, \hat{E}, F, \mathsf{c}\rangle$ where
 \begin{itemize}
    \item $B = \setcomp{(a,A)}{a\in E\cup\setenum{\bot} \land A \subseteq E \land \CF{A} \land 
    a \neq \bot\ \Rightarrow\ \forall e\in A.\ a \ll e}$,\smallskip
   \item $\hat{E} = E\times\setenum{\fe}\  \cup\  \anR\times\setenum{\re}$,
   \item $F = \setcomp{(b,(e,\fe))}{b = (a,A)\ \land\ e\in A} \quad \cup\quad
              \setcomp{((e,\fe),b)}{b = (e,A)}\quad\cup\quad$ \phantom{asdfa}
              $\setcomp{(b,(e,\re))}{b = (e,A)}\quad\cup\quad
              \setcomp{((e,\re),b)}{b = (a,A)\ \land\ e\in A}$,
                          and
   \item $\mathsf{c} = \setcomp{(\bot,A)}{A \subseteq E\ \land\ \CF{A}}$,                                               
 \end{itemize}
\end{definition} 
In essence the construction above takes the \pes\ associated to an r\pes\ and
constructs the associated occurrence net, which is then \emph{enriched} with 
the reversing events (transitions). The result is a reversible occurrence net.

\begin{proposition}\label{prp:rpes_to_cnet}
 Let $\mathsf{P} = (E, \anR, <, \#, \prec, \triangleright)$ be a causal r\pes. Then  
 $\mathcal{E}_{r}(\mathsf{P}) = \langle B, \hat{E}, \anR\times\setenum{\re}, F, \mathsf{c}\rangle$ as defined
 in Definition \ref{de:rpestorcn} is a reversible occurrence net with respect to $\anR\times\setenum{\re}$.
\end{proposition}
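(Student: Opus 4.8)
The plan is to verify that $\mathcal{E}_{r}(\mathsf{P})$ satisfies each clause of Definition~\ref{def:rcn}, using the structure of the underlying \pes\ and the causal assumption on $\mathsf{P}$. The key preliminary observation is that Proposition~\ref{pr:pes-associated-to-crpes} tells us conflict is inherited along $<^{+}$, so the triple $(E, <^{+}, \#)$ — or equivalently the structure generated by $\mathsf{hc}$ — is a genuine \pes. This means the forward part of the construction, namely $\langle B, E\times\setenum{\fe}, F', \mathsf{c}\rangle$ where $F'$ is the restriction of $F$ to forward events, is precisely the occurrence net $\mathcal{E}(P)$ produced by Proposition~\ref{pr:pes-to-on} applied to the associated \pes\ (modulo the renaming $e \mapsto (e,\fe)$ and the use of $\ll$ in place of $<$ in the condition set $B$, which coincide on forward dependencies by causality). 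I would first record this identification, so that the last clause of Definition~\ref{def:rcn} follows immediately.

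First I would check the reversing-event clause: for each $(\er,\re)\in\anR\times\setenum{\re}$ I must exhibit a unique forward event $(e,\fe)\in\hat{E}\setminus(\anR\times\setenum{\re})$ with $\pre{(\er,\re)} = \post{(e,\fe)}$ and $\post{(\er,\re)} = \pre{(e,\fe)}$. By the definition of $F$, the preset of $(\er,\re)$ consists of the conditions $(\er,A)$, which is exactly the postset of $(\er,\fe)$; dually, the postset of $(\er,\re)$ consists of conditions $(a,A)$ with $\er\in A$, matching the preset of $(\er,\fe)$. Thus the witness is $(\er,\fe)$ and it is unique because the flow relation assigns these pre/postsets to no other forward event. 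Next I would verify the injectivity clause ($\pre{e}=\pre{e'}\land\post{e}=\post{e'}\Rightarrow e=e'$): for two forward events $(e,\fe),(e',\fe)$ equal pre- and postsets force $e=e'$ since the postset $\setcomp{(e,A)}{\cdot}$ already records the event name $e$; mixed forward/reverse pairs cannot coincide because a forward event's postset has the form $\{(e,A)\}$ whereas a reverse event's postset never does, and reverse events are distinguished analogously. The no-isolated-conditions clause $\bigcup_{e}(\pre{e}\cup\post{e})=B$ follows because every condition $(a,A)$ with $a\neq\bot$ lies in $\post{(a,\fe)}$, while every $(\bot,A)\in\mathsf{c}$ lies in some preset; I would also note that by the ``each transition can fire / each place is marked'' standing assumption this is consistent.

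The main obstacle I expect is the last clause, establishing that $C_{\hat E\setminus(\anR\times\setenum{\re})} = \langle B, E\times\setenum{\fe}, F', \mathsf{c}\rangle$ is genuinely an occurrence net, together with the safety of the full net $\mathcal{E}_{r}(\mathsf{P})$. The first half reduces to Proposition~\ref{pr:pes-to-on} via the identification above, but I must be careful that the condition set $B$ uses $\ll$ rather than $<$; here I would invoke that for a causal r\pes\ we have $\ll\, =\, <$ on the relevant pairs (indeed $\mathsf{P}$ causal gives $e\triangleright\underline{\anr}$ iff $\anr<e$, so every reversible $e$ with $e<e'$ satisfies $e'\triangleright\underline e$, hence $e\ll e'$ whenever $e<e'$), so the $\ll$-based and $<$-based condition sets agree and $\mathcal{E}(P)$ is recovered exactly. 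For safety of the whole net, the delicate point is that adding reverse transitions does not destroy safety: a reverse event $(\er,\re)$ merely moves tokens from the postset conditions of $(\er,\fe)$ back to its preset conditions, inverting a firing, so any reachable marking of $\mathcal{E}_{r}(\mathsf{P})$ is reachable in the forward occurrence net (this is the analogue of Proposition~\ref{pr:reachable-markings-of-rcn}), and occurrence nets are safe. I would therefore argue that $\reachMark{\mathcal{E}_{r}(\mathsf{P})} = \reachMark{C_{E\times\setenum{\fe}}}$ and conclude safety from that, which also discharges the hypothesis that $\langle B,\hat E,F,\mathsf{c}\rangle$ is a safe net required at the head of Definition~\ref{def:rcn}.
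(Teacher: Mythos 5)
Your proof is correct, and structurally it is the same clause-by-clause verification that the paper gives: the paper's own proof consists of noting that the forward restriction is an occurrence net ``by construction'', that each $(e,\re)$ has the unique partner $(e,\fe)$, that equal pre/postsets force equal events, and that no condition is isolated --- precisely the checks you spell out in more detail (including the justification, via Proposition~\ref{pr:pes-associated-to-crpes} and $\ll\,=\,<$, that the forward part really is $\mathcal{E}((E,<,\#))$, which the paper leaves implicit). The one genuine difference is safety: Definition~\ref{def:rcn} requires the whole net $\langle B,\hat{E},F,\mathsf{c}\rangle$ to be safe, the paper's proof of this proposition is silent about it, and you discharge it explicitly by arguing that the reachable markings of $\mathcal{E}_{r}(\mathsf{P})$ coincide with those of its forward part. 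Two caveats on that step. First, Proposition~\ref{pr:reachable-markings-of-rcn} cannot be cited as it stands, since it presupposes a \rcn\ (hence an already safe net); you rightly call it an ``analogue'', but that analogue still has to be proved --- e.g.\ by induction on firing sequences --- without assuming safety, otherwise the argument is circular. Second, there is a faster way to finish from where you stand: your identification shows that $\mathcal{E}_{r}(\mathsf{P})$ is exactly $\mathsf{R}(\mathcal{E}((E,<,\#)))$ with reversible events $\anR$, so Proposition~\ref{prop:rev_net} applies verbatim and yields the conclusion in one stroke, with safety coming from its token-contradiction argument rather than from a marking-equivalence lemma. Finally, a shared blemish rather than a gap in your attempt: your claim that every initial condition lies in some preset (and the paper's identical claim) fails for the degenerate condition $(\bot,\emptyset)\in B$, which is isolated, so strictly speaking the construction should exclude it for the third clause of Definition~\ref{def:rcn} to hold.
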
 
\begin{theorem}\label{th:cccrpestorcn}
 Let $\mathsf{P}$ be a causal 
 r\pes. Then  
 $X'$ is a configuration of $\mathcal{E}_{r}(\mathsf{P})$ iff $X$ 
 is a configuration of $\mathsf{P}$, where $X' = \setcomp{(e,\fe)}{e\in X}$.
\end {theorem}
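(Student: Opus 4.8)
The plan is to factor the claimed equivalence through the forward part of $\mathcal{E}_r(\mathsf{P})$ and the \pes\ underlying $\mathsf{P}$, so as to reuse the classical correspondence of Proposition~\ref{pr:pes-to-on}. Write $P=(E,<,\#)$ for the p\pes\ underlying the causal r\pes\ $\mathsf{P}$. First I would record two facts that collapse the reversible construction onto the non-reversible one: by Proposition~\ref{pr:pes-associated-to-crpes} conflict is inherited along $<^{+}$, so $P$ is in fact a \pes; and since $\mathsf{P}$ is causal, $e\triangleright\underline{e}'$ holds exactly when $e'<e$ (Definition~\ref{de:cr-and-causal-rpes}), whence the sustained causation satisfies $\ll\,=\,<$. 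Consequently the places $B$ of $\mathcal{E}_r(\mathsf{P})$, whose defining clause $a\ll e$ now reads $a<e$, together with the forward flow and the initial marking $\mathsf{c}$, coincide verbatim with those of $\mathcal{E}(P)$ as in Proposition~\ref{pr:pes-to-on}. Identifying $(e,\fe)$ with $e$, the occurrence net $C_{\overline{E}}$ obtained from $\mathcal{E}_r(\mathsf{P})$ by deleting the reversing events $\anR\times\setenum{\re}$ is therefore literally $\mathcal{E}(P)$.

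With this identification the net side is immediate. By Definition~\ref{de:conf-rcn} a set $X'\subseteq\overline{E}$ is a configuration of $\mathcal{E}_r(\mathsf{P})$ iff it is a configuration of $C_{\overline{E}}=\mathcal{E}(P)$, and by Proposition~\ref{pr:pes-to-on} this holds iff the corresponding $X$ is a configuration of the \pes\ $P$. Since $P$ is a \pes, $\Conf{P}{\pes}=\Conf{P}{p\pes}$ by Proposition~\ref{pr:ppes-prop}, so I obtain that $X'$ is a configuration of $\mathcal{E}_r(\mathsf{P})$ iff $X$ is a forwards reachable configuration of the p\pes\ $P$.

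It then remains to match the configurations of $\mathsf{P}$, as an r\pes, with the forwards reachable configurations of its underlying p\pes\ $P$. Here I would use that a causal r\pes\ is cause-respecting, i.e.\ $e<e'$ implies $e\ll e'$, which is trivial once $\ll\,=\,<$; by the fact that every configuration of a cause-respecting r\pes\ is forwards reachable, every $X\in\Conf{\mathsf{P}}{r\pes}$ is forwards reachable. For a purely forward step the enabling clauses of Definition~\ref{de:rpes-conf-enab} with $B=\emptyset$ reduce exactly to the p\pes\ enabling of Definition~\ref{de:ppes-conf-enab}, namely $A\cap X=\emptyset$, $\CF{X\cup A}$, and $e'<e\Rightarrow e'\in X$ for $e\in A$. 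Hence the forwards reachable configurations of $\mathsf{P}$ are precisely those of $P$, giving $\Conf{\mathsf{P}}{r\pes}=\Conf{P}{p\pes}$. Chaining the two equivalences yields the theorem.

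I expect the main obstacle to be this last step, since a configuration of $\mathsf{P}$ is a priori reached by interleaving forward and reversing moves; the essential point is that a causal (hence cause-respecting) r\pes\ admits no genuinely new configurations through reversal, so that the appeal to forwards reachability is legitimate. The remaining reductions are definitional once $\ll\,=\,<$ and the \pes\ status of $P$ have been established.
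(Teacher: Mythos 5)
Your proposal is correct, and its core coincides with the paper's: both reduce the statement to the classical \pes--occurrence-net correspondence by observing that a causal r\pes\ is cause-respecting, so its configurations are exactly the forwards reachable ones, which are in turn the configurations of the underlying p\pes. The difference is in how the two directions are organised. The paper argues only the direction from $\Conf{\mathsf{P}}{r\pes}$ to net configurations this way (passing through the hereditary closure $\mathsf{hc}(P)$ and Proposition~\ref{pr:ppes-prop}), and for the converse it appeals to the round-trip identity that $\mathcal{C}_{r}(\mathcal{E}_{r}(\mathsf{P}))$ is $\mathsf{P}$ up to renaming, thereby reusing Theorem~\ref{th:rnctorpes-conf-correspond}. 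You instead run a single bidirectional chain, $\Conf{\mathcal{E}_{r}(\mathsf{P})}{\rcn}=\Conf{C_{\overline{E}}}{\cn}=\Conf{\mathcal{E}(P)}{\cn}=\Conf{P}{\pes}=\Conf{P}{p\pes}=\Conf{\mathsf{P}}{r\pes}$ (modulo the renaming $(e,\fe)\mapsto e$), which needs neither the round-trip identity nor Theorem~\ref{th:rnctorpes-conf-correspond}. To make this work you spell out two points the paper leaves implicit: that $\ll\,=\,<$ for a causal r\pes, so the forward part $C_{\overline{E}}$ of $\mathcal{E}_{r}(\mathsf{P})$ is verbatim $\mathcal{E}(P)$ of Proposition~\ref{pr:pes-to-on}; and, via Proposition~\ref{pr:pes-associated-to-crpes}, that $P=(E,<,\#)$ is already a \pes, so the hereditary closure is unnecessary. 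What the paper's route buys is brevity, by leaning on results already proved; what yours buys is self-containedness, uniform treatment of both directions, and an actual justification of the ``up to renaming'' identification that the paper merely asserts.
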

%
%\begin{proof}
% Let $\mathsf{P} = (E, E', <, \#, \prec, \triangleright)$. 
% %
% Consider $X\in \Conf{\mathsf{P}}{r\pes}$. As $\mathsf{P}$ is a cause-respecting and
% causal r\pes\ we have that $X$ is forward reachable, hence $X$ is a configuration
% of the p\pes\ $(E, <, \#)$, which we denote with $P$, and then $X' = \setcomp{(e,f)}{e\in X}$ 
% is a configuration also of the causal net associated to this event structure as,
% by Proposition \ref{pr:ppes-prop}, we have that $\Conf{P}{p\pes} = \Conf{\mathsf{hc}(P)}{\pes}$. 
% For the vice versa it is enough to observe that, up to renaming of events, 
% $\mathcal{C}(\mathcal{E}(\mathsf{P}))$ is indeed $\mathsf{P}$.
% %\qed.
%\end{proof}
%
Clearly, if we start from a reversible causal net, we get a r\pes\ from which a reversible causal
net can be obtained having the same states (up to renaming of events).
\begin{corollary}
 Let $R$ be a \rcn. Then $\states{\mathcal{E}_{r}(\mathcal{C}_{r}(R))} = \states{R}$.
\end{corollary}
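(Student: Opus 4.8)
The plan is to chain together the two main theorems proved earlier in the paper, taking care of the bookkeeping between states and configurations. Recall that the final Corollary asserts $\states{\mathcal{E}_{r}(\mathcal{C}_{r}(R))} = \states{R}$ for any \rcn\ $R$. The key difficulty is that states (elements of $\states{\cdot}$) are multisets of transitions recording entire firing sequences — including reversing events — whereas the two theorems we wish to reuse, namely Theorem~\ref{th:rnctorpes-conf-correspond} and Theorem~\ref{th:cccrpestorcn}, speak about \emph{configurations} (subsets of forward events). So the first conceptual step is to bridge states and configurations.

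First I would invoke Corollary~\ref{cor:a}: every reachable marking of a \rcn\ can be reached by a firing sequence using only forward events, so that the information carried by an arbitrary state is, up to the reversing events, already captured by a forward-only firing sequence. Concretely, by Proposition~\ref{pr:reachable-markings-of-rcn} we have $\reachMark{R} = \reachMark{C_{\overline{E}}}$, and each state $X_\sigma \in \states{R}$ projects onto its forward part, which is a state of the underlying occurrence net $C_{\overline{E}}$ and hence (by Proposition~\ref{pr:states-are-conf}) a configuration in $\Conf{R}{\rcn}$. This gives a clean correspondence between states of $R$ and pairs consisting of a configuration together with the record of which reversing events have fired.

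Next I would apply Proposition~\ref{prp:rpes_respect}, which guarantees that $\mathcal{C}_{r}(R)$ is a \emph{causal} r\pes, so that Theorem~\ref{th:cccrpestorcn} is applicable to it. Combining Theorem~\ref{th:rnctorpes-conf-correspond} (configurations of $R$ coincide with configurations of $\mathcal{C}_{r}(R)$) with Theorem~\ref{th:cccrpestorcn} (configurations of $\mathcal{C}_{r}(R)$ coincide, via the renaming $X \mapsto X' = \setcomp{(e,\fe)}{e\in X}$, with configurations of $\mathcal{E}_{r}(\mathcal{C}_{r}(R))$), I obtain a bijection between $\Conf{R}{\rcn}$ and $\Conf{\mathcal{E}_{r}(\mathcal{C}_{r}(R))}{\rcn}$ that respects the causal and reverse-causal structure. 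Since both nets are reversible causal nets built over the same causality and reversibility data — $R$ directly, and $\mathcal{E}_{r}(\mathcal{C}_{r}(R))$ reconstructed from the same $<$ and $\anR$ via Definitions~\ref{de:constructing-a-reversible-causal-net} and~\ref{de:rpestorcn} — the step structure (including which reversing events are enabled when, as formalised in Proposition~\ref{prop:mixed_step}) matches, and hence the full state sets agree.

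The main obstacle I anticipate is precisely the passage from the configuration-level equalities to the state-level equality claimed in the corollary: states are multisets that may contain reversing transitions and may revisit configurations, so one cannot simply transport the bijection on configurations. The careful point is to show that for each configuration the set of enabled reversing events is determined identically in $R$ and in $\mathcal{E}_{r}(\mathcal{C}_{r}(R))$ — this follows because the prevention relation $e \triangleright \underline{e'}$ is in both cases exactly ``$e' < e$'' (causality from the underlying occurrence net), so a reversing event is firable in one net exactly when it is firable in the other. Granting this, every firing sequence of $R$ lifts to a firing sequence of $\mathcal{E}_{r}(\mathcal{C}_{r}(R))$ with the same multiset of fired transitions (modulo the $(e,\fe)/(e,\re)$ renaming), and conversely; taking the multisets of all such sequences yields $\states{\mathcal{E}_{r}(\mathcal{C}_{r}(R))} = \states{R}$ up to the renaming of events, which is the stated conclusion.
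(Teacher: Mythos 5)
Your overall strategy is the one the paper intends: the paper gives no explicit proof of this corollary, presenting it (immediately after Theorem~\ref{th:cccrpestorcn}, with the phrase ``up to renaming of events'') as a direct consequence of chaining Theorem~\ref{th:rnctorpes-conf-correspond} with Theorem~\ref{th:cccrpestorcn}, using Corollary~\ref{cor:a} and Proposition~\ref{prop:mixed_step} to bridge states and configurations. Your plan follows exactly this route, and your closing argument --- lifting firing sequences in both directions because the enabledness of forward and reversing events at corresponding markings coincides --- is the right way to make the bridge precise.

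There is, however, one bridging step that is false as stated. You claim that each state $X_\sigma\in\states{R}$ ``projects onto its forward part, which is a state of the underlying occurrence net $C_{\overline{E}}$'' and hence, via Proposition~\ref{pr:states-are-conf}, a configuration. In a \rcn\ a forward event can fire, be reversed, and fire again: in $R_1$ of Figure~\ref{fig:rcn} the \fs\ $\mathsf{c}\trans{e_1}m\trans{\underline{e}_1}\mathsf{c}\trans{e_1}m$ has a state containing $e_1$ with multiplicity two and $\underline{e}_1$ with multiplicity one, so its forward projection is a proper multiset, not a state of $C_{\overline{E}}$; for the same reason states are not ``configuration plus the record of which reversing events have fired'' --- multiplicities matter. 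The configuration attached to a state $X_\sigma$ is rather the set of forward events fired strictly more often than their associated reversers (equivalently, the one determined by $\lead{\sigma}$, which Corollary~\ref{cor:a} --- a statement about markings, not about states --- lets you realise by a forward-only \fs). Your final paragraph does not actually rely on the flawed projection claim, so the proof is repairable, but note that the enabledness matching you invoke also needs a net-level check on the $\mathcal{E}_{r}$ side: Proposition~\ref{prop:mixed_step} only goes from steps of $R$ to steps of $\mathcal{C}_{r}(R)$, and the converse fact --- that a reversing event $(e,\re)$ of $\mathcal{E}_{r}(\mathcal{C}_{r}(R))$ is enabled at $\mathsf{mark}(X')$ exactly when $(e,\fe)\in X'$ and no causal successor of $e$ lies in $X'$ --- appears nowhere in the paper and must be verified directly from the preset/postset structure of Definition~\ref{de:rpestorcn}; observing that prevention coincides with causality in both r\pes es is a statement about the event structures, not yet about the markings of the two nets.
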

%\begin{proof}
% It is a straightforward consequence of Theorem \ref{th:cccrpestorcn}.  
% %\qed
%\end{proof}

\begin{example}
 Consider the r\pes\ with four events $\setenum{e_1,e_2,e_3,e_4}$ such that $e_1 < e_3$ and $e_2 < e_4$,
 $e_1$ is in conflict with $e_2$ and this conflict is inherited along $<$. Furthermore, let 
 $e_1$ and $e_3$ be reversible, and $e_3 \triangleright \underline{e}_1$. The construction 
in Definition~\ref{de:rpestorcn} gives the net below.
 \begin{center}
% \todo{should we change occurrences of $*$  as first component by $\bot$ and by $\emptyset$ on the snd?}
  \newcommand{\aeuno}{$(e_1,\emptyset)$}
\newcommand{\eunoa}{$(\bot,\setenum{e_1})$}
\newcommand{\aedue}{$(e_2,\emptyset)$}
\newcommand{\eduea}{$(\bot,\setenum{e_2})$}
\newcommand{\aetre}{$(e_3,\emptyset)$}
\newcommand{\etrea}{$(\bot,\setenum{e_3})$}
\newcommand{\aequattro}{$(e_4,\emptyset)$}
\newcommand{\equattroa}{$(\bot,\setenum{e_4})$}
\newcommand{\eunoetre}{$(e_1,\setenum{e_3})$}
\newcommand{\edueequattro}{$(e_2,\setenum{e_4})$}
\newcommand{\ceunoedue}{$(\bot,\setenum{e_1,e_2})$}
\newcommand{\ceunoequattro}{$(\bot,\setenum{e_1,e_4})$}
\newcommand{\cedueetre}{$(\bot,\setenum{e_2,e_3})$}
\newcommand{\cetreequattro}{$(\bot,\setenum{e_3,e_4})$}
\scalebox{0.8}{
\begin{tikzpicture}
[bend angle=30, scale=.8, 
  pre/.style={<-,shorten
    <=0pt,>=stealth,>={Latex[width=1mm,length=1mm]},thick}, post/.style={->,shorten
    >=0,>=stealth,>={Latex[width=1mm,length=1mm]},thick}, prei/.style={o-,shorten
    <=0pt,>=stealth,>={Latex[width=1mm,length=1mm]},thick}, read/.style={-,shorten
    <=0pt,,>=stealth,>={Latex[width=1mm,length=1mm]},draw=red,very thick}, place/.style={circle, draw=black,
    thick,minimum size=5mm}, transition/.style={rectangle, draw=black,
    thick,minimum size=5mm}, revtransition/.style={rectangle, draw=red,
    thick,fill=red!50,minimum size=5mm}, invplace/.style={circle,
    draw=black!0,thick}]
\node[place,tokens=1] (b21) at (-2,5) [label=left:\eduea] {};
\node[place,tokens=1] (b12c) at (-2,3) [label=left:\ceunoedue] {};
\node[place,tokens=1] (b11) at (-2,1) [label=left:\eunoa] {};
\node[place] (b12) at (1,0.2) [label=right:\aeuno] {};
\node[place] (b22) at (1.5,6)  [label=right:\aedue] {};
\node[place] (b13cau) at (5,0.5) [label=right:\eunoetre] {};
\node[place] (b24cau) at (5,5.8)  [label=right:\edueequattro] {};
\node[place,tokens=1] (b41) at (10,6) [label=right:\equattroa] {};
\node[place,tokens=1] (b31) at (10,0) [label=right:\etrea] {};
\node[place] (b32) at (10,1.5) [label=right:\aetre] {};
\node[place] (b42) at (10,4.5)  [label=right:\aequattro] {};
\node[place,tokens=1] (b14c) at (4,3.5) [label=above:\ceunoequattro] {};
\node[place,tokens=1] (b23c) at (4.5,2.5) [label=below:\cedueetre] {};
\node[place,tokens=1] (b34c) at (10,3) [label=right:\cetreequattro] {};
\node[transition] (e1) at (1,1.5) {$e_1$}
edge[pre] (b12c)
edge[pre] (b11)
edge[post] (b12)
edge[post] (b13cau)
edge[pre, bend left] (b14c);
\node[revtransition] (re1) at (3.2,-1) {$\underline{e}_1$}
edge[post, bend left] (b12c)
edge[post, bend left] (b11)
edge[pre] (b12)
edge[pre] (b13cau)
edge[post, bend left] (b14c);
\node[transition] (e2) at (1,4.5) {$e_2$}
edge[pre, bend right] (b23c)
edge[pre] (b12c)
edge[pre] (b21)
edge[post] (b24cau)
edge[post] (b22);
\node[transition] (e3) at (7.7,1.7) {$e_3$}
edge[pre, bend right] (b23c)
edge[pre] (b34c)
edge[pre, bend right] (b13cau)
edge[pre] (b31)
edge[post] (b32);
\node[revtransition] (re3) at (8.2,-1) {$\underline{e}_3$}
edge[post, bend right] (b23c)
edge[post, bend left] (b34c)
edge[post, bend left] (b13cau)
edge[post] (b31)
edge[pre] (b32);
\node[transition] (e4) at (7.5,4.5) {$e_4$}
edge[pre, bend left] (b14c)
edge[pre] (b34c)
edge[pre] (b24cau)
edge[pre] (b41)
edge[post] (b42);
%\node (name) at (1.5,0) {$C$};
%\node[place,tokens=1] (b11) at (6,3) [label=left:$b_1$] {};
%\node[place] (b21) at (9,3) [label=right:$b_2$] {};
%\node[place,tokens=1] (b41) at (6,1) [label=left:$b_3$] {};
%\node[place] (b51) at (9,1)  [label=right:$b_4$] {};
%\node[transition] (e11) at (7.5,3) {$e_1$}
%edge[pre] (b11)
%edge[post] (b21);
%\node[revtransition] (re11) at (7.5,4) {$\underline{e}_1$}
%edge[pre, bend left] (b21)
%edge[post, bend right] (b11);
%\node[transition] (e21) at (7.5,1) {$e_2$}
%edge[pre] (b41)
%edge[prei] (b11)
%edge[post] (b51);
%\node (name) at (7.5,0) {$C'$};
%\node[place,tokens=1] (b12) at (12.5,3) [label=left:$b_1$] {};
%\node[place] (b22) at (15.5,3) [label=right:$b_2$] {};
%\node[place,tokens=1] (b42) at (12.5,1) [label=left:$b_3$] {};
%\node[place] (b52) at (15.5,1)  [label=right:$b_4$] {};
%\node[transition] (e12) at (14,3) {$e_1$}
%edge[pre] (b12)
%edge[post] (b22);
%\node[revtransition] (re12) at (11,4) {$\underline{e}_1$}
%edge[pre, bend left] (b22)
%edge[read] (b42)
%edge[post, bend left] (b12);
%\node[transition] (e22) at (14,1) {$e_2$}
%edge[pre] (b42)
%edge[prei] (b12)
%edge[post] (b52);
%\node (name) at (13.5,0) {$C''$};
\end{tikzpicture}}

  \end{center}
\end{example}

%!TEX root = main.tex
\section{Categories of causal r\pes\ and \rcn}\label{sec:category}
Occurrence nets and \pes es are equipped with morphisms and turned  into categories that are
related by suitable functors. 
In this section, we extends such constructions to \rcn\ and causal-r\pes. 
We start by recalling the notions of morphisms for occurrence nets and prime event structures.
\begin{definition}\label{de:occ-net-morph}
 Let $C_0 = \langle B_0, E_0, F_0, \mathsf{c}_0\rangle$ and $C_1 = \langle B_1, E_1, F_1, \mathsf{c}_1\rangle$
 be two occurrence nets. Then, an occurrence net \emph{morphism} from $C_0$ to $C_1$ is a pair $(\beta,\eta)$ where
 $\beta$ is a relation between $B_0$ and $B_1$, $\eta : E_0\to E_1$ is a partial mapping such that
 \begin{itemize}
   \item $\beta(\mathsf{c}_0) = \mathsf{c}_1$ and for each $b_1\in \mathsf{c}_1$ there exists a unique
         $b_0\in \mathsf{c}_0$ such that $b_0 \beta b_1$, and
   \item for all $e_0\in E_0$, when $\eta(e_0)$ is defined and equal to $e_1$ then 
         $\beta(\pre{e_0}) = \pre{\eta(e_0)} = \pre{e_1}$
         and $\beta(\post{e_0}) = \post{\eta(e_0)} = \post{e_1}$.
 \end{itemize}
\end{definition}
As we consider just occurrence nets, we have also that if $b_0 \beta b_1$ and $(e_1,b_1)\in F_1$, then there
exists a unique event $e_0\in E_0$ such that $\eta(e_0) = e_1$ and $(e_0,b_0)\in F_0$. 
Consequently, if a condition $b_1$ in $C_1$ is related to a condition $b_0$ in $C_0$ 
by $\beta$, then the events producing these conditions are related 
by $\eta$; moreover, there is a unique event $e_0$ in $C_0$ mapped to the event $e_1$. 
Morphisms on occurrence nets compose and the identity mapping is defined as
the identity relation on conditions and the 
identity mapping on events. 
We have the category having as objects the occurrence nets and as morphisms the 
occurrence nets morphisms, which will be denoted with $\mathbf{Occ}$.

\begin{definition}\label{de:pes-morph}
 Let $P_0 = (E_0, <_0, \#_0)$ and $P_1 = (E_1, <_1, \#_1)$ be two \pes es. Then,  a
 \emph{\pes\ morphism} from $P_0$ to $P_1$ is a partial mapping $f : E_0 \to E_1$ such that
 \begin{enumerate}
   \item for all $e_0\in E_0$ if $f(e_0)$ is defined then $\hist{f(e_0)} \subseteq f(\hist{e_0})$, 
   \item for all $e_0, e_0'\in E_0$ such that  $f(e_0)$ and $f(e_0')$ are both defined and
         $f(e_0) \#_1 f(e_0')$ then $e_0 \#_0 e_0'$, and 
   \item for all $e_0, e_0'\in E_0$ such that  $f(e_0)$ and $f(e_0')$ are both defined and equal, then
         $e_0 \neq e_0'$ implies $e_0 \#_0 e_0'$.    
 \end{enumerate}
\end{definition}

%\changed{Again prime event structures morphisms compose, and we have a category whose objects prime event structures
%and morphisms as defined above, which is usually denoted as $\mathbf{PES}$.}
%{
Again \pes\ morphisms compose, and we have the category $\mathbf{PES}$ 
 whose objects are \pes es
and whose arrows are the \pes\ morphisms.

The  categories $\mathbf{Occ}$ and $\mathbf{PES}$ are related as follows:  each occurrence net is associated with a prime event structure 
as stated
in Proposition~\ref{pr:on-to-pes}; each occurrence net morphism
 $(\beta,\eta)$ is associated with  $\eta$, which turns out to be a \pes\ morphism. We denote such functor as
$\mathcal{P}$  (as in Proposition~\ref{pr:on-to-pes}).
Conversely, the mapping $\mathcal{E}$ introduced in Proposition~\ref{pr:pes-to-on} can be turned into a functor, 
as shown in \cite{Win:ES}.  
A \pes\ morphism $f$ is turned into an occurrence net morphism as follows:
the relation $\beta$ on conditions is defined such that (i) $(a,A) \beta (a',A')$ when $a = \bot = a'$, 
$A' = f(A) \neq \emptyset$, and $|A| = 1$, and (ii) $(e,A) \beta (f(e),A')$ when $f(e)$ is defined and
$A' = f(A) \neq \emptyset$;  the partial mapping on events is just $f$. It is then easy to
see that such definition indeed conforms an occurrence net morphism. 
The nice result is that these functors form a coreflection, where $\mathcal{C}$ is the right
adjoint and $\mathcal{E}$ the left adjoint.

We now recall the notion of r\pes\ morphisms introduced in \cite{GPY:CatRES}.
%and after that we introduce a morphism for reversible causal net.

\begin{definition}\label{de:rpes-morph}
  Let $P_0 = (E_0, \anR_0, <_0, \#_0, \prec_0, \triangleright_0)$ and $P_1 = (E_1, \anR_1, <_1, \#_1, \prec_0, \triangleright_0)$ be two r\pes. Then, an
 \emph{r\pes\ morphism} from $P_0$ to $P_1$ is a partial mapping $f : E_0 \to E_1$ such that
 \begin{enumerate}
   \item for all $e_0\in E_0$ if $f(e_0)$ is defined then 
         $\setcomp{e_1\in E_1}{e_1 <_1 f(e_0)} \subseteq \setcomp{f(e)}{e <_0 e_0}$, 
   \item for all $e_0, e_0'\in E_0$ such that  $f(e_0)$ and $f(e_0')$ are both defined and
         $f(e_0) \#_1 f(e_0')$ then $e_0 \#_0 e_0'$, 
   \item for all $e_0, e_0'\in E_0$ such that  $f(e_0)$ and $f(e_0')$ are both defined and equal, then
         $e_0 \neq e_0'$ implies $e_0 \#_0 e_0'$, 
   \item for all $e_0\in E_0$ and $e\in\anR_0$ such that
         $f(e_0)$ and $f(e)$ are both defined  and $f(e_0) \triangleright_1 f(e)$ then
         $e_0 \triangleright_0 e$, and
   \item for all $e_0\in\anR_0$ if $f(e_0)$ is defined then 
         $\setcomp{e_1\in E_1}{e_1 \prec_1 \underline{f(e_0)}} \subseteq \setcomp{f(e)}{e <_0 \underline{e_0}}$.          
 \end{enumerate}
\end{definition}
The notion of morphism above generalise the one on \pes\ by requiring that prevention is preserved as
well as the reverse causality relation. 
In \cite{GPY:CatRES} it is also shown that 
r\pes\ and r\pes\ morphisms form a category, which is called $\mathbf{RPES}$. 
%
%\changed{We restrict our attention to the subcategory having as objects causal-r\pes, and
%the morphisms between them do not change, and this subcategory is called $\mathbf{cRPES}$.}
We restrict our attention to the subcategory $\mathbf{cRPES}$, which 
 has causal-r\pes as objects and r\pes\ morphisms as arrows.

\begin{proposition}\label{pr:full-subcat}
 $\mathbf{cRPES}$ is a full subcategory of $\mathbf{RPES}$.
\end{proposition}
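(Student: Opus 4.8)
The plan is to read the claim off directly from the definitions: I would show that the objects of $\mathbf{cRPES}$ form a subclass of the objects of $\mathbf{RPES}$, that $\mathbf{cRPES}$ is closed under identities and composition (so that it is a genuine subcategory), and finally that for any two causal r\pes\ $P_0,P_1$ one has $\mathbf{cRPES}(P_0,P_1) = \mathbf{RPES}(P_0,P_1)$, which is exactly the definition of fullness.

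For the first point, I would note that the causal condition of Definition~\ref{de:cr-and-causal-rpes}---namely $e \prec \underline{\anr}$ iff $e = \anr$, and $e \triangleright \underline{\anr}$ iff $\anr < e$---is an \emph{additional} requirement imposed on something that is already an r\pes\ in the sense of Definition~\ref{de:rpes}. Hence every causal r\pes\ is an r\pes, so the object class of $\mathbf{cRPES}$ sits inside that of $\mathbf{RPES}$. For closure under the categorical operations I would use the fact, established in~\cite{GPY:CatRES}, that r\pes\ morphisms compose to r\pes\ morphisms and that each object carries an identity morphism: whenever the objects concerned are causal, the resulting identity and composite are again arrows between causal r\pes, hence arrows of $\mathbf{cRPES}$. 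This makes the inclusion a subcategory inclusion.

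Fullness is then immediate from the way $\mathbf{cRPES}$ is defined, since its arrows between two causal r\pes\ $P_0$ and $P_1$ are declared to be \emph{all} the r\pes\ morphisms from $P_0$ to $P_1$; thus $\mathbf{cRPES}(P_0,P_1) = \mathbf{RPES}(P_0,P_1)$ for every pair of objects, which is the statement of fullness. I do not expect a genuine mathematical obstacle here. The only point that deserves care is to make explicit that no extra constraint is placed on a morphism between causal objects beyond the five conditions of Definition~\ref{de:rpes-morph}; in particular, an r\pes\ morphism between causal r\pes\ need not itself respect the causal characterisation of $\prec$ and $\triangleright$, and it is precisely because no such further constraint is imposed that fullness holds by definition rather than requiring a separate verification.
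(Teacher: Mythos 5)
Your proof is correct and follows exactly the route the paper takes: the paper gives no separate argument for this proposition, treating it as immediate from the definition of $\mathbf{cRPES}$ as having causal r\pes{es} as objects and \emph{all} r\pes\ morphisms between them as arrows, which is precisely the definitional-fullness observation you make. Your additional care in noting closure under identities and composition (via \cite{GPY:CatRES}) and that no extra constraint is imposed on morphisms between causal objects simply makes explicit what the paper leaves implicit.
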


We now turn our attention to reversible causal nets and introduce the notion of morphisms for reversible causal nets.

\begin{definition}\label{de:rcn-morph}
 Let $R_0 = \langle B_0, E_0, \Er_0, F_0, \mathsf{c}_0\rangle$ and 
 $R_1 = \langle B_1, E_1, \Er_1, F_1, \mathsf{c}_1\rangle$ be two \rcn{s}.
 Then an \rcn\ \emph{morphism} from $R_0$ to $R_1$ is the pair $(\beta,\eta)$ such that
 \begin{itemize}
   \item $(\beta,\eta)$ restricted to the occurrence nets  
         $\langle B_0, E_0\setminus \Er_0, F'_0, \mathsf{c}_0\rangle$ to
         $\langle B_1, E_1\setminus \Er_1, F'_1, \mathsf{c}_1\rangle$ is an occurrence net morphism, 
         and
   \item $\eta(\Er_0) \subseteq \Er_1$ and if $e\in \Er_0$ and $\eta(e)$ is defined then
         also $f(e')$ is defined, where $e' = h(e)$.      
 \end{itemize}
\end{definition}
It is straightforward to check that \rcn\ morphisms compose and that the identity relation and the identity mapping on events
conforms a \rcn\  morphism. Hence,  \rcn\ together with \rcn\ morphism form a category, which we call
$\mathbf{rOcc}$. 

Then, the definition of $\mathcal{C}$ in Proposition~\ref{pr:rce-to-rpes} can be \changed{easily}{} extended to a functor.
\begin{proposition}\label{pr:C-is-functor} 
 $\mathcal{C} : \mathbf{rOcc} \to \mathbf{cRPES}$ acting on objects as in Proposition~\ref{pr:rce-to-rpes}
 and on morphisms $(\beta,\eta) : R_0 \to R_1$ as $\eta$ restricted to the events that are not
 reversing ones, is a functor.
\end{proposition}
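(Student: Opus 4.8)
The plan is to verify that $\mathcal{C}$ as defined respects identities and composition, after first checking that it is well-defined on both objects and arrows. Well-definedness on objects is already supplied by Propositions~\ref{pr:rce-to-rpes} and~\ref{prp:rpes_respect}, which guarantee that $\mathcal{C}(R)$ is a causal r\pes, hence an object of $\mathbf{cRPES}$. So the real content is that the action on morphisms, namely sending $(\beta,\eta) : R_0 \to R_1$ to $\eta$ restricted to $\overline{E_0} = E_0 \setminus \Er_0$, actually lands in the set of r\pes\ morphisms from $\mathcal{C}(R_0)$ to $\mathcal{C}(R_1)$.

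First I would fix notation: write $g = \eta\!\restriction_{\overline{E_0}}$ for the restricted partial mapping, and recall that $\mathcal{C}(R_i)$ has event set $\overline{E_i}$, causality $<_{C_{\overline{E_i}}}$, conflict inherited from $C_{\overline{E_i}}$, reverse causality given by $e \prec \underline{e}$, and prevention given by $e \triangleright \underline{e'}$ iff $e \in \future{e'}$ (i.e.\ $e' <_{R} e$). The bulk of the argument is to check the five conditions of Definition~\ref{de:rpes-morph} for $g$. Conditions~(1)--(3) (preservation of causal predecessors, reflection of conflict, and the injectivity-up-to-conflict clause) follow directly from the fact that $(\beta,\eta)$ restricted to the underlying occurrence nets $C_{\overline{E_0}} \to C_{\overline{E_1}}$ is an occurrence net morphism: this is exactly what the first bullet of Definition~\ref{de:rcn-morph} asserts, and these three clauses are precisely the \pes\ morphism conditions (Definition~\ref{de:pes-morph}) transported along the identification of $\mathcal{P}(C_{\overline{E_i}})$ with the p\pes\ part of $\mathcal{C}(R_i)$. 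Because the causality and conflict of $\mathcal{C}(R_i)$ are literally those of $C_{\overline{E_i}}$, no extra work is needed beyond invoking that the functor $\mathcal{P}$ on $\mathbf{Occ}\to\mathbf{PES}$ sends occurrence net morphisms to \pes\ morphisms.

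The two genuinely new clauses are~(4) and~(5), concerning prevention and reverse causality. For clause~(5), since $\mathcal{C}(R_i)$ is causal we have $e \prec_i \underline{\anr}$ iff $e = \anr$; so the set $\setcomp{e_1}{e_1 \prec_1 \underline{g(e_0)}}$ is just $\setenum{g(e_0)}$ when $g(e_0)$ is reversible, and this is clearly contained in $\setcomp{g(e)}{e \prec_0 \underline{e_0}} = \setcomp{g(e)}{e = e_0}$, which contains $g(e_0)$. I must also use the second bullet of Definition~\ref{de:rcn-morph}, $\eta(\Er_0)\subseteq\Er_1$ together with the condition that $\eta$ being defined on a reversing event forces it to be defined on the associated forward event, to ensure reversibility is preserved under $g$ in the way clause~(5) quantifies. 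For clause~(4), I would unfold the causal characterisation of prevention: $g(e_0) \triangleright_1 g(e)$ means $g(e) <_{R_1} g(e_0)$ (equivalently $g(e_0) \in \future{g(e)}$ in $R_1$); I must then deduce $e_0 \triangleright_0 e$, i.e.\ $e <_{R_0} e_0$. This reflection of the causal order along $g$ is where the occurrence net morphism structure is used again, via the fact that $\beta$ relates presets and postsets and hence reflects the flow order on the forward nets.

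The main obstacle I anticipate is clause~(4): reflecting strict causality $g(e) < g(e_0)$ back to $e < e_0$ is not immediate from the morphism conditions, which only directly give that images of causal predecessors of $e_0$ cover the causal predecessors of $g(e_0)$ (condition~(1) is a covering/preservation statement, not a reflection statement). I would handle this by working with the presets and postsets: since $g(e) <_{R_1} g(e_0)$ in the occurrence net $C_{\overline{E_1}}$, there is a directed $F_1$-path, and using that $\beta$ is defined so that conditions related by $\beta$ have their unique producing events related by $\eta$, I can lift this path step-by-step back through $R_0$ to obtain $e <_{R_0} e_0$. Once clauses~(1)--(5) are established, functoriality is routine: $\mathcal{C}(\mathrm{id}) = \mathrm{id}$ because restricting the identity relation/identity map to $\overline{E}$ yields the identity on $\overline{E}$, and $\mathcal{C}((\beta_1,\eta_1)\circ(\beta_0,\eta_0)) = \mathcal{C}(\beta_1,\eta_1)\circ\mathcal{C}(\beta_0,\eta_0)$ because restriction commutes with composition of the event components (one only checks that the restriction of $\eta_1\circ\eta_0$ to $\overline{E_0}$ agrees with $(\eta_1\!\restriction_{\overline{E_1}})\circ(\eta_0\!\restriction_{\overline{E_0}})$, which holds because $\eta_0$ maps $\overline{E_0}$ into $\overline{E_1}$).
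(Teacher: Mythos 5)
Your overall decomposition coincides with the paper's own proof: well-definedness on objects via Propositions~\ref{pr:rce-to-rpes} and~\ref{prp:rpes_respect}; clauses~(1)--(3) of Definition~\ref{de:rpes-morph} obtained from the occurrence-net part of Definition~\ref{de:rcn-morph} through the functor $\mathcal{P}$; clause~(5) essentially trivial because reverse causality in the associated causal r\pes es is just $e \prec \underline{e}$; and the identity/composition bookkeeping routine. You also correctly single out clause~(4) --- reflection of prevention, which by the causal characterisation is reflection of causality --- as the one substantive point. Your diagnosis there is in fact sharper than the paper's, whose proof disposes of this step in one line (from $\eta(e_0')\in\hist{\eta(e_0)}$ it asserts $e_0'\in\hist{e_0}$ ``as $\eta$ is defined on both'').

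However, your path-lifting repair does not close the gap, and the gap is genuine. The lifting property of occurrence net morphisms produces, from a directed $F_1$-path ending at $g(e_0)$, a directed $F_0$-path ending at $e_0$ whose $\eta$-image is the given path; but its first event is only guaranteed to be \emph{some} $\eta$-preimage of $g(e)$, not $e$ itself. Since \pes\ and occurrence net morphisms may identify \emph{conflicting} events (clause~(3) of Definition~\ref{de:pes-morph} explicitly permits $\eta(e)=\eta(e')$ when $e\ \#_0\ e'$), the lifted path can start at a conflicting sibling of $e$, and then $e <_0 e_0$ simply fails. Concretely, let $R_0$ have forward events $a,b,c$ with $\pre{a}=\pre{b}=\setenum{s_1}$ (so $a\ \#_0\ b$), $\pre{c}=\post{a}$ (so $a <_0 c$, hence $b\ \#_0\ c$), and let $b$ be reversible; let $R_1$ have $\alpha <_1 \gamma$ with $\alpha$ reversible; let $\eta$ map $a\mapsto\alpha$, $b\mapsto\alpha$, $c\mapsto\gamma$, $\underline{b}\mapsto\underline{\alpha}$, with the evident $\beta$. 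This satisfies Definition~\ref{de:rcn-morph}, yet $\eta(c)=\gamma\ \triangleright_1\ \underline{\alpha}=\underline{\eta(b)}$ while $c\ \triangleright_0\ \underline{b}$ is false, since $\future{b}=\emptyset$ in $C_{\overline{E_0}}$ (indeed $b\ \#_0\ c$, and conflict excludes causality). Here the lift of the target path $\alpha,t_2,\gamma$ anchored at $c$ is $a,s_2,c$: it starts at $a$, not at $b$. So clause~(4) cannot be derived for arbitrary \rcn\ morphisms; your argument would need injectivity of $\eta$, or an additional reflection condition in Definition~\ref{de:rcn-morph}. Note that this is not merely a failure to reproduce the paper's trick: the paper's one-line justification of the same step breaks on the same example.
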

%\begin{proof}
% For the objects part we have Proposition~\ref{pr:rce-to-rpes}.
% For the part on morphisms is enough to observe that the requirements to fulfil are the last two
% of Definition~\ref{de:rpes-morph}. For the first one is enough to observe that
% the prevention relation is induced by the causality relation, and 
% $\eta(e_0) \triangleright_1 \eta(e_0')$ then $\eta{e_0'}\in\hist{f(e_0)}$ which means
% that $e_0'\in\hist{e_0}$ as $\eta$ is defined on both, and then $e_0 \triangleright_0 e_0'$.
% For the last point we have that a reversing event $\underline{e}$ is preceded in the
% reversible causal net by $e$ itself alone, hence we have that $e \prec_0 \underline{e}$.
% Now reversing events are preserved by reversible causal nets, and if $\eta{e}$ is defined
% we have also that the reversing event is defined which implies that 
% $\eta{e} \prec_1 \underline{\eta{e}}$. The thesis follows. \qed
%\end{proof}

Also the construction in Definition~\ref{de:rpestorcn} can be extended to a functor. 

\begin{proposition}\label{pr:E-is-functor} 
 $\mathcal{E}_{r} : \mathbf{cRPES} \to \mathbf{rOcc}$ acting on objects as in Definition~\ref{de:rpestorcn}
 and on morphisms as stipulated in for occurrence net, requiring that reversing events are
 preserved, is a functor.
\end{proposition}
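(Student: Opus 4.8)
The plan is to establish that $\mathcal{E}_r$ respects identities and composition and that it actually produces morphisms in $\mathbf{rOcc}$ from morphisms in $\mathbf{cRPES}$. On objects, $\mathcal{E}_r$ is already known to land in $\mathbf{rOcc}$ by Proposition~\ref{prp:rpes_to_cnet}, so the work is entirely on morphisms. Given a causal r\pes\ morphism $f : P_0 \to P_1$, I would define the action $\mathcal{E}_r(f) = (\beta, \eta)$ on the generated nets $\mathcal{E}_r(P_0)$ and $\mathcal{E}_r(P_1)$ by reusing the classical \pes-to-\cn\ recipe recalled just before Proposition~\ref{pr:full-subcat}: the event mapping $\eta$ sends forward events $(e,\fe)$ to $(f(e),\fe)$ when $f(e)$ is defined, and sends each reversing event $(e,\re)$ with $e\in\anR_0$ to the reversing event $(f(e),\re)$ whenever $f(e)$ is defined (and $f(e)\in\anR_1$); the condition relation $\beta$ is defined exactly as in the \pes\ case, namely $(a,A)\,\beta\,(a',A')$ according to the two clauses $(a=\bot=a',\ A'=f(A)\neq\emptyset,\ |A|=1)$ and $((e,A)\,\beta\,(f(e),A')$ when $f(e)$ is defined and $A'=f(A)\neq\emptyset)$.

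First I would verify that $(\beta,\eta)$ is an honest \rcn\ morphism by checking the two clauses of Definition~\ref{de:rcn-morph}. For the first clause, since deleting all reversing events from $\mathcal{E}_r(P_i)$ yields exactly the occurrence net $\mathcal{E}(P_i)$ associated to the underlying \pes\ (this is the content of the ``in essence'' remark after Definition~\ref{de:rpestorcn}, and is legitimate because Proposition~\ref{pr:pes-associated-to-crpes} guarantees the conflict relation of a causal r\pes\ is inherited along $<^{+}$, so the underlying triple is a genuine \pes), the restriction of $(\beta,\eta)$ to these occurrence nets coincides with the known functorial action of $\mathcal{E}$ on the \pes\ morphism $f$, which is already an occurrence net morphism by the cited result of Winskel. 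For the second clause I must check that $\eta(\Er_0)\subseteq\Er_1$ and that whenever $e\in\Er_0=\anR_0\times\setenum{\re}$ has $\eta(e)$ defined, the image $\eta(h(e))$ of its associated forward event is also defined. This is immediate from the definition of $\eta$ on reversing events together with the requirement $h(e,\re)=(e,\fe)$: $\eta(e,\re)=(f(e),\re)$ is defined exactly when $f(e)$ is, which is exactly when $\eta(h(e,\re))=\eta(e,\fe)=(f(e),\fe)$ is defined.

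Next I would check the mandatory functor laws. Preservation of identities is routine: if $f=\mathrm{id}_{E_0}$ then $\eta$ is the identity on $\hat{E}_0$ and $\beta$ is the identity relation on conditions, which is exactly the identity morphism of $\mathcal{E}_r(P_0)$ in $\mathbf{rOcc}$. Preservation of composition, $\mathcal{E}_r(g\circ f)=\mathcal{E}_r(g)\circ\mathcal{E}_r(f)$, reduces to two independent verifications: on the occurrence-net part it is inherited from the fact that the classical $\mathcal{E}$ is already a functor on $\mathbf{PES}\to\mathbf{Occ}$ (so the $\beta$-relations and the forward-event action compose correctly), and on the reversing events it is the trivial observation that $(e,\re)\mapsto(g(f(e)),\re)$ agrees with first applying $f$ then $g$ to the underlying event and keeping the $\re$-tag, with definedness managed by the second clause above.

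I expect the main obstacle to be the reversing-event bookkeeping rather than the forward part, which is wholly borrowed from the established \pes/\cn\ coreflection. Concretely, the delicate point is confirming that the $\eta$ I defined on reverse events is a \emph{well-defined partial} mapping that simultaneously (i) stays inside $\Er_1$, (ii) satisfies the definedness-coupling with its forward partner demanded by Definition~\ref{de:rcn-morph}, and (iii) is compatible with composition even when $f$ is only partially defined; here I would lean on clause~5 of the r\pes\ morphism Definition~\ref{de:rpes-morph} (preservation of reverse causality) and on the causal shape of the reverse-causality and prevention relations (Definition~\ref{de:cr-and-causal-rpes}) to guarantee that $f(e)\in\anR_1$ whenever $e\in\anR_0$ and $f(e)$ is defined, so that the target reversing event $(f(e),\re)$ genuinely exists. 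Once this coherence between $f$'s behaviour on reversible events and the net's $h$-mapping is pinned down, the remaining verifications are routine instances of the classical argument.
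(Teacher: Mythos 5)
Your proof is correct and takes essentially the same approach as the paper's: the forward part is inherited from the classical functoriality of $\mathcal{E} : \mathbf{PES} \to \mathbf{Occ}$, and the only genuinely new obligation is the preservation of reversing events, which holds because r\pes\ morphisms already map reversible events to reversible events (the paper's entire proof is exactly this one-line observation). Your clause-by-clause verification of Definition~\ref{de:rcn-morph}, the definedness coupling with $h$, and the functor laws simply fill in the details the paper leaves implicit.
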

%\begin{proof}
% The only condition to check is that reversing events are preserved, but this trivially true as
% causal r\pes\ morphisms do this.
%\end{proof}

Along the lines of \cite{Win:ES} we can establish a relation between the categories
$\mathbf{cRPES}$ and  $\mathbf{rOcc}$.

\begin{theorem}\label{th:coreflection} 
 $\mathcal{E}_{r}$ and $\mathcal{C}_{r}$ form a coreflection, where $\mathcal{C}_{r}$ is the right
adjoint and $\mathcal{E}_{r}$ the left adjoint.
\end{theorem}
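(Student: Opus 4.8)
The plan is to deduce the coreflection from the classical one between $\mathbf{PES}$ and $\mathbf{Occ}$ recalled above, with left adjoint $\mathcal{E}$, right adjoint $\mathcal{P}$, and unit $\theta_P : P \to \mathcal{P}(\mathcal{E}(P))$ a natural isomorphism, by lifting it along the passage that adds or discards reversing events. Since a coreflection with left adjoint $\mathcal{E}_{r}$ is equivalent to the unit $\eta : \mathrm{Id}_{\mathbf{cRPES}} \Rightarrow \mathcal{C}_{r}\circ\mathcal{E}_{r}$ being a natural isomorphism, the core of the argument is to exhibit such a unit and then to verify the universal-arrow property.

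First I would compute $\mathcal{C}_{r}(\mathcal{E}_{r}(\mathsf{P}))$ for a causal r\pes\ $\mathsf{P} = (E, \anR, <, \#, \prec, \triangleright)$. Because $\mathsf{P}$ is causal we have $\ll\ =\ <$, and by Proposition~\ref{pr:pes-associated-to-crpes} the triple $(E, <, \#)$ is a genuine \pes; hence the forward fragment $C_{\overline{E}}$ of $\mathcal{E}_{r}(\mathsf{P})$ coincides with the occurrence net $\mathcal{E}(E,<,\#)$, and its reversing events stand in bijection with $\anR$ through the map $h$. Applying $\mathcal{C}_{r}$ therefore returns the forward \pes\ $\mathcal{P}(\mathcal{E}(E,<,\#))$ together with the reverse structure $e \prec \underline{e'} \Leftrightarrow e = e'$ and $e \triangleright \underline{e'} \Leftrightarrow e' < e$. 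Defining $\eta_{\mathsf{P}}$ as the renaming $e \mapsto (e,\fe)$, its forward component is exactly the Winskel unit $\theta$, which is an isomorphism; and since in a causal r\pes\ both $\prec$ and $\triangleright$ are fully determined by $=$ and $<$ (Definition~\ref{de:cr-and-causal-rpes}), these two relations are preserved and reflected as well. Thus $\eta_{\mathsf{P}}$ is an isomorphism of r\pes es, and its naturality in $\mathsf{P}$ is inherited from that of $\theta$ together with the functorial handling of reversing events built into Propositions~\ref{pr:C-is-functor} and~\ref{pr:E-is-functor}.

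It remains to establish the adjunction by showing that each $\eta_{\mathsf{P}}$ is a universal arrow from $\mathsf{P}$ to $\mathcal{C}_{r}$: for every \rcn\ $R = \langle B, E_R, \Er_R, F, \mathsf{c}\rangle$ and every r\pes\ morphism $g : \mathsf{P} \to \mathcal{C}_{r}(R)$ there is a unique \rcn\ morphism $\bar{g} : \mathcal{E}_{r}(\mathsf{P}) \to R$ with $\mathcal{C}_{r}(\bar{g})\circ\eta_{\mathsf{P}} = g$. On forward events $\bar{g}$ is produced by the universal property of the $\mathbf{PES}$--$\mathbf{Occ}$ coreflection, applied to the forward \pes\ of $\mathsf{P}$ and to the occurrence net $C_{\overline{E_R}}$ underlying $\mathcal{C}_{r}(R)$; on reversing events $\bar{g}$ is then forced, since Definition~\ref{de:rcn-morph} requires a reversing event to be sent to the reversing partner of the image of its associated forward event, so both existence and uniqueness on $\Er$ follow from the $h$-correspondence. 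The main obstacle is verifying that this forced extension is a \emph{legitimate} \rcn\ morphism, i.e.\ that clauses~4 and~5 of Definition~\ref{de:rpes-morph}, governing prevention and reverse causality, translate precisely into the reversing-event clause of Definition~\ref{de:rcn-morph}. This is exactly where the classical machinery is silent, and it is settled by observing that in the causal setting $e \triangleright \underline{e'}$ coincides with $e' < e$ and $e \prec \underline{e'}$ with $e = e'$, so that preservation of causality by the forward component automatically yields preservation of prevention and of reverse causality. Checking the triangle identities is then routine, and the natural isomorphism $\eta$ witnesses that $\mathcal{E}_{r}$ is full and faithful, giving the coreflection.
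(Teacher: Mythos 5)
Your proposal is correct and takes essentially the same route as the paper's own proof: both reduce the statement to the classical $\mathbf{PES}$--$\mathbf{Occ}$ coreflection, using that $\mathcal{C}_{r}(\mathcal{E}_{r}(\mathsf{P}))$ is $\mathsf{P}$ up to the renaming $e \mapsto (e,\fe)$ (so the unit is essentially the identity), and that the mediating \rcn\ morphism is given by the classical universal property on forward events and is forced on reversing events through the correspondence $h$. The paper states exactly this argument, only far more tersely, so your write-up merely supplies the details (the explicit unit, the verification that prevention and reverse causality are automatically preserved in the causal setting) that the paper leaves implicit.
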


% !TEX root = main.tex

\section{Conclusions and future works}\label{sec:conc}
The constructions we have proposed to associate a reversible causal net to a causal 
reversible prime event structure, and vice versa, are certainly driven by the classical ones 
(see \cite{Win:ES}) for relating occurrence nets and prime event structures. 
The consequence of this approach is that the causality relation, either the one given in a r\pes\ or
the one induced by the flow relation in the occurrence net obtained ignoring the reversing
events, is the one driving the construction. One of the other two relations of an r\pes\ is substantially
ignored (and we obtain from a \rcn\ a causal r\pes\ where the reverse causality relation just 
says that an event can be reversed only after it has occurred) whereas the second (prevention)
is tightly related to the causality relation: $b$ is caused by $a$ precisely when $b$ prevents 
of undoing of $a$.
The notion of reversible causal net we have proposed suggests this construction, so the problem 
of finding which kind of net would correspond to, for example, a cause-respecting or even
an arbitrary r\pes\ remains open and
certainly deserves to be investigated.

It is however interesting to observe that the construction in Definition \ref{de:rpestorcn}
gives a reversible causal net even when the r\pes\ one started with is not a causal r\pes. 
 Consider the r\pes\ with two events $\setenum{e_1, e_2}$ such that
 $e_1 < e_2$ and where the conflict and the prevention relations are empty.
 The only reversible event is $e_1$ and $e_1 \prec \underline{e}_1$.
 The set $\{e_2\}$ is a reachable configuration: we can remove $e_1$ from a reachable configuration 
$\setenum{e_1, e_2}$ by performing the event $\underline{e}_1$. 
This is an example of out-of-causal order computation. 
Given this r\pes, our construction produces the following \rcn, which does not have $\{e_2\}$ among its configurations.
 \begin{center}
   \newcommand{\annone}{$(\bot,\setenum{e_1})$}
\newcommand{\anntwo}{$(e_1,\emptyset)$}
\newcommand{\annthree}{$(e_1,\setenum{e_2}$}
\newcommand{\annfour}{$(\bot,\setenum{e_2})$}
\newcommand{\annfive}{$(e_2,\emptyset)$}
\scalebox{0.8}{
\begin{tikzpicture}
[bend angle=30, scale=.9, 
  pre/.style={<-,shorten
    <=0pt,>=stealth,>={Latex[width=1mm,length=1mm]},thick}, post/.style={->,shorten
    >=0,>=stealth,>={Latex[width=1mm,length=1mm]},thick}, rpost/.style={=>,shorten
    >=0,>=stealth,>={Latex[width=1mm,length=1mm]},thick}, place/.style={circle, draw=black,
    thick,minimum size=5mm}, transition/.style={rectangle, draw=black,
    thick,minimum size=5mm}, revtransition/.style={rectangle, draw=red,
    thick,fill=red!50,minimum size=5mm}, invplace/.style={circle,
    draw=black!0,thick}]
\node[place,tokens=1] (b1) at (0,3) [label=left:\annone] {};
\node[place] (b2) at (3,3) [label=below:\anntwo] {};
\node[place] (b3) at (4.5,2.25) [label=right:\annthree] {};
\node[place,tokens=1] (b4) at (1.5,1) [label=below:\annfour] {};
\node[place] (b5) at (4.5,1)  [label=below:\annfive] {};
\node[transition] (e1) at (1.5,2.4) {$e_1$}
edge[pre] (b1)
edge[post] (b2)
edge[post, bend right] (b3);
\node[transition] (e2) at (3,1) {$e_2$}
edge[pre] (b4)
edge[pre] (b3)
edge[post] (b5);
\node[revtransition] (re1) at (1.5,4) {$\underline{e}_1$}
edge[pre, bend left] (b2)
edge[pre, bend left] (b3)
edge[post, bend right] (b1);
\end{tikzpicture}}

 \end{center}
 % 
%This suggests that, for arbitrary r\pes es, we need to find relations different from the flow 
%relation to capture faithfully (forward and reverse) causal and prevention dependencies.
%%
%This will be the subject of future research.
%\endinput 
\noindent
The constructions we have proposed are somehow the more adherent to what is usually done, based on the 
interpretation that \emph{causality} implies that the event causing some other event somehow
produces something that it is used by the latter. 
This is not the only interpretation of what causality could mean. 
In fact, causality is often confused with the observation that two causally related events appear
ordered in the same way in each possible execution, and when we talk about ordered execution, 
it should be stressed that this can be achieved in several ways, for instance using \emph{inhibitor}
arcs. 
Consider the net $C$:% on the left. 
\begin{center}
\newcommand{\annone}{$(e_1,\ast)$}
\newcommand{\anntwo}{$(\ast,e_1)$}
\newcommand{\annthree}{$(e_1,e_2,<)$}
\newcommand{\annfour}{$(e_2,\ast)$}
\newcommand{\annfive}{$(\ast,e_2)$}
\scalebox{0.8}{
\begin{tikzpicture}
[bend angle=30, scale=.9, 
  pre/.style={<-,shorten
    <=0pt,>=stealth,>={Latex[width=1mm,length=1mm]},thick}, post/.style={->,shorten
    >=0,>=stealth,>={Latex[width=1mm,length=1mm]},thick}, prei/.style={o-,shorten
    <=0pt,>=stealth,>={Latex[width=1mm,length=1mm]},thick}, read/.style={-,shorten
    <=0pt,,>=stealth,>={Latex[width=1mm,length=1mm]},draw=red,very thick}, place/.style={circle, draw=black,
    thick,minimum size=5mm}, transition/.style={rectangle, draw=black,
    thick,minimum size=5mm}, revtransition/.style={rectangle, draw=red,
    thick,fill=red!50,minimum size=5mm}, invplace/.style={circle,
    draw=black!0,thick}]
\node[place,tokens=1] (b1) at (0,3) [label=left:$b_1$] {};
\node[place] (b2) at (3,3) [label=right:$b_2$] {};
\node[place,tokens=1] (b4) at (0,1) [label=left:$b_3$] {};
\node[place] (b5) at (3,1)  [label=right:$b_4$] {};
\node[transition] (e1) at (1.5,3) {$e_1$}
edge[pre] (b1)
edge[post] (b2);
\node[transition] (e2) at (1.5,1) {$e_2$}
edge[pre] (b4)
edge[prei] (b1)
edge[post] (b5);
\node (name) at (1.5,0) {$C$};
\node[place,tokens=1] (b11) at (6,3) [label=left:$b_1$] {};
\node[place] (b21) at (9,3) [label=right:$b_2$] {};
\node[place,tokens=1] (b41) at (6,1) [label=left:$b_3$] {};
\node[place] (b51) at (9,1)  [label=right:$b_4$] {};
\node[transition] (e11) at (7.5,3) {$e_1$}
edge[pre] (b11)
edge[post] (b21);
\node[revtransition] (re11) at (7.5,4) {$\underline{e}_1$}
edge[pre, bend left] (b21)
edge[post, bend right] (b11);
\node[transition] (e21) at (7.5,1) {$e_2$}
edge[pre] (b41)
edge[prei] (b11)
edge[post] (b51);
\node (name) at (7.5,0) {$C'$};
\node[place,tokens=1] (b12) at (12.5,3) [label=left:$b_1$] {};
\node[place] (b22) at (15.5,3) [label=right:$b_2$] {};
\node[place,tokens=1] (b42) at (12.5,1) [label=left:$b_3$] {};
\node[place] (b52) at (15.5,1)  [label=right:$b_4$] {};
\node[transition] (e12) at (14,3) {$e_1$}
edge[pre] (b12)
edge[post] (b22);
\node[revtransition] (re12) at (11,4) {$\underline{e}_1$}
edge[pre, bend left] (b22)
edge[read] (b42)
edge[post, bend left] (b12);
\node[transition] (e22) at (14,1) {$e_2$}
edge[pre] (b42)
edge[prei] (b12)
edge[post] (b52);
\node (name) at (13.5,0) {$C''$};
\end{tikzpicture}}

\end{center}
the event $e_2$ can be executed only after the event $e_1$ has been executed. However, $e_1$ 
does not produce a token (resource) that must be used by $e_2$. If we simply make
the event $e_1$ reversible but do nothing to prevent reversing of $e_1$ before $e_2$ is reversed,
then we would obtain the net $C'$. We could do better in $C''$ where we model the prevention using the
so-called \emph{read arcs} \cite{MR:CN}. Hence, using the inhibitor or read arcs seem feasible way
forwards to capture more precise the new relations of  r\pes es, including prevention. 
A similar approach has been already pursued in
\cite{CP:soap17} to model the so called modifiers that are able to change the causality pattern of
an event.
This suggests that, for arbitrary r\pes es, we need to find relations different from the flow 
relation to capture faithfully (forward and reverse) causal and prevention dependencies.
This will be the subject of future research.

%Another issue not discussed in this paper concerns the \emph{categorical} treatment of the subclass
%of nets proposed here. We recall that indeed the categorical approach is 
%the underlying machinery for establishing the connection between event structures and 
%nets. We have not investigated yet how the usual morphisms definable on nets (that may
%easily encompass read and inhibitor arcs, as it is done in \cite{BBCP:rivista}) may be related
%to the morphisms on reversible event structures proposed in \cite{GPY:jlamp19}. This would probably
%trigger a better understanding of which kind of arcs and how should be used to model the various
%relations defined in reversible event structures.

\newpage
\bibliography{biblio}

\begin{thebibliography}{10}

\bibitem{AKPN:lmcs18}
Youssef Arbach, David~S. Karcher, Kirstin Peters, and Uwe Nestmann.
\newblock {Dynamic Causality in Event Structures}.
\newblock {\em {Logical Methods in Computer Science}}, {Volume 14, Issue 1},
  February 2018.
\newblock URL: \url{https://lmcs.episciences.org/4317}, \href
  {http://dx.doi.org/10.23638/LMCS\-14(1:17)2018}
  {\path{doi:10.23638/LMCS\-14(1:17)2018}}.

\bibitem{BBCP:rivista}
Paolo Baldan, Nadia Busi, Andrea Corradini, and G.~Michele Pinna.
\newblock Domain and event structure semantics for {P}etri nets with read and
  inhibitor arcs.
\newblock {\em Theoretical Computer Science}, 323(1-3):129--189, 2004.
\newblock URL: \url{https://doi.org/10.1016/j.tcs.2004.04.001}, \href
  {http://dx.doi.org/10.1016/j.tcs.2004.04.001}
  {\path{doi:10.1016/j.tcs.2004.04.001}}.

\bibitem{BCM:CNAED}
Paolo Baldan, Andrea Corradini, and Ugo Montanari.
\newblock Contextual {P}etri nets, asymmetric event structures and processes.
\newblock {\em Information and Computation}, 171(1):1--49, 2001.
\newblock URL: \url{https://doi.org/10.1006/inco.2001.3060}, \href
  {http://dx.doi.org/10.1006/inco.2001.3060}
  {\path{doi:10.1006/inco.2001.3060}}.

\bibitem{BCP:LenNets}
Massimo Bartoletti, Tiziana Cimoli, and G.~Michele Pinna.
\newblock Lending {P}etri nets.
\newblock {\em Science of Computer Programming}, 112:75--101, 2015.
\newblock URL: \url{http://dx.doi.org/10.1016/j.scico.2015.05.006}, \href
  {http://dx.doi.org/10.1016/j.scico.2015.05.006}
  {\path{doi:10.1016/j.scico.2015.05.006}}.

\bibitem{Bou:FESFN}
Gerard Boudol.
\newblock Flow {E}vent {S}tructures and {F}low {N}ets.
\newblock In {\em Semantics of System of Concurrent Processes}, volume 469 of
  {\em Lecture Notes in Computer Science}, pages 62--95. Springer Verlag, 1990.

\bibitem{CaPi:PN14}
Giovanni Casu and G.~Michele Pinna.
\newblock Flow unfolding of multi-clock nets.
\newblock In Gianfranco Ciardo and Ekkart Kindler, editors, {\em {Petri} {Nets}
  2014}, volume 8489 of {\em Lecture Notes in Computer Science}, pages
  170--189. Springer, 2014.

\bibitem{CP:soap17}
Giovanni Casu and G.~Michele Pinna.
\newblock Petri nets and dynamic causality for service-oriented computations.
\newblock In Ahmed Seffah, Birgit Penzenstadler, Carina Alves, and Xin Peng,
  editors, {\em Proceedings of {SAC} 2017}, pages 1326--1333. {ACM}, 2017.
\newblock URL: \url{https://doi.org/10.1145/3019612.3019806}, \href
  {http://dx.doi.org/10.1145/3019612.3019806}
  {\path{doi:10.1145/3019612.3019806}}.

\bibitem{rccs}
Vincent Danos and Jean Krivine.
\newblock Reversible communicating systems.
\newblock In Philippa Gardner and Nobuko Yoshida, editors, {\em {CONCUR} 2004 -
  Concurrency Theory, 15th International Conference}, volume 3170 of {\em
  Lecture Notes in Computer Science}, pages 292--307. Springer, 2004.

\bibitem{DanosK05}
Vincent Danos and Jean Krivine.
\newblock Transactions in {RCCS}.
\newblock In Mart{\'{\i}}n Abadi and Luca de~Alfaro, editors, {\em {CONCUR}
  2005}, volume 3653 of {\em Lecture Notes in Computer Science}, pages
  398--412. Springer, 2005.

\bibitem{EKM:petri_net}
David de~Frutos{-}Escrig, Maciej Koutny, and Lukasz Mikulski.
\newblock Reversing steps in petri nets.
\newblock In Susanna Donatelli and Stefan Haar, editors, {\em Application and
  Theory of Petri Nets and Concurrency - 40th International Conference, {PETRI}
  {NETS} 2019}, volume 11522 of {\em Lecture Notes in Computer Science}, pages
  171--191. Springer, 2019.
\newblock URL: \url{https://doi.org/10.1007/978-3-030-21571-2}, \href
  {http://dx.doi.org/10.1007/978-3-030-21571-2}
  {\path{doi:10.1007/978-3-030-21571-2}}.

\bibitem{GiachinoLM14}
Elena Giachino, Ivan Lanese, and Claudio~Antares Mezzina.
\newblock Causal-consistent reversible debugging.
\newblock In Stefania Gnesi and Arend Rensink, editors, {\em Fundamental
  Approaches to Software Engineering - 17th International Conference, {FASE}
  2014}, volume 8411 of {\em Lecture Notes in Computer Science}, pages
  370--384. Springer, 2014.
\newblock URL: \url{https://doi.org/10.1007/978-3-642-54804-8}, \href
  {http://dx.doi.org/10.1007/978-3-642-54804-8}
  {\path{doi:10.1007/978-3-642-54804-8}}.

\bibitem{GPY:CatRES}
Eva Graversen, Iain Phillips, and Nobuko Yoshida.
\newblock Towards a categorical representation of reversible event structures.
\newblock {\em J. Log. Algebr. Meth. Program.}, 104:16--59, 2019.
\newblock URL: \url{https://doi.org/10.1016/j.jlamp.2019.01.001}, \href
  {http://dx.doi.org/10.1016/j.jlamp.2019.01.001}
  {\path{doi:10.1016/j.jlamp.2019.01.001}}.

\bibitem{UK16}
Stefan Kuhn and Irek Ulidowski.
\newblock A calculus for local reversibility.
\newblock In Simon~J. Devitt and Ivan Lanese, editors, {\em Reversible
  Computation - 8th International Conference, {RC} 2016}, volume 9720 of {\em
  Lecture Notes in Computer Science}, pages 20--35. Springer, 2016.

\bibitem{LaneseLMSS13}
Ivan Lanese, Michael Lienhardt, Claudio~Antares Mezzina, Alan Schmitt, and
  Jean{-}Bernard Stefani.
\newblock Concurrent flexible reversibility.
\newblock In Matthias Felleisen and Philippa Gardner, editors, {\em Programming
  Languages and Systems - 22nd European Symposium on Programming, {ESOP} 2013},
  volume 7792 of {\em Lecture Notes in Computer Science}, pages 370--390.
  Springer, 2013.

\bibitem{rhotcs}
Ivan Lanese, Claudio~Antares Mezzina, and Jean{-}Bernard Stefani.
\newblock Reversibility in the higher-order {\(\pi\)}-calculus.
\newblock {\em Theoretical Computer Science}, 625:25--84, 2016.
\newblock URL: \url{https://doi.org/10.1016/j.tcs.2016.02.019}, \href
  {http://dx.doi.org/10.1016/j.tcs.2016.02.019}
  {\path{doi:10.1016/j.tcs.2016.02.019}}.

\bibitem{Lanese14}
Ivan Lanese, Claudio~Antares Mezzina, and Francesco Tiezzi.
\newblock Causal-consistent reversibility.
\newblock {\em Bulletin of the {EATCS}}, 114, 2014.
\newblock URL: \url{http://eatcs.org/beatcs/index.php/beatcs/article/view/305}.

\bibitem{LanesePV19}
Ivan Lanese, Adri{\'{a}}n Palacios, and Germ{\'{a}}n Vidal.
\newblock Causal-consistent replay debugging for message passing programs.
\newblock In Jorge~A. P{\'{e}}rez and Nobuko Yoshida, editors, {\em Formal
  Techniques for Distributed Objects, Components, and Systems - 39th {IFIP}
  {WG} 6.1 International Conference, {FORTE} 2019}, volume 11535 of {\em
  Lecture Notes in Computer Science}, pages 167--184. Springer, 2019.
\newblock URL: \url{https://doi.org/10.1007/978-3-030-21759-4}, \href
  {http://dx.doi.org/10.1007/978-3-030-21759-4}
  {\path{doi:10.1007/978-3-030-21759-4}}.

\bibitem{Langerak:1992:BES}
Rom Langerak.
\newblock Bundle event structures: A non-interleaving semantics for lotos.
\newblock In Michel Diaz and Roland Groz, editors, {\em FORTE '92}, volume
  {C-10} of {\em {IFIP} Transactions}, pages 331--346. North-Holland, 1993.

\bibitem{MMU:coordination19}
Hern{\'{a}}n~C. Melgratti, Claudio~Antares Mezzina, and Irek Ulidowski.
\newblock Reversing {P/T} nets.
\newblock In Hanne~Riis Nielson and Emilio Tuosto, editors, {\em Coordination
  Models and Languages - 21st {IFIP} {WG} 6.1 International Conference,
  {COORDINATION} 2019}, volume 11533 of {\em Lecture Notes in Computer
  Science}, pages 19--36. Springer, 2019.

\bibitem{MR:CN}
Ugo Montanari and Francesca Rossi.
\newblock Contextual nets.
\newblock {\em Acta Informatica}, 32(6), 1995.

\bibitem{NPW:PNES}
Mogens Nielsen, Gordon Plotkin, and Glynn Winskel.
\newblock {P}etri {N}ets, {E}vent {S}tructures and {D}omains, {P}art 1.
\newblock {\em Theoretical Computer Science}, 13:85--108, 1981.
\newblock URL: \url{https://doi.org/10.1016/0304-3975(81)90112-2}, \href
  {http://dx.doi.org/10.1016/0304-3975(81)90112-2}
  {\path{doi:10.1016/0304-3975(81)90112-2}}.

\bibitem{PhilippouP18}
Anna Philippou and Kyriaki Psara.
\newblock Reversible computation in petri nets.
\newblock In Jarkko Kari and Irek Ulidowski, editors, {\em Reversible
  Computation - 10th International Conference, {RC} 2018}, volume 11106 of {\em
  Lecture Notes in Computer Science}, pages 84--101. Springer, 2018.
\newblock URL: \url{https://doi.org/10.1007/978-3-319-99498-7}, \href
  {http://dx.doi.org/10.1007/978-3-319-99498-7}
  {\path{doi:10.1007/978-3-319-99498-7}}.

\bibitem{ccsk}
Iain Phillips and Irek Ulidowski.
\newblock Reversing algebraic process calculi.
\newblock {\em Journal of Logic and Algebraic Programming}, 73(1-2):70--96,
  2007.
\newblock URL: \url{https://doi.org/10.1016/j.jlap.2006.11.002}, \href
  {http://dx.doi.org/10.1016/j.jlap.2006.11.002}
  {\path{doi:10.1016/j.jlap.2006.11.002}}.

\bibitem{PU:jlamp15}
Iain Phillips and Irek Ulidowski.
\newblock Reversibility and asymmetric conflict in event structures.
\newblock {\em Journal of Logic and Algebraic Methods in Programming},
  84(6):781--805, 2015.
\newblock URL: \url{https://doi.org/10.1016/j.jlamp.2015.07.004}, \href
  {http://dx.doi.org/10.1016/j.jlamp.2015.07.004}
  {\path{doi:10.1016/j.jlamp.2015.07.004}}.

\bibitem{PhiUliYuen12}
Iain Phillips, Irek Ulidowski, and Shoji Yuen.
\newblock A reversible process calculus and the modelling of the {ERK}
  signalling pathway.
\newblock In {\em Proceedings of Reversible Computation 2012}, volume 7581 of
  {\em Lecture Notes in Computer Science}, pages 218--232. Springer, 2013.

\bibitem{UPY:RES-NGC18}
Irek Ulidowski, Iain Phillips, and Shoji Yuen.
\newblock Reversing event structures.
\newblock {\em New Generation Computing}, 36(3):281--306, 2018.
\newblock URL: \url{https://doi.org/10.1007/s00354-018-0040-8}, \href
  {http://dx.doi.org/10.1007/s00354-018-0040-8}
  {\path{doi:10.1007/s00354-018-0040-8}}.

\bibitem{GP:CSESPN}
Rob~J. {van G}labbeek and Gordon~D. Plotkin.
\newblock Configuration structures, event structures and {P}etri nets.
\newblock {\em Theoretical Computer Science}, 410(41):4111--4159, 2009.

\bibitem{VassorS18}
Martin Vassor and Jean{-}Bernard Stefani.
\newblock Checkpoint/rollback vs causally-consistent reversibility.
\newblock In Jarkko Kari and Irek Ulidowski, editors, {\em Reversible
  Computation - 10th International Conference, {RC} 2018}, volume 11106 of {\em
  Lecture Notes in Computer Science}, pages 286--303. Springer, 2018.

\bibitem{Win:ES}
Glynn Winskel.
\newblock Event {S}tructures.
\newblock In Wilfried Brauer, Wolfgang Reisig, and Grzegorz Rozenberg, editors,
  {\em {P}etri Nets: Central Models and Their Properties}, volume 255 of {\em
  Lecture Notes in Computer Science}, pages 325--392. Springer, 1987.

\end{thebibliography}

\newpage
\appendix
\section{Omitted Proofs}\label{sec:proof}
% !TEX root = main.tex

\begingroup
\def\thetheorem{\ref{pr:reachable-markings-of-rcn}}

\begin{proposition}
 Let $R = \langle B, E, \changed{E^r}{\Er}, F, \mathsf{c}\rangle$ be a \rcn.
 Then $\reachMark{R} = \reachMark{C_{\overline{E}}}$.
\end{proposition}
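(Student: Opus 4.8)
The plan is to establish the two inclusions separately. For $\reachMark{C_{\overline{E}}} \subseteq \reachMark{R}$ I would simply note that $C_{\overline{E}}$ arises from $R$ by discarding the reversing transitions in $\Er$, so for any forward step $A \subseteq \overline{E}$ the flow relation and the enabling condition $\pre{A} \subseteq m$ coincide in the two nets. Hence every firing sequence of $C_{\overline{E}}$ starting at $\mathsf{c}$ is verbatim a firing sequence of $R$, and each marking it reaches lies in $\reachMark{R}$.

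For the converse $\reachMark{R} \subseteq \reachMark{C_{\overline{E}}}$ I would argue by induction on the number $k$ of firings of reversing events in a sequence $\sigma \in \firseq{R}{\mathsf{c}}$ with $\lead{\sigma} = m$. If $k = 0$, then $\sigma$ uses only forward events and is already a firing sequence of $C_{\overline{E}}$, so $m \in \reachMark{C_{\overline{E}}}$. If $k \geq 1$, I would isolate the \emph{first} reversing firing, writing $\sigma = \sigma_0\trans{\er}\sigma_1$ where $\sigma_0$ is a purely forward prefix reaching some marking $p$, and $e = h(\er) \in \overline{E}$ is the unique forward partner of $\er$ with $\pre{\er} = \post{e}$ and $\post{\er} = \pre{e}$. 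Since $\er$ is enabled at $p$ we have $\post{e} = \pre{\er} \subseteq p$; as every condition in $\post{e}$ has $e$ as its unique (singleton) producer and is non-initial, the event $e$ must already occur in $\sigma_0$.

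The crux is then a cancellation argument. I would show that $e$ is maximal among the events fired in $\sigma_0$: if some $e'$ with $e <_{C_{\overline{E}}} e'$ had fired, then, $X_{\sigma_0}$ being a left-closed configuration (Proposition~\ref{pr:states-are-conf}), the immediate successor of $e$ on a causal path to $e'$—which consumes a token of $\post{e}$—would also have fired, and by safeness that token could not have been regenerated, contradicting $\post{e} \subseteq p$. Hence $X_{\sigma_0}\setminus\setenum{e}$ is again a (left-closed, conflict-free) configuration of $C_{\overline{E}}$, so $\sigma_0$ can be rearranged as $\sigma_0'\trans{e}p$ with $\sigma_0'$ forward and reaching $p' = p - \post{e} + \pre{e}$ (using $\pre{e}\cap\post{e} = \emptyset$, which holds by acyclicity). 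Firing $e$ and then $\er$ returns exactly to $p'$, so the pair cancels and $\sigma_0'\sigma_1$ is a firing sequence of $R$ reaching the same marking $m$ but with only $k-1$ reversing firings; the induction hypothesis then gives $m \in \reachMark{C_{\overline{E}}}$. I expect the maximality/reordering step to be the main obstacle, since it requires making precise that the marking reached by a forward prefix depends only on the \emph{set} $X_{\sigma_0}$ of events fired and that deleting a maximal event from a configuration yields a configuration whose linearisations remain valid firing sequences of $C_{\overline{E}}$—both facts resting on the occurrence-net structure already captured by Proposition~\ref{pr:states-are-conf}.
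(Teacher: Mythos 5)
Your proposal is correct and follows essentially the same route as the paper's proof: the paper likewise treats the inclusion $\reachMark{C_{\overline{E}}}\subseteq\reachMark{R}$ as trivial, and for the converse isolates the first reversing firing $\er$, shows its partner $h(\er)$ must have fired in the forward prefix, commutes $h(\er)$ to the end of that prefix (your maximality/re-linearisation step, which the paper phrases as $h(\er)$ being concurrent with all later events of the prefix), cancels the pair, and implicitly inducts on the number of reversing firings. Your version merely makes explicit what the paper leaves informal, namely the induction on $k$ and the justification, via left-closedness of $X_{\sigma_0}$ and unique producers, of why the partner event is causally maximal and can be fired last.
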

\begin{proof}
 One direction is trivial, namely $\reachMark{C_{\overline{E}}}\subseteq\reachMark{R}$. For the
 other direction, we first observe that $\neg (\mathsf{c}\trans{e})$ holds for all $e\in \changed{E^r}{\Er}$. This is because $C_{\overline{E}}$ is an occurrence net, and  
 this implies that $\forall b\in \mathsf{c}$. $\pre{b}$ is either the $\emptyset$
 or it contains elements in $E^r$, and $\forall e\in \changed{E^r}{\Er}$. $\pre{e}\cap\post{b} = \emptyset$.
 Now we show that if an event $\changed{e^r\in E^r}{\er \in \Er}$ is executed then the
 corresponding event $h(\changed{e^r}{\er})$ has been executed before. 
 W.l.o.g. we assume that all the events executed before $\changed{e^r}{\er}$ are the events in
 $E\setminus \changed{E^r}{\Er}$. 
 Consider the \fs\ $\sigma\trans{\changed{e^r}{\er}}\sigma'$, then 
 we have $\lead{\sigma}\trans{\changed{e^r}{\er}}$, which means that
 $\pre{\changed{e^r}{\er}}\subseteq \lead{\sigma}$, but the conditions 
 $\pre{\changed{e^r}{\er}}$ have been produced by the execution of a unique event, namely
 $h(\changed{e^r}{\er})$. Now we prove that $\sigma\trans{\changed{e^r}{\er}}m$ can be reached without executing
 both $\changed{e^r}{\er}$ and $h(\changed{e^r}{\er})$. Consider the marking $\lead{\sigma}$, as $\sigma\trans{\changed{e^r}{\er}}$
 we know that $\post{h(\changed{e^r}{\er})} \subseteq \lead{\sigma}$. 
 Now $\sigma$ can be rewritten as $\sigma''\trans{h(\changed{e^r}{\er})}\sigma'''$ and
 $h(\changed{e^r}{\er})$ is concurrent with all the events in $\sigma'''$, which means 
 that $\sigma$ can be rewritten as $\hat{\sigma}\trans{h(\changed{e^r}{\er})}\lead{\sigma}$.
 Now we have $m = \lead{\hat{\sigma}}$ which implies that each
 reachable marking can be reached executing the events in 
 $E\setminus \changed{E^r}{\Er}$ only, hence $\reachMark{R}\subseteq\reachMark{C_{\overline{E}}}$. 
 %\qed
\end{proof}

\begingroup
\def\thetheorem{\ref{prop:rev_net}}

\begin{proposition}
 Let $C = \langle B, E, F, \mathsf{c}\rangle$ be an occurrence net and let
 $\mathsf{R}(C) = \langle B, \hat{E}, \changed{E'}{\Er}\times\setenum{\re}, \hat{F}, \mathsf{c}\rangle$ be the net 
 defined in Definition \ref{de:constructing-a-reversible-causal-net} with respect to 
 $\changed{E'}{\Er} \subseteq E$. Then $\mathsf{R}(C)$ is a reversible \changed{occurrence}{causal} net with respect to
 $\changed{E'}{\Er}\times\setenum{r}$.
\end{proposition}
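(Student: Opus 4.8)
The plan is to verify, one by one, the four defining conditions of a reversible causal net (Definition~\ref{def:rcn}) for $\mathsf{R}(C) = \langle B, \hat{E}, \Er\times\setenum{\re}, \hat{F}, \mathsf{c}\rangle$, after first recording how pre- and postsets look in the new net. Reading off $\hat{F}$, every forward event $(e,\fe)$ satisfies $\pre{(e,\fe)} = \pre{e}$ and $\post{(e,\fe)} = \post{e}$ (computed in $C$), so the subnet $C_{\overline{E}}$ obtained by deleting the reversing events is isomorphic to $C$ via $e\mapsto(e,\fe)$; since $C$ is an occurrence net, this immediately discharges the last condition of Definition~\ref{def:rcn}. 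Dually, each reversing event $(e,\re)$ with $e\in\Er$ has $\pre{(e,\re)} = \post{e}$ and $\post{(e,\re)} = \pre{e}$, whence $\pre{(e,\re)} = \post{(e,\fe)}$ and $\post{(e,\re)} = \pre{(e,\fe)}$; this is exactly the pre/post swap required of its associate $h(e,\re) = (e,\fe)$.

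Next I would establish the first and second structural conditions. For the first, $(e,\fe)$ is the \emph{unique} forward event with the swapped neighbourhood of $(e,\re)$: this reduces to the extensionality of occurrence nets, namely that distinct events have disjoint postsets (each condition has at most one event in its preset), so that $\pre{e'} = \pre{e}$ together with $\post{e'} = \post{e}$ forces $e' = e$. The same extensionality settles injectivity (the second condition) in the forward--forward and reverse--reverse cases. The only genuinely informative case is a forward event $(e,\fe)$ against a reversing event $(e',\re)$: equating their neighbourhoods would give $\pre{e} = \post{e'}$ and $\post{e} = \pre{e'}$ in $C$, and since transitions have nonempty presets we may pick $b\in\post{e'}\cap\pre{e}$ and $b'\in\post{e}\cap\pre{e'}$, yielding $e' <_C e$ and $e <_C e'$ and contradicting the acyclicity of $C$. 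The third condition (no isolated conditions) is inherited directly, because the forward events alone already cover $\bigcup_{e\in E}(\pre{e}\cup\post{e}) = B$.

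It remains to check that $\langle B, \hat{E}, \hat{F}, \mathsf{c}\rangle$ is a \emph{safe} net, and this is where I expect the real work to lie. I cannot invoke Proposition~\ref{pr:reachable-markings-of-rcn} here, since that result presupposes the very fact being proved; instead I would argue directly that $\reachMark{\mathsf{R}(C)}\subseteq\reachMark{C_{\overline{E}}}$ (identifying $(e,\fe)$ with $e$), by induction on firing sequences. Forward steps are steps of $C_{\overline{E}}$ by construction; for a reversing step, if $(e,\re)$ is enabled at a reachable marking $m$ then $\post{e}\subseteq m$, and the commutation argument used in the proof of Proposition~\ref{pr:reachable-markings-of-rcn} shows that the unique occurrence of $h(e,\re)=(e,\fe)$ can be permuted to be the last forward event fired, so that undoing it returns to a marking already reachable in $C_{\overline{E}}$. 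Every reachable marking of $\mathsf{R}(C)$ is then a reachable marking of the safe occurrence net $C_{\overline{E}}$, hence a set, so $\mathsf{R}(C)$ is safe. The structural bullets are essentially bookkeeping on $\hat{F}$; the main obstacle is precisely this safety claim, which hinges on the lemma that reversing events never produce markings beyond those of the underlying occurrence net.
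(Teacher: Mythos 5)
Your proof is correct, but it takes a genuinely different route from the paper's on the one point where real work is needed, namely safety. The paper dismisses the structural bullets of Definition~\ref{def:rcn} as ``satisfied by construction'' and then proves safety \emph{locally}, by contradiction: it observes that in $\mathsf{R}(C)$ the preset of a condition $b$ consists of at most one forward producer plus reversing events $(e'',\re)$ whose forward counterparts $(e'',\fe)$ lie in $\post{b}$; if $b$ were a condition receiving a second token, that token would have to come from such an $(e'',\re)$, but firing $(e'',\re)$ presupposes that the consumer $(e'',\fe)$ of $b$ has already fired, contradicting the fact that $b$ is still marked. You instead prove safety \emph{globally}: by induction on firing sequences you show that every marking reachable in $\mathsf{R}(C)$ is already forward-reachable in the occurrence net $C$, handling reversing steps by permuting the unique occurrence of $(e,\fe)$ to the end of the forward run and cancelling it; safety is then inherited from $C$. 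Your observation that Proposition~\ref{pr:reachable-markings-of-rcn} cannot be invoked as a statement (it presupposes that $\mathsf{R}(C)$ is a \rcn) is exactly right, and re-deriving its commutation argument inside $C$ avoids the circularity. What your approach buys is strictly more: it yields the content of Proposition~\ref{pr:reachable-markings-of-rcn} for $\mathsf{R}(C)$ as a byproduct, and it is arguably more robust, since the paper's local argument leans on the claim that a marked condition's consumers ``have not yet been executed'', which in the presence of reversing events is precisely the kind of fact your reachability invariant makes rigorous. What the paper's approach buys is brevity. A further difference in your favour is the explicit verification of the uniqueness and injectivity bullets, in particular ruling out that a forward and a reversing event share pre- and postsets via acyclicity of $C$; note only that there, like the paper, you implicitly use that events of $C$ have nonempty postsets and that $C$ has no isolated conditions.
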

\begin{proof}
 We just have to prove that $\mathsf{R}(C)$ is a safe net; the other conditions are satisfied
 by construction. 
 First we observe that if $b\not\in\mathsf{c}$ and 
 $\pre{b}$ is not a singleton in $\mathsf{R}(C)$ then
 $\pre{b}$ contains at most one event of the form $(e,\fe)$, and it contains 
 at least one of the form $(e',\re)$, and these are originated by the events in
 $\post{b}$ in $C$.
 In the case $b\in\mathsf{c}$ and $\pre{b}$ is not empty, then again 
 $\pre{b}$ contains only events of the form $(e',\re)$, and these are originated by the 
 events in $\post{b}$ in $C$.
 Assume it is not, and assume that $b\in B$ is the condition which receives a token when it 
 is already marked. As $C$ is an occurrence net, if the condition is marked then the event $e\in E$
 such that $b\in\post{e}$ has been executed and none of the events $e'\in E$ such that 
 $e'\in\post{b}$ (if any) have yet been executed. Thus in $\mathsf{R}(C)$ the
 event $(e,\fe)$ has been executed and none of the events $(e',\fe)\in\post{b}$ has been executed
 yet. But to be marked again an event of the form $(e'',\re) \in \pre{b}$ should have occurred,
 but this is impossible as none of the events $(e',\fe)\in\post{b}$ have been executed, and among 
 these also $(e'',\fe)$, contradicting the fact that the condition $b$ receives another token. 
 %\qed
\end{proof}

\begingroup
\def\thetheorem{\ref{pr:rce-to-rpes}}

\begin{proposition}
 Let $R = \langle B, E, \changed{E^r}{\Er}, F, \mathsf{c}\rangle$ be a reversible \changed{occurrence}{causal} net with respect to $\changed{E^r}{\Er}$, 
 then $\mathcal{C}_{r}(R) = (E', \changed{E''}{\Er'}, <, \#, \prec, \triangleright)$ is its associated r\pes, where
 \begin{itemize}
  \item $E' = E\setminus \changed{E^r}{\Er}$ and $\changed{E''}{\Er'} = h(\changed{E^r}{\Er})$,
  \item $<$ is the transitive closure of the relation $<'$ defined on the occurrence net
        $C_{\overline{E}}$ as $e <' e'$ whenever $\post{e} \cap \pre{e'}\neq \emptyset$,
  \item $\#$ is the conflict relation defined on the occurrence net $C_{\overline{E}}$,
  \item $e\ \triangleright\ \underline{e'}$ whenever $e\in \future{e'}$, and 
%        either $\exists \hat{e} < e$ and 
%        $\post{\hat{e}}\cap\post{e'}\neq\emptyset$ or $\post{e}\cap\post{e'}\neq\emptyset$, 
%        with $\underline{e'}\in \underline{E'}$,  
  \item $e\ \prec\ \underline{e'}$ whenever $e = e'$, and
\item $\ll = <$.        
 \end{itemize}
%is a r\pes.
\end{proposition}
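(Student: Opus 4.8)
The plan is to verify directly that $\mathcal{C}_r(R) = (E', \anR', <, \#, \prec, \triangleright)$ satisfies every clause of Definition~\ref{de:rpes}. The backbone is Proposition~\ref{pr:on-to-pes}: the last clause of Definition~\ref{def:rcn} guarantees that $C_{\overline{E}}$ is an occurrence net, so the triple $(E', <, \#) = (\overline{E}, <_{C_{\overline{E}}}, \#)$ is a \pes, hence in particular a p\pes. This at once discharges the opening requirement of an r\pes, including that $<$ is an irreflexive partial order whose principal down-sets $\setcomp{e'}{e' < e}$ are finite and conflict-free, and that $\#$ is irreflexive, symmetric, and disjoint from $<$.

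Next I would treat the genuinely new data. Since each $\er \in \Er$ is matched by $h$ to an event of $E \setminus \Er$ (first clause of Definition~\ref{def:rcn}), we get $\anR' = h(\Er) \subseteq \overline{E} = E'$, and $\underline{\anR'}$ is disjoint from $E'$ by convention. The reverse causality relation is $\prec\, = \setcomp{(\anr, \underline{\anr})}{\anr \in \anR'}$, so $\anr \prec \underline{\anr}$ holds for every $\anr \in \anR'$ and $\setcomp{e \in E'}{e \prec \underline{\anr}} = \setenum{\anr}$ is trivially finite and conflict-free. For the compatibility clause $e \prec \underline{\anr} \Rightarrow \neg(e \triangleright \underline{\anr})$, observe that $e \prec \underline{\anr}$ forces $e = \anr$, while $\anr \triangleright \underline{\anr}$ would mean $\anr \in \future{\anr}$, i.e.\ $\anr < \anr$, contradicting irreflexivity of $<$.

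The conceptual core is the clause on sustained causation and its claimed coincidence $\ll\,=\,<$. Recall $e \ll e'$ holds precisely when $e < e'$ and, in addition, $e' \triangleright \underline{e}$ whenever $e \in \anR'$. The inclusion $\ll\, \subseteq\, <$ is immediate. For the converse, take $e < e'$; if $e \notin \anR'$ then $e \ll e'$ directly, and if $e \in \anR'$ then $e' \in \future{e}$ is by definition the very statement $e < e'$, so $e' \triangleright \underline{e}$ and again $e \ll e'$. Thus defining prevention as the whole causal future of a reversible event forces every causal successor to be sustaining, collapsing $\ll$ onto $<$; transitivity of $\ll$ is then inherited from the partial order $<_{C_{\overline{E}}}$.

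Finally, heredity of $\#$ along $\ll$ reduces, through $\ll\,=\,<$, to heredity of $\#$ along $<$ in $C_{\overline{E}}$, which is immediate from the definition of conflict there: if $y \#_0 y'$ with $y \leq_{C_{\overline{E}}} e$ and $y' \leq_{C_{\overline{E}}} e' < e''$, then $y' \leq_{C_{\overline{E}}} e''$, whence $e \# e''$. I expect the single delicate point to be the reading of $\future{\cdot}$ in the clause defining $\triangleright$: the full order $<_R$ on $R$ is not acyclic, since each reversing event closes a two-cycle with its forward counterpart, so the prevention clause must be understood with both arguments ranging over forward events, i.e.\ $\future{e'}$ read as $\setcomp{e \in E'}{e' <_{C_{\overline{E}}} e}$. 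Making this restriction explicit is exactly what validates the irreflexivity step and the coincidence $\ll\,=\,<$ above.
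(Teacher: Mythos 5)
Your proof is correct and follows essentially the same route as the paper's: obtain the p\pes\ from the occurrence net $C_{\overline{E}}$, check the clauses for $\prec$, derive the compatibility of $\prec$ and $\triangleright$ from irreflexivity of causality, and collapse $\ll$ onto $<$ so that conflict heredity along $\ll$ reduces to heredity along $<$ in the underlying occurrence net. Your closing observation that $\future{e'}$ must be read as $\setcomp{e\in E'}{e' <_{C_{\overline{E}}} e}$ rather than literally via $<_R$ is a genuine sharpening: each reversible event and its reversing event form a cycle in $F$, so the step ``$e\not\in\future{e}$'' (which the paper also uses) is valid only under this restricted reading, which the paper leaves implicit.
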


\begin{proof}
 First of all it is quite clear that $(E', <, \#)$ is a p\pes\ (if we close $<$ reflexively we get
 indeed a \pes), as it is obtained by $C_{\overline{E}}$.
 The relation $\prec \subseteq E'\times \changed{E''}{\Er'}$ satisfies the requirement that $e \prec \underline{e}$ 
 and that $\setcomp{e'}{e'\prec\underline{e}}$ is finite for each $e\in \changed{E''}{\Er'}$ as it contains 
 just $e$.
 If $e \prec \underline{e}$ then not $e \triangleright \underline{e}$ as $e\not\in\future{e}$.
 The sustained causation relation $\ll$ coincides with the relation $<$ hence the conflict relation
 is inherited along this relation. Furthermore, for $e\in \changed{E''}{\Er'}$, if $e < e'$ for some $e'$, 
 then we have that $e' \triangleright e$, as required.
 We can then conclude that $\mathcal{C}_{r}(R)$ is  a r\pes.
 %\qed
\end{proof}

\begingroup
\def\thetheorem{\ref{prp:rpes_respect}}

\begin{proposition} Let $R = \langle B,  E, \changed{E^r}{\Er}, F, \mathsf{c}\rangle$ be a reversible \changed{occurrence}{causal} net with respect to $E^r$ and 
 $\mathcal{C}_{r}(R) = (E', \changed{E''}{\Er'}, <, \#, \prec, \triangleright)$ be the associated r\pes. 
 Then $\mathcal{C}_{r}(R)$ is a %cause-respecting and 
 causal r\pes.
\end{proposition}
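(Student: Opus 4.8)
The plan is to verify directly that $\mathcal{C}_{r}(R)$ satisfies the two defining conditions of a \emph{causal} r\pes\ from Definition~\ref{de:cr-and-causal-rpes}, namely that $e \prec \underline{\anr}$ iff $e = \anr$, and that $e \triangleright \underline{\anr}$ iff $\anr < e$, for all $e\in E'$ and $\anr\in \anR'$. Since Proposition~\ref{pr:rce-to-rpes} has already established that $\mathcal{C}_{r}(R)$ is a well-formed r\pes, the only work remaining is to match the reverse causality and prevention relations against the causal shape required by Definition~\ref{de:cr-and-causal-rpes}; the fact that causal implies cause-respecting then follows immediately, since causality of the net construction forces $\ll\ =\ <$, so $e < e'$ trivially gives $e \ll e'$.

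First I would dispatch the reverse causality condition. By the construction in Proposition~\ref{pr:rce-to-rpes}, the relation $\prec$ is defined so that $e \prec \underline{e'}$ holds \emph{precisely} when $e = e'$. This is literally the requirement $e \prec \underline{\anr} \iff e = \anr$, so this clause holds by definition with no further argument.

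Next I would handle the prevention condition, which is where the genuine content lies. The construction sets $e \triangleright \underline{e'}$ exactly when $e \in \future{e'}$, and by definition $\future{e'} = \setcomp{e''}{e' <_R e''}$, the strict future of $e'$ under the net's causal order. Thus $e \triangleright \underline{e'}$ iff $e' <_R e$, which, since $<$ in $\mathcal{C}_{r}(R)$ \emph{is} the net causality $<_{C_{\overline{E}}}$, is exactly $\anr < e$ when we rename $e' = \anr$. This establishes the second clause. The small point to be careful about is that $\future{e'}$ ranges over all events of the net, whereas $\anR'$ and $E'$ are the non-reversing events $\overline{E}$; I would note that the reversing events have already been factored out in forming $\mathcal{C}_{r}(R)$, so $e$ and $\anr$ both range over $E' = \overline{E}$ and the identification of $\future{e'}$ with $\setcomp{e}{\anr < e}$ is taken within $C_{\overline{E}}$, matching the causality relation $<$ used in the r\pes.

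The main obstacle, if any, is purely bookkeeping rather than conceptual: one must confirm that the causality $<$ appearing in the definition of $\triangleright$ (via $\future{\cdot}$) is the same relation as the $<$ of the event structure $\mathcal{C}_{r}(R)$, so that $\anr < e$ in Definition~\ref{de:cr-and-causal-rpes} and $e \in \future{\anr}$ genuinely coincide. Given that both are read off the occurrence net $C_{\overline{E}}$ in Proposition~\ref{pr:rce-to-rpes}, this is immediate, and the proof reduces to observing that the construction was designed to produce exactly the causal pattern. Hence $\mathcal{C}_{r}(R)$ is causal, and therefore also cause-respecting.
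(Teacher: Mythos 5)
Your proof is correct and follows essentially the same route as the paper's: a direct inspection of the construction in Proposition~\ref{pr:rce-to-rpes}, checking that $\prec$ and $\triangleright$ have exactly the form demanded by Definition~\ref{de:cr-and-causal-rpes} (with $\ll\ =\ <$ handed to you by the construction, whence cause-respecting follows). Your bookkeeping remark that $\future{\cdot}$ must be read inside $C_{\overline{E}}$ --- rather than via the transitive closure of $F$ over the whole net, which would wrongly relate events through the reversing transitions --- is precisely the reading the paper's own proof tacitly adopts, so you are, if anything, slightly more explicit than the paper on this point.
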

\begin{proof}
 Easy inspection of the construction in Proposition \ref{pr:rce-to-rpes}.
 The sustained causality $\ll$ clearly coincides with $<$.
 If $e \prec \underline{e'}$ then $e' = e$ and by construction if $e \triangleright \underline{e'}$ 
 then $e' < e$ as $e\in \future{e'}$.
 %\qed
\end{proof}

\begingroup
  \def\thetheorem{\ref{th:rnctorpes-conf-correspond}}
  
  \begin{theorem}%\label{th:rnctorpes-conf-correspond}
   Let $R = \langle B, E, \changed{E^r}{\Er}, F, \mathsf{c}\rangle$ be a reversible causal net with respect to $\changed{E^r}{\Er}$ and
   $\mathcal{C}_{r}(R) = (E', \changed{E''}{\Er'}, <, \#, \prec, \triangleright)$ be the associated r\pes.
   Then $X \subseteq E'$ is a configuration of $R$ iff $X$ is a configuration of $\mathcal{C}_{r}(R)$.
  \end{theorem}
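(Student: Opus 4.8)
The plan is to reduce the claim to the already-established correspondence between occurrence nets and prime event structures (Proposition~\ref{pr:on-to-pes}), exploiting that $\mathcal{C}_{r}(R)$ is a causal r\pes. By Definition~\ref{de:conf-rcn}, a set $X \subseteq E'$ is a configuration of $R$ precisely when it is a configuration of the underlying occurrence net $C_{\overline{E}}$, and by Proposition~\ref{pr:on-to-pes} these coincide with $\Conf{\mathcal{P}(C_{\overline{E}})}{\pes}$, the forwards-reachable configurations of the \pes\ $(E', \leq, \#)$ whose causality and conflict are exactly those carried over into $\mathcal{C}_{r}(R)$ (note that $\ll = <$, so $\#$ is hereditary with respect to $<$ and $(E', <, \#)$ is a genuine \pes). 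Thus it suffices to prove that the forwards-reachable configurations of this \pes\ coincide with the reachable configurations of $\mathcal{C}_{r}(R)$.

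For the forward implication, I would observe that the enabling relation of a p\pes\ (Definition~\ref{de:ppes-conf-enab}) is exactly the r\pes\ enabling relation (Definition~\ref{de:rpes-conf-enab}) restricted to steps $A \cup \underline{B}$ with $B = \emptyset$: every side condition concerning $B$ becomes vacuous and the requirement $e' \in X \setminus B$ collapses to $e' \in X$. Hence a forward sequence witnessing $X \in \Conf{C_{\overline{E}}}{\cn}$ is verbatim a sequence of r\pes\ steps with empty reverse component, so $X \in \Conf{\mathcal{C}_{r}(R)}{r\pes}$.

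For the converse, the decisive ingredient is that $\mathcal{C}_{r}(R)$ is causal by Proposition~\ref{prp:rpes_respect}, hence in particular cause-respecting. Invoking the proposition that every configuration of a cause-respecting r\pes\ is forwards reachable, any configuration $X$ of $\mathcal{C}_{r}(R)$ admits a reaching sequence using forward events only. By the identification of forward steps above, this is a forward sequence in $(E', \leq, \#)$, so $X$ is a forwards-reachable configuration of that \pes\ and therefore, by Proposition~\ref{pr:on-to-pes}, a configuration of $C_{\overline{E}}$, that is, a configuration of $R$.

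I expect the main obstacle to lie in the converse inclusion, which genuinely depends on the structural fact that configurations of a causal (cause-respecting) r\pes\ can be reached without detouring through reverse steps; without that property one could not rule out configurations reachable only via sequences mixing forward and reverse events. Once that property is in hand, the remaining work, namely checking that the forward fragment of the r\pes\ transition relation agrees with the p\pes\ one, is immediate from comparing the two enabling definitions.
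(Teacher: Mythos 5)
Your proof is correct and takes essentially the same route as the paper's: both reduce the claim to the occurrence-net/\pes\ correspondence by invoking that $\mathcal{C}_{r}(R)$ is causal (hence cause-respecting), so every configuration of $\mathcal{C}_{r}(R)$ is forwards reachable, and forwards-reachable configurations coincide with those of the underlying occurrence net $C_{\overline{E}}$. The paper's proof is simply a terser version of yours, leaving implicit the verification that r\pes\ steps with empty reverse component coincide with p\pes\ steps, which you spell out explicitly.
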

  \begin{proof}
   As $\mathcal{C}_{r}(R)$ is a cause-respecting and causal r\pes\ we have that each configuration
   is forward reachable, and the forward reachable configurations are precisely those conflict-free
   and left-closed of the p\pes\ $\mathcal{C}_{r}(R) = (E', <, \#)$, which correspond to the
   configurations of the occurrence net $R_{\overline{E}}$.
   %\qed
  \end{proof}

\begingroup
\def\thetheorem{\ref{prop:mixed_step}}
\begin{proposition}
 Let $R = \langle B, E, \changed{E^r}{\Er}, F, \mathsf{c}\rangle$ be a reversible \changed{occurrence}{causal} net and let 
 $\mathcal{C}_{r}(R) = (E', \changed{E''}{\Er'}, <, \#, \prec, \triangleright)$ be the associated r\pes. 
 Let $X\in\Conf{R}{\rcn}$ and let $A\subseteq E$ be such that 
 $\mathsf{mark}(X)\trans{A}$. Then $\hat{A}\cup \underline{B}$ is enabled at $X$ in
 $\mathcal{C}_{r}(R)$, where $\hat{A} = \setcomp{e\in A}{e\not\in \changed{E^r}{\Er}}$ and
 $\underline{B} = \setcomp{e\in A}{e\in \changed{E^r}{\Er}}$.
\end{proposition}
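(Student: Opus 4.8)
The plan is to verify directly the four clauses of the enabling relation in Definition~\ref{de:rpes-conf-enab}. First I would fix notation: write $B' = A \cap \Er$ for the reversing events occurring in the step and $\hat A = A \cap \overline E$ for the forward ones, and put $B = h(B') \subseteq \anR'$, so that under the correspondence $\er \leftrightarrow \underline{h(\er)}$ of Proposition~\ref{pr:rce-to-rpes} the reverse events $\underline B$ of $\mathcal{C}_r(R)$ are exactly the net events $B'$. Recall that $X \in \Conf{R}{\rcn}$ is, by Definition~\ref{de:conf-rcn}, a conflict-free and left-closed configuration of $C_{\overline E}$, and that $\mathsf{mark}(X) = (\mathsf c \cup \post X) \setminus \pre X$ is the marking reached by firing exactly the forward events of $X$. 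Three facts will drive the argument: (i) since the net is safe and $\mathsf{mark}(X)\trans A$, the presets $\{\pre x\}_{x \in A}$ are pairwise disjoint and each contained in $\mathsf{mark}(X)$; (ii) by Definition~\ref{def:rcn} each $\er \in \Er$ satisfies $\pre \er = \post{h(\er)}$ and $\post\er = \pre{h(\er)}$; and (iii) in $C_{\overline E}$ the preset of a condition is empty or a singleton, so a condition $b \in \mathsf{mark}(X)$ with $\pre b \neq \emptyset$ lies in $\post X$ and its unique producer belongs to $X$, whereas $b \in \pre X$ forces $b \notin \mathsf{mark}(X)$.

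For the first clause, $\hat A \cap X = \emptyset$ is immediate from (iii): an $e \in \hat A \cap X$ would have $\pre e \subseteq \pre X$, hence $\pre e \cap \mathsf{mark}(X) = \emptyset$, contradicting the enabledness of $e$ (presets are nonempty). For $\CF{X \cup \hat A}$ I would note that firing $X$ and then the step $\hat A$ --- legal because $\pre{\hat A} \subseteq \pre A \subseteq \mathsf{mark}(X)$ --- is a firing sequence of $C_{\overline E}$, so $X \cup \hat A \in \states{C_{\overline E}}$ and Proposition~\ref{pr:states-are-conf} makes it conflict-free and left-closed. For $B \subseteq X$, each $\er \in B'$ has $\pre\er = \post{h(\er)} \subseteq \mathsf{mark}(X)$ by (i)--(ii), so (iii) places the producer $h(\er)$ of these conditions in $X$. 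The third clause then comes for free: since $\prec$ relates an event only to itself, it reduces to $e \in X$ (just shown) together with the trivial $e \notin B \setminus \{e\}$.

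The real content is in the second and fourth clauses, which together say that no forward event of the step may causally straddle a reversing event of the same step; both follow from one contradiction. For clause~2, take $e \in \hat A$ and $e' < e$. Left-closedness of $X \cup \hat A$ and fact (iii) (the immediate causes of $e$ produce conditions of $\mathsf{mark}(X)$, hence lie in $X$) give $e' \in X$; it remains to rule out $e' \in B$. If $e' = h(\er)$ with $\er \in B'$, choose an immediate successor $e_1$ of $e'$ on a causal path to $e$, so $b \in \post{e'} \cap \pre{e_1}$ for some $e_1 \le e$; then $b \in \pre\er$ by (ii). Now $e_1 \in X \cup \hat A$ by left-closedness: if $e_1$ is a step event (i.e. $e_1 \in \hat A$, including $e_1 = e$) then $b \in \pre{e_1} \cap \pre\er$ contradicts the disjointness (i); if $e_1 \in X$ then $b \in \pre X$ contradicts $b \in \pre\er \subseteq \mathsf{mark}(X)$. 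Hence $e' \in X \setminus B$. Clause~4 is symmetric: for $e \in B$, say $e = h(\er)$, and $e'$ with $e' \triangleright \underline e$, i.e. $e' \in \future e$ and so $e < e'$, assume $e' \in X \cup \hat A$; taking an immediate successor $e_1$ of $e$ on a path to $e'$ yields $b \in \post e \cap \pre{e_1} = \pre\er \cap \pre{e_1}$ with $e_1 \in X \cup \hat A$, and the same dichotomy gives a contradiction, so $e' \notin X \cup \hat A$.

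I expect clauses~2 and~4 to be the main obstacle: the other clauses are bookkeeping about $\mathsf{mark}(X)$, whereas these two crucially combine the reversal identity $\pre\er = \post{h(\er)}$ of (ii) with the disjointness of presets in a step (i). The intuition is that a reversing event $\er$ in the step consumes precisely the tokens produced by $h(\er)$, so neither another forward event of the step nor an already-consumed condition of $X$ may touch those conditions; this is exactly what forbids reversing an event while one of its causal successors is still present, realising the prevention relation $\triangleright$ of $\mathcal{C}_r(R)$.
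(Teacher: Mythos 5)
Your proof is correct and follows essentially the same route as the paper's: a direct verification of the four enabling clauses of Definition~\ref{de:rpes-conf-enab}, using $B\subseteq X$ via the unique producer $h(\er)$ of $\pre{\er}$, reverse causality collapsing to $e\prec\underline{e}$ by causality of $\mathcal{C}_r(R)$, and token-level contradictions for clauses~2 and~4. The only difference is that you spell out what the paper leaves as ``clearly'' or as one-line contradictions (notably the preset-disjointness argument from safety and the causal-chain reasoning), which makes your write-up more rigorous than, but not different in substance from, the published proof.
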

\begin{proof}
 By Theorem \ref{th:rnctorpes-conf-correspond} we know that $X\in \Conf{\mathcal{C}_{r}(R)}{r\pes}$.
 We have to check that $\hat{A}\cup \underline{B}$ is enabled at $X$. 
 As $\mathsf{mark}(X)\trans{A}$ we know that $\pre{A}\subseteq \mathsf{mark}(X)$,
 hence $A\cap X$ should be equal to $\emptyset$. Furthermore for any
 $e\in A\cap \changed{E^r}{\Er}$, as  $\mathsf{mark}(X)\trans{\setenum{e}}$, we have
 that $h(e)\in X$ (otherwise the conditions enabling $e$ would not have been produced), 
 and then we have that $B = \setcomp{h(e)}{e\in \underline{B}}\subseteq X$. 
 Finally, as $\mathsf{mark}(X)\trans{A}$, we have that
 $\CF{X\cup\hat{A}}$ holds. 
 Consider now $e\in \hat{A}$, and $e' < e$. Clearly $e'\in X\setminus B$.
 Assume the contrary, then $e'\in B$ and there exists an $\underline{e}'\in A\cap \changed{E^r}{\Er}$
 such that $h(\underline{e}') = e'$, but then we have that $\neg \mathsf{mark}(X)\trans{A}$.
 Consider now $e\in B$ (which means that $\underline{e}\in A\cap \changed{E^r}{\Er}$) and $e'\prec \underline{e}$.
 As $\mathcal{C}_{r}(R)$ is a causal r\pes, we know that $e' = e$ and 
 $e\in X\setminus (B\setminus\setenum{e})$.
 Take now $e\in B$ and $e'\triangleright\underline{e}$. This means that $e'\in\future{e}$ which
 implies that $e\not\in X$, and also that $e\not\in \hat{A}$.
 By Definition \ref{de:rpes-conf-enab} we can conclude that $\hat{A}\cup \underline{B}$ is enabled at $X$.
 Finally we observe that $\mathsf{mark}(Y) = \mathsf{c}'$ where 
 $\mathsf{mark}(X)\trans{A}\mathsf{c}'$ and $X \stackrel{\hat{A}\cup\underline{B}}{\longrightarrow} Y$.
\end{proof}

\begingroup
\def\thetheorem{\ref{th:rnctorpes-conf-correspond}}

\begingroup
\def\thetheorem{\ref{prp:rpes_to_cnet}}

\begin{proposition}%\label{pr:pes-associated-to-crpes}
 Let $\mathsf{P} = (E, \changed{E'}{\Er}, <, \#, \prec, \triangleright)$ be a causal r\pes\ and
 let $<^{+}$ be the transitive closure of $<$. Then $\#$ is inherited along  $<^{+}$,
 \emph{i.e.} $e\ \#\ e' <^{+} e''\ \Rightarrow\ e\ \#\ e''$.
\end{proposition}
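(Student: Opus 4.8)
The plan is to reduce the statement to the heredity of $\#$ along the sustained causation relation $\ll$, which is already part of the definition of an r\pes\ (Definition~\ref{de:rpes}): if $e\ \#\ e' \ll e''$ then $e\ \#\ e''$. To use this, I would first show that for a \emph{causal} r\pes\ the sustained causation $\ll$ coincides with the causality relation $<$, and hence with its transitive closure $<^{+}$. Once this identification is in place, the claim follows immediately by replacing $<^{+}$ with $\ll$ in the heredity axiom.

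The key step is therefore the identification $\ll\ =\ <$. Recall that $\ll$ is the transitive relation generated by the immediate relation that orders $e$ before $e'$ whenever $e < e'$ and, in the case $e\in\anR$, additionally $e' \triangleright \underline{e}$. The point is that this extra prevention requirement is automatic in a causal r\pes: by Definition~\ref{de:cr-and-causal-rpes} we have $e' \triangleright \underline{e}$ iff $e < e'$ for every $e\in\anR$ (applying the characterisation $e \triangleright \underline{\anr} \Leftrightarrow \anr < e$ with the reversible event $e$ in the role of $\anr$). Hence whenever $e < e'$ holds the condition $e' \triangleright \underline{e}$ is already guaranteed, so the immediate relation is exactly $<$. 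Since $<$ is an irreflexive partial order it is already transitive, and taking the transitive closure changes nothing; thus $\ll\ =\ <\ =\ <^{+}$.

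Putting the pieces together, given $e\ \#\ e' <^{+} e''$ I would rewrite $<^{+}$ as $\ll$ by the previous step, apply the heredity of $\#$ along $\ll$ to obtain $e\ \#\ e''$, and conclude. The only delicate point I anticipate is making the definitional unfolding of $\ll$ precise and checking that the causal characterisation $e' \triangleright \underline{e} \Leftrightarrow e < e'$ is applied with the reversible event in the correct position; everything else is routine once the equality $\ll = <^{+}$ has been established.
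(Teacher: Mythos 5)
Your proof is correct and takes essentially the same route as the paper's: both reduce the claim to conflict heredity along the sustained causation $\ll$ (built into Definition~\ref{de:rpes}) and then observe that in a causal r\pes\ one has $\ll\ =\ <\ =\ <^{+}$, because the prevention requirement $e' \triangleright \underline{e}$ is automatic whenever $e < e'$ with $e\in\anR$. The paper merely asserts this identification in one line, while you verify it explicitly; the mathematical content is the same.
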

\begin{proof}
 In general we have that, given a r\pes, $(E, \ll, \#)$ is a \pes. But in a causal r\pes\ we have
 that $\ll$ is indeed the transitive closure of $<$.
% \qed
\end{proof}

\begingroup
\def\thetheorem{\ref{prp:rpes_to_cnet}}

\begin{proposition}%\label{prp:rpes_to_cnet}
 Let $\mathsf{P} = (E, \changed{E'}{\Er}, <, \#, \prec, \triangleright)$ be a causal r\pes. Then  
 $\mathcal{E}_{r}(\mathsf{P}) = \langle B, \hat{E}, E'\times\setenum{\re}, F, \mathsf{c}\rangle$ as defined
 in Definition \ref{de:rpestorcn} is a reversible \changed{occurrence}{causal} net with respect to $E'\times\setenum{\re}$.
\end{proposition}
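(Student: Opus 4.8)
The plan is to verify the four conditions of Definition~\ref{def:rcn} for the net $\mathcal{E}_{r}(\mathsf{P})$, taking $\anR\times\setenum{\re}$ as the reversing events and $\overline{E} = E\times\setenum{\fe}$ as the forward ones. The key preliminary observation is that in a causal r\pes\ the sustained causation $\ll$ coincides with $<$: the defining clause ``$e\ll e'$ iff $e<e'$ and ($e\in\anR$ implies $e'\triangleright\underline{e}$)'' collapses to $e<e'$, because for $e\in\anR$ causality gives $e'\triangleright\underline{e}$ exactly when $e<e'$; since $<$ is already transitive we obtain $\ll\,=\,<\,=\,<^{+}$. By Proposition~\ref{pr:pes-associated-to-crpes} this makes $\#$ hereditary along $<$, so that $(E,<,\#)$ is a genuine \pes.

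First I would identify the forward sub-net $C_{\overline{E}}$. Because $\ll\,=\,<$, the condition set $B$ of Definition~\ref{de:rpestorcn} is \emph{literally} the condition set produced by the occurrence-net construction of Proposition~\ref{pr:pes-to-on} applied to the \pes\ $(E,<,\#)$, and the forward fragment of $F$ (the arcs touching events of the form $(e,\fe)$) coincides, under the renaming $(e,\fe)\mapsto e$, with the flow relation of that construction. Hence $C_{\overline{E}}$ is exactly $\mathcal{E}((E,<,\#))$ and is an occurrence net by Proposition~\ref{pr:pes-to-on}; this discharges the last condition of Definition~\ref{def:rcn}.

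The cleanest route to the remaining conditions --- in particular safeness of the full net, which is the only genuinely delicate point --- is to observe that, regarding $C_{\overline{E}}$ as an occurrence net on event set $E$, the net $\mathcal{E}_{r}(\mathsf{P})$ is precisely the net $\mathsf{R}(C_{\overline{E}})$ obtained from $C_{\overline{E}}$ by the reversing construction of Definition~\ref{de:constructing-a-reversible-causal-net} with reversible events $\anR$. Inspecting $F$, the reverse arcs $\setcomp{(b,(e,\re))}{b=(e,A)}$ and $\setcomp{((e,\re),b)}{b=(a,A)\land e\in A}$ are exactly those that Definition~\ref{de:constructing-a-reversible-causal-net} adds, since $(e,b)$ (resp.\ $(b,e)$) is a forward arc of $C_{\overline{E}}$ iff $b=(e,A)$ (resp.\ $b=(a,A)$ with $e\in A$). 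Consequently $\pre{(e,\re)} = \post{(e,\fe)}$ and $\post{(e,\re)} = \pre{(e,\fe)}$, so $(e,\fe)$ is the forward event reversed by $(e,\re)$ and $h(e,\re)=(e,\fe)$ is the required injection.

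With this identification in place, Proposition~\ref{prop:rev_net} applies verbatim and yields that $\mathsf{R}(C_{\overline{E}})=\mathcal{E}_{r}(\mathsf{P})$ is a reversible causal net with respect to $\anR\times\setenum{\re}$, covering safeness, the uniqueness of the reversed event, injectivity of presets/postsets, and the absence of isolated conditions. I expect the only real obstacle to be the first step, namely pinning down $\ll\,=\,<$ and thereby justifying that the $\ll$-based definition of $B$ yields the standard occurrence-net conditions; once that is settled, the statement reduces to the already-established Proposition~\ref{prop:rev_net} and no new marking argument is needed.
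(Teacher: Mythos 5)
Your proof is correct, and it takes a genuinely different (and in one respect more complete) route than the paper's own argument. The paper proves this proposition by direct verification of Definition~\ref{def:rcn}: it observes that the restriction to $E\times\setenum{\fe}$ is an occurrence net by construction, that each $(e,\re)$ has a unique forward counterpart $(e,\fe)$, that distinct events have distinct preset/postset pairs, and that no condition is isolated; notably, it never explicitly argues that the \emph{full} net including the reversing events is safe, even though safeness is part of Definition~\ref{def:rcn} and was precisely the delicate point requiring a dedicated argument in the proof of Proposition~\ref{prop:rev_net}. Your approach instead factors the construction: using $\ll\,=\,<$ in a causal r\pes\ (the same observation underlying the paper's proof of Proposition~\ref{pr:pes-associated-to-crpes}) you identify the condition set and forward arcs of Definition~\ref{de:rpestorcn} with those of the classical construction $\mathcal{E}$ of Proposition~\ref{pr:pes-to-on} applied to the \pes\ $(E,<,\#)$, and the reverse arcs with those added by Definition~\ref{de:constructing-a-reversible-causal-net}, so that $\mathcal{E}_{r}(\mathsf{P})=\mathsf{R}(\mathcal{E}((E,<,\#)))$ exactly; Proposition~\ref{prop:rev_net} then yields the statement. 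What the paper's direct check buys is brevity and independence from Proposition~\ref{prop:rev_net}; what your reduction buys is that safeness, the pairing $h$, injectivity, and non-isolation are all inherited from an already-proved result rather than left implicit, and it makes explicit the structural fact that on causal r\pes es the map $\mathcal{E}_{r}$ is the composition of the classical occurrence-net construction with the reversing construction $\mathsf{R}$.
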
 
\begin{proof}
  By construction $\mathcal{E}_{r}(\mathsf{P})_{E\times\setenum{\fe}}$ is a occurrence net.
  The other requirements can be easily checked. For each $(e,\re)$ there exists a
  unique event $(e,\fe)$, and if two events share the same preset and postset they are
  clearly the same event. Each condition $b\in B$ is clearly related to an event
  in $E\times\setenum{\fe}$ hence in $\hat{E}\setminus (E'\times\setenum{\re})$. 
  %\qed  
\end{proof}

\begingroup
\def\thetheorem{\ref{th:cccrpestorcn}}
\begin{theorem}%\label{th:cccrpestorcn}
 Let $\mathsf{P}$ be a causal 
 r\pes. Then  
 $X'$ is a configuration of $\mathcal{E}_{r}(\mathsf{P})$ iff $X$ 
 is a configuration of $\mathsf{P}$, where $X' = \setcomp{(e,\fe)}{e\in X}$.
\end {theorem}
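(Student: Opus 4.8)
The plan is to reduce both notions of configuration to the configurations of the prime event structure $P = (E,<,\#)$ underlying $\mathsf{P}$, and then chain the correspondences already established for \pes es and occurrence nets. The starting point is to exploit the fact that several relations collapse because $\mathsf{P}$ is causal. First, $<$ is already transitive, so $<^{+}\ =\ <$; by Proposition~\ref{pr:pes-associated-to-crpes} this means $\#$ is hereditary along $<$, so $P$ is a genuine \pes, and hence $\mathsf{hc}(P)=P$ (second bullet of Proposition~\ref{pr:ppes-prop}). Second, the causality condition $e'\triangleright\underline{\anr}$ iff $\anr<e'$ ensures that whenever $e\in\anR$ and $e<e'$ we already have $e'\triangleright\underline{e}$, so the extra requirement in the definition of the sustained causation is vacuous and $\ll\ =\ <$. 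These two observations are what make the construction of Definition~\ref{de:rpestorcn} degenerate to the ordinary \pes-to-net construction.

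On the net side I would use Definition~\ref{de:conf-rcn}: $X'$ is a configuration of $\mathcal{E}_{r}(\mathsf{P})$ iff it is a configuration of the occurrence net $C_{\overline{E}}$ obtained by deleting the reversing transitions $\anR\times\setenum{\re}$, whose forward events are $E\times\setenum{\fe}$. Inspecting Definition~\ref{de:rpestorcn} with $\ll\ =\ <$, the net $C_{\overline{E}}$ is, up to the renaming $e\mapsto(e,\fe)$, exactly the occurrence net $\mathcal{E}(P)$ of Proposition~\ref{pr:pes-to-on}: the conditions $(a,A)$ constrained by $a < e$ for all $e\in A$, the flow arcs into and out of $(e,\fe)$, and the initial marking $\mathsf{c}$ all coincide. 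Hence Proposition~\ref{pr:pes-to-on} gives $X\in\Conf{P}{\pes}$ iff $X'\in\Conf{C_{\overline{E}}}{\cn}=\Conf{\mathcal{E}_{r}(\mathsf{P})}{\rcn}$.

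On the event-structure side, a causal r\pes\ is in particular cause-respecting, and configurations of a cause-respecting r\pes\ are forwards reachable~\cite{PU:jlamp15}. A forwards-reachable configuration is built using only forward moves, and the enabling of such a move (Definition~\ref{de:rpes-conf-enab} with the reversed component empty) is literally the p\pes\ enabling of Definition~\ref{de:ppes-conf-enab}. Therefore $\Conf{\mathsf{P}}{r\pes}=\Conf{P}{p\pes}$, which equals $\Conf{P}{\pes}$ by the third bullet of Proposition~\ref{pr:ppes-prop} together with $\mathsf{hc}(P)=P$. Combining the two chains yields $X'\in\Conf{\mathcal{E}_{r}(\mathsf{P})}{\rcn}$ iff $X\in\Conf{\mathsf{P}}{r\pes}$, as required.

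The two applications of the \pes/occurrence-net correspondence are routine; the only genuine content is the collapse of $\ll$ to $<$ and of the reachable configurations to the forwards-reachable ones, which is exactly where causality of $\mathsf{P}$ is used. I expect the main obstacle to be the careful bookkeeping that the forward part of $\mathcal{E}_{r}(\mathsf{P})$ is syntactically the net $\mathcal{E}(P)$ and that forward enabling in the r\pes\ agrees verbatim with p\pes\ enabling; both become immediate once $\ll\ =\ <$ and $\mathsf{hc}(P)=P$ are established.
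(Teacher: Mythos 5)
Your proof is correct, and it overlaps substantially with the paper's argument without following it exactly. The paper proves the forward direction just as you do: causal implies cause-respecting, so every configuration of $\mathsf{P}$ is forwards reachable, hence a configuration of the underlying p\pes\ $(E,<,\#)$, which transfers to the net via the classical \pes/occurrence-net correspondence. But it dispatches the converse differently: it observes that, up to renaming of events, $\mathcal{C}_{r}(\mathcal{E}_{r}(\mathsf{P}))$ is $\mathsf{P}$ itself, and then (implicitly) invokes Theorem~\ref{th:rnctorpes-conf-correspond} to pull configurations of the net back to configurations of $\mathsf{P}$. You instead run a single chain of equivalences that works in both directions, resting on two identifications which you make explicit and which the paper leaves implicit: (i) causality forces $\ll\ =\ <$ and, via Proposition~\ref{pr:pes-associated-to-crpes}, conflict heredity, so $(E,<,\#)$ is a genuine \pes\ with $\mathsf{hc}(P)=P$; and (ii) the forward sub-net $C_{\overline{E}}$ of $\mathcal{E}_{r}(\mathsf{P})$ is, up to the renaming $e\mapsto(e,\fe)$, exactly $\mathcal{E}(P)$, so Proposition~\ref{pr:pes-to-on} applies verbatim and Definition~\ref{de:conf-rcn} closes the loop. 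What your route buys is self-containedness and precision: it needs neither Theorem~\ref{th:rnctorpes-conf-correspond} nor the (asserted but unproved) roundtrip identity, and it pinpoints exactly where causality of $\mathsf{P}$ enters. What the paper's route buys is brevity given its earlier results, since Theorem~\ref{th:rnctorpes-conf-correspond} already packages the net-to-event-structure configuration correspondence that you re-derive through the isomorphism $C_{\overline{E}}\cong\mathcal{E}(P)$.
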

\begin{proof}
 Let $\mathsf{P} = (E, \changed{E'}{\Er}, <, \#, \prec, \triangleright)$. 
 Consider $X\in \Conf{\mathsf{P}}{r\pes}$. As $\mathsf{P}$ is a cause-respecting and
 causal r\pes\ we have that $X$ is forward reachable, hence $X$ is a configuration
 of the p\pes\ $(E, <, \#)$, which we denote with $P$, and then $X' = \setcomp{(e,\fe)}{e\in X}$ 
 is a configuration also of the occurrence net associated to this event structure as,
 by Proposition \ref{pr:ppes-prop}, we have that $\Conf{P}{p\pes} = \Conf{\mathsf{hc}(P)}{\pes}$. 
 For the vice versa it is enough to observe that, up to renaming of events, 
 $\mathcal{C}_{r}(\mathcal{E}_{r}(\mathsf{P}))$ is indeed $\mathsf{P}$.
 %\qed.
\end{proof}

\begingroup
\def\thetheorem{\ref{pr:C-is-functor}}
\begin{proposition}%\label{pr:C-is-functor} 
 $\mathcal{C}_{r} : \mathbf{rOcc} \to \mathbf{cRPES}$ acting on objects as in Proposition~\ref{pr:rce-to-rpes}
 and on morphisms $(\beta,\eta) : R_0 \to R_1$ as $\eta$ restricted to the events that are not
 a reversing ones, is a functor.
\end{proposition}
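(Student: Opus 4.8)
The plan is to verify the three defining properties of a functor in turn: that $\mathcal{C}_{r}$ sends objects of $\mathbf{rOcc}$ to objects of $\mathbf{cRPES}$, that it sends an $\mathbf{rOcc}$-arrow to a genuine $\mathbf{cRPES}$-arrow, and that it preserves identities and composition. The first point is already settled, since Proposition~\ref{pr:rce-to-rpes} gives that $\mathcal{C}_{r}(R)$ is an r\pes\ and Proposition~\ref{prp:rpes_respect} that it is causal, so $\mathcal{C}_{r}(R)\in\mathbf{cRPES}$. The real work concerns morphisms. Given an \rcn\ morphism $(\beta,\eta):R_0\to R_1$, I set $f=\eta|_{\overline{E_0}}$ and must show $f$ is an r\pes\ morphism $\mathcal{C}_{r}(R_0)\to\mathcal{C}_{r}(R_1)$, i.e.\ that it meets the five clauses of Definition~\ref{de:rpes-morph}. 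First I would check that $f$ is a well-typed partial map $\overline{E_0}\to\overline{E_1}$: since $(\beta,\eta)$ restricted to the forward nets $C_{\overline{E_0}}$ and $C_{\overline{E_1}}$ is by hypothesis an occurrence net morphism, $\eta$ already carries forward events to forward events.

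Clauses $(1)$–$(3)$ of Definition~\ref{de:rpes-morph} are verbatim the clauses of a \pes\ morphism (Definition~\ref{de:pes-morph}) between the underlying prime event structures $\mathcal{P}(C_{\overline{E_0}})$ and $\mathcal{P}(C_{\overline{E_1}})$. As $f$ is nothing but the event component of the forward occurrence net morphism, and $\mathcal{P}$ (the functor recalled after Proposition~\ref{pr:on-to-pes}) sends occurrence net morphisms to \pes\ morphisms, these three clauses hold automatically. Next I would establish that $f$ preserves reversibility, $f(\anR'_0)\subseteq\anR'_1$. For $e=h(\anr)$ with $\anr\in\Er_0$, the preset/postset relations $\pre{\anr}=\post{h(\anr)}$, $\post{\anr}=\pre{h(\anr)}$ together with preservation of presets and postsets force $\eta(h(\anr))$ and $h(\eta(\anr))$ to have the same preset and the same postset, whence they coincide by the uniqueness clause of Definition~\ref{def:rcn}; since $\eta(\anr)\in\Er_1$, this image lies in $\anR'_1$. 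Clause $(5)$ is then immediate: in a causal r\pes\ the reverse causality $\prec$ is the diagonal $e\prec\underline{e}$, so both sides of the required inclusion collapse to the singleton $\{f(e_0)\}$.

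The heart of the argument, and the step I expect to be the main obstacle, is clause $(4)$, prevention reflection. In $\mathcal{C}_{r}(R_i)$ prevention is by construction strict causality read backwards, $x\triangleright\underline{y}\iff y<_i x$, so clause $(4)$ demands exactly that $f(e)<_1 f(e_0)$ imply $e<_0 e_0$ for a reversible $e\in\anR'_0$ and any $e_0\in E'_0$; that is, $f$ must reflect causality \emph{pointwise}. This does \emph{not} follow from clauses $(1)$–$(3)$ alone, because the left-closure condition $(1)$ only yields \emph{some} preimage of $f(e)$ below $e_0$, which a collapsing morphism may realise as a conflicting partner of $e$ rather than $e$ itself. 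The plan is therefore to descend to the net level and exploit the rigidity of occurrence net morphisms: along a flow chain witnessing $f(e)<_1 f(e_0)$, one lifts each immediate step downwards using the flow-reflection property stated after Definition~\ref{de:occ-net-morph} (if $b_0\,\beta\,b_1$ and $b_1$ is produced by an event, that producer has a \emph{unique} $\eta$-preimage producing $b_0$), reconstructing a causal chain in $R_0$ that ends below $e_0$. The delicate part will be to argue that the uniqueness in this reflection pins the reconstructed chain down to $e$ itself, thereby excluding the conflicting-partner case; this is precisely where the occurrence net structure, and not merely the \pes\ abstraction, has to be used, and where I would expect the proof to require the most care.

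Finally, functoriality is routine. Restricting the identity \rcn\ morphism to forward events gives the identity \pes\ morphism, and because $\eta$ maps forward events to forward events, the restriction of a composite $\eta_1\circ\eta_0$ to $\overline{E_0}$ agrees with the composite of the restrictions; hence $\mathcal{C}_{r}$ preserves identities and composition, which completes the verification that it is a functor.
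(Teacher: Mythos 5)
Your overall architecture matches the paper's proof: objects are dispatched by Propositions~\ref{pr:rce-to-rpes} and~\ref{prp:rpes_respect}, clauses (1)--(3) of Definition~\ref{de:rpes-morph} come for free because $\eta$ restricted to forward events is the event component of an occurrence net morphism and hence a \pes\ morphism, clause (5) collapses to the diagonal since the r\pes es involved are causal, and the identity/composition laws are the same routine check. The genuine problem is clause (4), which you yourself call ``the heart of the argument'': you do not prove it. You correctly observe that it amounts to pointwise reflection of causality ($f(e)<_1 f(e_0)$ implies $e<_0 e_0$ for reversible $e$), correctly argue that clauses (1)--(3) cannot supply this (left-closure only yields \emph{some} preimage, possibly a conflicting partner of $e$), and then merely announce a plan --- lift the flow chain using the unique-producer reflection property --- while explicitly deferring ``the delicate part'', namely pinning the lifted chain to $e$ itself. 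A proposal whose self-identified central step is left as an expectation is not a proof of the proposition.

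Worse, the deferred step is not merely delicate: it fails under the definitions as stated, so the plan cannot be completed. Take $R_0$ with initial condition $b_1$, events $a,a'$ both consuming $b_1$ (so $a\ \#\ a'$), $a$ producing $b_2$, $c$ consuming $b_2$ and producing $b_3$, $a'$ producing $b_4$, with $a'$ reversible; take $R_1$ to be the two-event chain $\alpha<_1\gamma$ with $\alpha$ reversible. Set $\eta(a)=\eta(a')=\alpha$, $\eta(c)=\gamma$, with $\beta$ relating $b_1$ to the initial condition, both $b_2$ and $b_4$ to the post-condition of $\alpha$, and $b_3$ to the post-condition of $\gamma$; one checks that this satisfies Definition~\ref{de:occ-net-morph} and Definition~\ref{de:rcn-morph}. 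Yet $f(c)=\gamma\ \triangleright_1\ \underline{\alpha}=\underline{f(a')}$ while $\future{a'}=\emptyset$ in $R_0$, so $c$ does not prevent $\underline{a'}$ in $\mathcal{C}_{r}(R_0)$: clause (4) is violated. Your lifting argument reconstructs the chain through $a$ --- the \emph{unique} producer of the condition related to $\post{\alpha}$ along the chain --- which is precisely the conflicting partner you hoped the uniqueness would exclude. Note that the paper's own proof hand-waves at exactly this point (it asserts that $\eta(e_0')\in\hist{\eta(e_0)}$ implies $e_0'\in\hist{e_0}$ ``as $\eta$ is defined on both'', which is the pointwise reflection that \pes\ morphisms do not enjoy), so your suspicion is well placed; but closing the gap requires restricting the morphisms (e.g.\ to injective ones on events) or reworking the prevention clause, not more care with the lifting you describe.
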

\begin{proof}
 For the objects part we have Proposition~\ref{pr:rce-to-rpes}.
 For the part on morphisms is enough to observe that the requirements to fulfill are the last two
 of Definition~\ref{de:rpes-morph}. For the first one is enough to observe that
 the prevention relation is induced by the causality relation, and 
 $\eta(e_0) \triangleright_1 \eta(e_0')$ then $\eta{e_0'}\in\hist{f(e_0)}$ which means
 that $e_0'\in\hist{e_0}$ as $\eta$ is defined on both, and then $e_0 \triangleright_0 e_0'$.
 For the last point we have that a reversing event $\underline{e}$ is preceded in the
 reversible causal net by $e$ itself alone, hence we have that $e \prec_0 \underline{e}$.
 Now reversing events are preserved by reversible causal nets, and if $\eta{e}$ is defined
 we have also that the reversing event is defined which implies that 
 $\eta{e} \prec_1 \underline{\eta{e}}$. The thesis follows. 
% \qed
\end{proof}

\begingroup
\def\thetheorem{\ref{pr:E-is-functor}}
\begin{proposition}%\label{pr:P-is-functor} 
 $\mathcal{E}_r : \mathbf{cRPES} \to \mathbf{rOcc}$ acting on objects as in Definition~\ref{de:rpestorcn}
 and on morphisms as stipulated in for occurrence net, requiring that reversing events are
 preserved, is a functor.
\end{proposition}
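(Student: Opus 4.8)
The object part is exactly Proposition~\ref{prp:rpes_to_cnet}, so it remains to define $\mathcal{E}_{r}$ on arrows and to verify functoriality. Given a causal r\pes\ morphism $f : P_0 \to P_1$, I would set $\eta(e,\fe) = (f(e),\fe)$ and $\eta(e,\re) = (f(e),\re)$ whenever $f(e)$ is defined (and leave $\eta$ undefined otherwise), and take $\beta$ on conditions to be the relation induced by $f$ exactly as in Winskel's construction recalled after Definition~\ref{de:pes-morph}, namely $(\bot,A)\,\beta\,(\bot,A')$ when $A' = f(A)\neq\emptyset$ and $|A|=1$, and $(e,A)\,\beta\,(f(e),A')$ when $f(e)$ is defined and $A' = f(A)\neq\emptyset$.

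The first step is to observe that the occurrence net underlying $\mathcal{E}_{r}(\mathsf{P})$ -- obtained by deleting the reversing events $\anR\times\setenum{\re}$ -- coincides with the net that Proposition~\ref{pr:pes-to-on} associates to the \pes\ whose causality is the sustained causation $\ll$. Indeed, by Proposition~\ref{pr:pes-associated-to-crpes} we have $\ll\, =\, <^{+}$ with $(E,<^{+},\#)$ a genuine \pes, and the conditions and forward flow in Definition~\ref{de:rpestorcn} are literally those of Proposition~\ref{pr:pes-to-on} with $a\ll e$ in place of $a< e$. Next I would check that $f$ is a \pes\ morphism for $(E,\ll,\#)$: conditions (2) and (3) of Definition~\ref{de:rpes-morph} are already the corresponding \pes\ conditions, while the downward-closure condition (1), stated for $<$, lifts to $\ll\, =\, <^{+}$ by a routine induction along a covering chain $e_1 <_1 a_1 <_1 \cdots <_1 f(e_0)$, repeatedly pulling each link back through $f$. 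Consequently the functoriality of $\mathcal{E}$ (from \cite{Win:ES}) guarantees that $(\beta,\eta)$ restricted to these underlying occurrence nets is an occurrence net morphism, which is the first requirement of Definition~\ref{de:rcn-morph}.

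For the second requirement I would verify that $\eta$ respects the reversing structure. Since r\pes\ morphisms preserve reversibility (condition~(5) presupposes $f(e_0)\in\anR_1$ for $e_0\in\anR_0$), we obtain $\eta(\anR_0\times\setenum{\re}) \subseteq \anR_1\times\setenum{\re}$. Moreover, with $h(e,\re) = (e,\fe)$, the map $\eta$ is defined on $(e,\re)$ exactly when it is defined on $(e,\fe) = h(e,\re)$, so the associated forward event is preserved; hence $(\beta,\eta)$ is an \rcn\ morphism. Finally, functoriality follows from that of $\mathcal{E}$ on the forward part: $\mathcal{E}_{r}(\mathrm{id})$ is the identity because $f = \mathrm{id}$ yields the identity event map and the identity relation on conditions, and $\mathcal{E}_{r}(g\circ f) = \mathcal{E}_{r}(g)\circ\mathcal{E}_{r}(f)$ because the reversing-event component of $\eta$ is wholly determined by $f$ on the underlying reversible events and therefore composes just as the forward component does. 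The only genuinely new content beyond the classical construction is the verification of the reversing-event conditions of Definition~\ref{de:rcn-morph}, which I expect to be the main -- though light -- obstacle.
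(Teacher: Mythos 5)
Your proposal is correct and takes essentially the same route as the paper: the paper's (one-line) proof likewise reduces everything to the classical functoriality of $\mathcal{E}$ on the forward part and observes that the only new condition---preservation of reversing events---holds because causal r\pes\ morphisms preserve reversible events (condition (5) of Definition~\ref{de:rpes-morph} presupposes exactly this), which is also the crux of your argument. Your write-up merely supplies the details the paper leaves implicit; the only superfluous step is your induction lifting condition (1) to $\ll$, since $<$ is transitive by definition and $\ll$ coincides with $<$ in a causal r\pes.
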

\begin{proof}
 The only condition to check is that reversing events are preserved, but this trivially true as
 causal r\pes\ morphisms do this.
% \qed
\end{proof}

\begingroup
\def\thetheorem{\ref{th:coreflection}}
\begin{theorem}%\label{th:coreflection} 
 $\mathcal{E}_{r}$ and $\mathcal{C}_{r}$ form a coreflection, where $\mathcal{C}_{r}$ is the right
adjoint and $\mathcal{E}_{r}$ the left adjoint.
\end{theorem}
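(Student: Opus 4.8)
The plan is to establish the adjunction $\mathcal{E}_{r} \dashv \mathcal{C}_{r}$ through the universal property of a unit, and then to observe that this unit is a natural isomorphism, which is exactly what upgrades the adjunction to a coreflection with $\mathcal{E}_{r}$ fully faithful (recall that an adjunction is a coreflection precisely when its left adjoint is fully faithful, equivalently when the unit $\eta : \mathrm{Id}_{\mathbf{cRPES}} \Rightarrow \mathcal{C}_{r}\circ\mathcal{E}_{r}$ is a natural iso). Concretely, for every causal r\pes\ $\mathsf{P} = (E, \anR, <, \#, \prec, \triangleright)$ I would take the unit component $\eta_{\mathsf{P}} : \mathsf{P} \to \mathcal{C}_{r}(\mathcal{E}_{r}(\mathsf{P}))$ to be the event map $e \mapsto (e,\fe)$, and prove that (i) $\eta_{\mathsf{P}}$ is an isomorphism in $\mathbf{cRPES}$, and (ii) $\eta_{\mathsf{P}}$ is universal from $\mathsf{P}$ to $\mathcal{C}_{r}$, i.e.\ for every \rcn\ $R$ and every r\pes\ morphism $g : \mathsf{P} \to \mathcal{C}_{r}(R)$ there is a unique \rcn\ morphism $f : \mathcal{E}_{r}(\mathsf{P}) \to R$ with $\mathcal{C}_{r}(f)\circ\eta_{\mathsf{P}} = g$.

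For step (i) I would use the explicit description of $\mathcal{E}_{r}$ and $\mathcal{C}_{r}$. The forward events of $\mathcal{C}_{r}(\mathcal{E}_{r}(\mathsf{P}))$ are exactly $E\times\setenum{\fe}$, so $\eta_{\mathsf{P}}$ is a bijection on events with inverse $(e,\fe)\mapsto e$. By Definition~\ref{de:rpestorcn} the flow-induced causality recovers the sustained causation $\ll$, which for a causal r\pes\ is the transitive closure $<^{+}$; since by Proposition~\ref{pr:pes-associated-to-crpes} conflict is inherited along $<^{+}$ in a causal r\pes, the conflict of $\mathcal{C}_{r}(\mathcal{E}_{r}(\mathsf{P}))$ coincides with $\#$. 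By Proposition~\ref{prp:rpes_respect} the reverse causality and prevention of $\mathcal{C}_{r}(\mathcal{E}_{r}(\mathsf{P}))$ are again the causal ones ($e\prec\un{e}$, and $\anr < e$ giving $e\triangleright\un{\anr}$), which match those of $\mathsf{P}$ precisely because $\mathsf{P}$ is causal. Hence both $\eta_{\mathsf{P}}$ and its inverse are r\pes\ morphisms, making $\eta_{\mathsf{P}}$ an iso; this is the categorical content of the remark following Theorem~\ref{th:cccrpestorcn} that $\mathcal{C}_{r}(\mathcal{E}_{r}(\mathsf{P}))$ is $\mathsf{P}$ up to renaming. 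Naturality of $\eta$ is immediate since both functors act on morphisms essentially as the identity on forward events.

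For step (ii), given $g : \mathsf{P}\to\mathcal{C}_{r}(R)$ I would construct $f = (\beta,\eta_{f})$ as follows. On events, set $\eta_{f}(e,\fe) = g(e)$ whenever $g(e)$ is defined, and $\eta_{f}(e,\re)$ equal to the reversing event of $R$ associated (via $h$) with $g(e)$ whenever $e\in\anR$ and $g(e)$ is a reversible event of $R$; clause (5) of Definition~\ref{de:rpes-morph}, together with the causal shape of $\prec$, is exactly what guarantees this is well defined and that reversing events are sent to reversing events, as required by Definition~\ref{de:rcn-morph}. On conditions I would exploit that the forward part of $\mathcal{E}_{r}(\mathsf{P})$ is the occurrence net $\mathcal{E}(P)$ for the underlying \pes\ $P = (E, <^{+}, \#)$, that $g$ restricts to a \pes\ morphism into $\mathcal{P}$ of the occurrence net underlying $R$, and that $\mathbf{Occ}$ and $\mathbf{PES}$ already form a coreflection (Section~\ref{sec:category}); I would then take $\beta$ to be the condition relation that this classical coreflection assigns to that \pes\ morphism, and check it also respects the reverse-flow arcs. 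The triangle identity $\mathcal{C}_{r}(f)\circ\eta_{\mathsf{P}} = g$ holds by construction on events, and uniqueness follows as in the classical case, since $\beta$ is forced once $\eta_{f}$ is fixed.

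The main obstacle I anticipate is not the forward skeleton---inherited wholesale from the $\mathbf{Occ}$/$\mathbf{PES}$ coreflection---but the bookkeeping showing that $\beta$ and $\eta_{f}$ are \emph{simultaneously} compatible with the reverse flow: a reversing transition $(e,\re)$ consumes the postset and produces the preset of $(e,\fe)$, so I must verify that $\beta$ relates the corresponding conditions of $R$ coherently and that $\eta_{f}(e,\re)$ indeed lands on the unique reversing event of $R$ whose pre- and post-sets are the $\beta$-images of those of $(e,\re)$. Once this compatibility is checked the adjunction follows, and since $\eta$ was shown in step (i) to be a natural isomorphism, $\mathcal{E}_{r}$ is fully faithful and the pair $(\mathcal{E}_{r},\mathcal{C}_{r})$ is a coreflection with $\mathcal{C}_{r}$ the right adjoint, as claimed.
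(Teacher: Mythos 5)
Your proposal is correct and follows essentially the same route as the paper's own proof: you identify $\mathcal{C}_{r}(\mathcal{E}_{r}(\mathsf{P}))$ with $\mathsf{P}$ (up to the renaming $e\mapsto(e,\fe)$), take this as the unit, and derive the universal property by inheriting the freeness of the classical $\mathbf{Occ}$/$\mathbf{PES}$ coreflection of \cite{Win:ES}, with the unique mediating \rcn\ morphism forced on reversing events by the r\pes\ morphism. Your write-up is in fact more explicit than the paper's (which leaves the construction of $(\beta,\eta_f)$ and the reverse-flow compatibility check implicit), but the underlying argument is the same.
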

\begin{proof}
 We observe that $\mathcal{C}_{r}(\mathcal{E}_{r}(P)) = P$ and the identity mapping 
 $1_{P} : P \to \mathcal{C}_{r}(\mathcal{E}_{r}(P))$ is free over $P$ with respect to $\mathcal{C}_{r}$,
 \emph{i.e.} given any other reversible causal net $R$ and any morphism $f : P \to \mathcal{C}_{r}(R)$,
 then there exists a unique reversible causal net morphism from $\mathcal{C}_{r}(R)$ to $R$.
 But this is a consequence of the freeness of the coreflection between $\mathcal{E}$ and $\mathcal{C}$,
 as the unique mapping act on the reversing events as prescribed by the causal reversible prime 
 event structure morphism. 
% \qed
\end{proof}

\end{document}